\newtheorem{theorem}{Theorem}[section]
\newtheorem{assumption}[theorem]{Assumption}
\newtheorem{lemma}[theorem]{Lemma}
\newtheorem{corollary}[theorem]{Corollary}
\newtheorem{proposition}[theorem]{Proposition}
\newtheorem{definition}[theorem]{Definition}
\newtheorem{remark}[theorem]{Remark}
\numberwithin{equation}{section}
\newcommand{\cA}{\mathcal{A}}
\newcommand{\cB}{\mathcal{B}}
\newcommand{\cC}{\mathcal{C}}
\newcommand{\cF}{\mathcal{F}}
\newcommand{\cG}{\mathcal{G}}
\newcommand{\cH}{\mathcal{H}}
\newcommand{\cI}{\mathcal{I}}
\newcommand{\cL}{\mathcal{L}}
\newcommand{\cP}{\mathcal{P}}
\newcommand{\cQ}{\mathcal{Q}}
\newcommand{\cR}{\mathcal{R}}
\newcommand{\cS}{\mathcal{S}}
\def \E{\mathbb{E}}
\def \F{\mathbb{F}}
\def \K{\mathbb{K}}
\def \N{\mathbb{N}}
\def \P{\mathbb{P}}
\def \Q{\mathbb{Q}}
\def \R{\mathbb{R}}
\def \G{\mathbb{G}}
\DeclareMathOperator{\gr}{graph}
\DeclareMathOperator{\dom}{dom}
\DeclareMathOperator{\supp}{supp}
\DeclareMathOperator{\inter}{int}
\DeclareMathOperator{\conv}{conv}
\DeclareMathOperator{\lin}{span}
\DeclareMathOperator{\proj}{proj}
\DeclareMathOperator{\NAP}{NA^s(\cP)}
\DeclareMathOperator{\NAPo}{NA_{\Phi}^s(\cP)}
\DeclareMathOperator{\NAs}{NA_2(\cP)}
\DeclareMathOperator{\NA}{NA(\hat{\cP})}
\DeclareMathOperator{\NAt}{NA(\hat{\cP}_t(\omega))}
\newcommand{\unit}{e_d}
\newcommand{\adm}{\cH^K}
\newcommand{\admr}{\hat{\cH}^r}
\newcommand{\admc}{\hat{\cH}}
\newcommand{\sh}{\pi}
\newcommand{\shr}{\hat{\pi}^r}
\newcommand{\shc}{\hat{\pi}}
\renewcommand*{\@fnsymbol}[1]{\ensuremath{\ifcase#1\or *\or **\or 
\mathsection\or \mathparagraph\or \|\or **\or \dagger\dagger
\or \ddagger\ddagger \else\@ctrerr\fi}}
\begin{document}

\title{On the quasi-sure superhedging duality with frictions}
\author[1]{Erhan Bayraktar\thanks{Erhan Bayraktar is supported in part by the National Science Foundation under the grant DMS-1613170, and in part by the Susan M. Smith Chair.}}
\author[2]{Matteo Burzoni\thanks{Matteo Burzoni gratefully acknowledges support by the ETH foundation and the Hooke Research Fellowship from the University of Oxford.}}
\affil[1]{University of Michigan}
\affil[2]{University of Oxford}
\date{\today}

\maketitle
\abstract{We prove the superhedging duality for a discrete-time financial market with proportional transaction costs under model uncertainty.
Frictions are modeled through solvency cones as in the original model of \cite{Kab99} adapted to the quasi-sure setup of \cite{BN13}.
Our approach allows to remove the restrictive assumption of No Arbitrage of the Second Kind considered in \cite{BDT17}  and to show the duality under the more natural condition of No Strict Arbitrage.
In addition, we extend the results to models with portfolio constraints.}\\

\noindent\textbf{Keywords}: Model uncertainty, superhedging, proportional transaction costs, portfolio constraints, Robust Finance.

\noindent \textbf{MSC (2010):} 90C15, 90C39, 91G99, 28A05, 46A20.

\noindent \textbf{JEL Classification:} C61, G13.
\section{Introduction}
It is more often the rule, rather than the exception, that socio-economics phenomena are influenced by a strong component of randomness.
Starting from the pioneering work of Knight (see e.g. \cite{Knight:21}) a distinction between \emph{risk} and \emph{uncertainty} has been widely accepted with respect to the nature of such a randomness.
We often call a situation \emph{risky} if a probabilistic description is available (e.g. the toss of a fair coin).
In contrast, we call a situation \emph{uncertain} if it cannot be fully described in probabilistic terms.
Simple reasons could be the absence of an objective model (e.g. the result of a horse race; see \cite{BM2017} and the references therein) or the lack of information (e.g. the draw from an urn whose composition is unknown).
The classical literature in mathematical finance has been mainly focusing on risk and the attention to problems of Knightian uncertainty has been drawn only relative recently starting from \cite{Av95}. 
In particular, fundamental topics such as the theory of arbitrage and the related superhedging duality have been systematically studied in frictionless discrete-time markets in \cite{BZ17,BN13} in a quasi-sure framework and in \cite{AB13,BFHMO,CK16} in a pointwise framework.

Under risk, the classical model of a discrete-time market with proportional transaction costs has been introduced in \cite{Kab99}.
The model is described by a collection of cones $\K:=\{K_t\}_{t=0,\ldots,T}$ which determines: i) admissible strategies; ii) solvency requirements; iii) pricing mechanisms.
More precisely, the latter are called \emph{consistent price systems} and they are essentially martingale processes taking values in the dual cones $K^*_t$.
Instances of such models have been considered, in the uncertainty case, in \cite{BCKT17,BayraktarZhangMOR,BN16,Bu16,DS14a}, nevertheless, the problem of establishing a quasi-sure superhedging duality has remained open.
Recently, a first duality result was obtained in \cite{BDT17} using a randomization approach (see also \cite{ADOT16,BZ18,BTY18} for other applications).
The idea is to construct a fictitious frictionless price process $\hat{S}$ for which: i) the superhedging price of an option in the market with frictions coincides with the corresponding superhedging price in the frictionless one; ii) the class of martingale measures for $\hat{S}$  produces the same prices for the option as the class of consistent price systems for the original market.
When these two properties are achieved, the duality follows from the frictionless results of \cite{BN13}.
In order to perform this program, a crucial role is played by the assumption of No Arbitrage of the Second Kind ($\NAs$), which ensures that the construction of the fictitious frictonless market is automatically arbitrage-free.
$\NAs$ prescribes that if a position is quasi-surely solvent at time $t+1$, it must be quasi-sure solvent at time $t$.
Such a condition is quite restrictive as it fails in very basic examples of one-period markets, even though, no sure profit can be made by market participants (see \cite[Remark 11]{BayraktarZhangMOR}).

In this paper we do not require the strong assumption $\NAs$ and we show the superhedging duality under the more natural condition of No Strict Arbitrage ($\NAP$).
The latter ensures that it is not possible to make profits without taking any risk, thus, it generalizes the classical no-arbitrage condition in frictionless markets. 
From a technical perspective, we also do not assume other unnecessary hypothesis taken in \cite{BDT17}: i) we do not require that transaction costs are uniformly bounded, stated differently, the bid-ask spreads relative to a chosen num\'eraire are not necessarily subsets of $[1/c,c]$ for some $c>0$; ii) we do not require the technical assumption $K^*_t\cap \partial\R^d_+=\{0\}$ for any $t=0,\ldots T$.
From a modeling perspective, our approach allows to extend the previous results to models where a process of portfolio constraints $C:=(C_t)_{t=0,\ldots,T}$ defines the admissible strategies in the market.
To the best of our knowledge, these results are new even in the classical case where a reference probability measure $\P$ is fixed.
As in \cite{BDT17} we assume the so-called \emph{efficient friction hypothesis} and adopt a randomization approach.

We first construct a backward procedure similar to the one of \cite{BayraktarZhangMOR} and based on a dynamic programming approach (see also \cite{BS18} for an extensive study of the related \emph{martingale selection problem}).
This procedure yields a new collection of cones $\tilde{\K}^*=(\tilde{K}^*_t)_{t=0,\ldots T}$ which is, in general, different from the original $\K$ and it is shown to be non-empty under the condition  $\NAP$.
Notably, it is not possible to apply directly the results of \cite{BDT17} (or a straightforward adaptation of them) to $\tilde{\K}^*$.
Indeed, in general $\tilde{K}^*_t$ will only have an analytic graph as opposed to the Borel-measurability of $K^*_t$.
The Borel-measurability assumption is crucial in order to apply the results of \cite{BN13} in frictionless markets.
To overcome this difficulty we propose a new randomization method. We do not design the frictionless process $\hat{S}$ to take values in $\tilde{\K}^*$ but we instead consider a suitable class of probabilities $\hat{\cP}$ in order to have $\hat{S}_t\in\tilde{K}^*_t$ $\hat{\cP}$-q.s.at each time.
Similarly to \cite{BDT17}, we finally prove that the desired duality can be deduced from duality results in frictionless market.
In particular, we use here those of \cite{BZ17}, which takes into account possible portfolio constraints.

We conclude the introduction by specifying the frequently used notation and the setup. The superhedging duality is stated in Section \ref{sec:main}.
The construction of the fictitious frictionless market is the content of Section \ref{sec:Random}. Finally, we prove the main result in Section \ref{sec:mainproof} where we also show how it extends to semi-static trading. 
\paragraph{Notation.}
For a topological space $X$, $\cB_X$ is the Borel sigma-algebra.
$\mathfrak{P}(X)$ is the class of all probability measures on $(X,\cB_X)$ and $\delta_x$ denotes the Dirac measure on $x\in X$.
For a probability measure $\P$ and a set $\cR\subset \mathfrak{P}(X)$ we say that $\P\ll\cR$ if there exists $\tilde{\P}\in\cR$ such that $\P\ll\tilde{\P}$.
A property is said to hold $\cR$-q.s. if it holds for any $\P\in\cR$.
A map $U$, defined on $X$ and taking values in the power set of $\R^d$, is called a\emph{ multifunction }(or \emph{random set}) and it is denoted by $U:X\rightrightarrows \R^d$. 
For a sigma-algebra $\cG$ on $X$, a multifunction $U$ is called $\cG$-measurable if, for any open set $O\subset\R^d$, we have 
\[
\{x\in X: U(x)\cap O\neq \emptyset\}\in\cG.
\]
A map $f:X\rightarrow\R^d$ with $f(x)\in U(x)$, for any $x\in\dom U:=\{x\in X: U(x)\neq \emptyset\}$ is called a selector of $U$.
We denote by $\cL^0(\cG;U)$ the class of $\cG$-measurable selectors of $U$.
For $U:X_1\times X_2\rightrightarrows \R^d$ and $x\in X_1$ fixed, the notation $U(x;\cdot)$ refers to the random set $U$ viewed as a (multi)function of $X_2$.
Given a class of probabilities $\cR\subset \mathfrak{P}(X_2)$, the (conditional) quasi-sure support of $U(x;\cdot)$, denoted by $\supp_{\cR}U(x;\cdot)$, is the smallest closed set $F\subset\R^d$ such that $U(x;\cdot)\subset F$, $\cR$-q.s. For a collection of random sets $U:=(U_t)_{t=0}^T$ adapted to a given filtration $\G$, we denote by $\cL^0(\G_-;U)$ the class of processes $H$ such that $H_{t+1}\in \cL^0(\cG_t;U_t)$ for every $t=0,\ldots T-1$.
Finally, for two $\R^d$-valued processes $H$ and $S$, we define $(H\circ S)_t:=\sum_{u=0}^{t-1} H_{u+1}\cdot(S_{u+1}-S_u)$.

\paragraph{Setup.}
Let $T\in\N$ be a fixed time horizon and $\cI:=\{0,\ldots T\}$. For later use we also define $\cI_{-1}:=\{-1,\ldots,T-1\}$. 
We consider a filtered space $(\Omega,\cF,\cF^u,\F,\F^u)$ endowed with a (possibly non-dominated) class of priors $\cP\subset\mathfrak{P}(\Omega)$ described as follows.
\begin{itemize}
	\item[-] $\Omega_1$ is a given Polish space. We choose $\Omega:=\Omega_{T}$, where $\Omega_t$ denotes the $(t+1)$-fold product of $\Omega_1$. Any $\omega\in\Omega_t$ is denoted by $\omega=(\omega_0,\ldots,\omega_t)$ with $\omega_s\in\Omega_1$ for any $0\le s\le t$.
	
	\item[-] We set $\cF:=\cB_\Omega$ and $\cF^u$ its universal completion. Similarly, the filtrations $\F=\{\cF_t\}_{t\in\cI}$ and $\F^u=\{\cF^u_t\}_{t\in\cI}$ are given by $\cF_t:=\cB_{\Omega_t}$ and $\cF^u_t$ its universal completion.
	\item[-] for each $t\in\cI$, $\cP_t$ is a random set of probabilities on $\Omega_1$ with analytic graph and $\cP_0$ is non-random. We set,
	\[\cP=\{P_0\otimes\cdots\otimes P_{T-1}: P_t\in\cL^0(\cF^u_t;\cP_t),  \forall t\in\{0,\ldots,T-1\}\},\]
	The class $\cL^0(\cF^u_t;\cP_t)$ is non-empty from Jankov-Von Neumann Theorem (see {\cite[Proposition 7.49]{BS78}}) so that $\cP$ is well defined through Fubini's Theorem.
\end{itemize}

\section{Main result}\label{sec:main}
We consider the general model of financial markets with proportional transaction costs introduced in \cite{Kab99}.
The model is fully described by a collection of random convex closed cones $\K:=\{K_t\}_{t\in\cI}\subset\R^d$ with $d\ge 2$, called \emph{solvency cones}.
These represent the sets of positions, in terms of physical units of $d$ underlying assets, which can be liquidated in the zero portfolio at zero cost. 
We assume that any position with non-negative coordinates is solvent, i.e., $\R^d_+\subset K_t$.
The set $-K_t$ represents the class of portfolios which are available at zero cost.
We assume that $K_t$ is $\cF_t$-measurable for any $t\in\cI$ with $K_0$ non-random.
Following standard notation, for a cone $K\subset\R^d$ we denote by $K^*:=\{x\in\R^d: x\cdot k\ge 0\, \forall k\in K\}$ its dual cone and by $K^\circ:=-K^*$ its polar cone. 

We generalize the model of \cite{Kab99} by introducing some constraints on the admissible positions in the market. 
These are represented by a collection  of random convex closed cones $C:=\{C_t\}_{t\in\cI}\subset\R^d$ such that every $C_t$ is $\cF_t$-measurable.
A zero-cost strategy $\eta:=(\eta_t)_{t\in\cI}$ is said to be \emph{admissible} if it satisfies  $\eta_t\in A_t$ for any $t\in\cI$, where 
\[
A_t:=\left\{\xi\in \cL^0(\cF^u_t;C_t): \xi=\sum_{s=0}^t-k_s\quad\text{with}\quad k_s\in K_s,\ \cP\text{-q.s.}\ \ \forall\ 0\le s\le t\right\}.
\]
In words, $\eta$ satisfies the constraints impose by $(C_t)_{t\in\cI}$ and it is obtained as the sum of portfolios which are available at zero cost. 
We denote by $\adm$ the class of admissible strategies and omit the dependence on $C$ as it will be fixed throughout the paper.

\begin{assumption}\label{ass:EF}
	We assume that $\inter(K^*_t)\neq\emptyset$ for any $t\in\cI$. Moreover, we assume that $C_t\subset C_{t+1}$ for any $t=0,\ldots, T-1$.
\end{assumption}
The first assumption is known as \emph{efficient friction hypothesis}.
The second one means that it is allowed to not trade between two periods.
The latter is obviously satisfied in the unconstrained case, that is $C_t\equiv\R^d$ for any $t\in\cI$.
\begin{definition}[$\NAP$] The \emph{No Strict Arbitrage condition} holds if, for all $t\in\cI$, $ A_t\cap\cL^0(\cF^u_t;K_t)=\{0\}$.
\end{definition}
This condition is the straightforward generalization to the quasi-sure setting of the classical one (see, e.g, \cite{KRS03} and the recent paper \cite{KM19} for a slightly weaker variant of this concept).
\begin{definition}[SCPS]\label{def:CPS}
	A couple $(Z,\Q)$ with $\Q\ll\cP$ is called a \emph{(strictly) consistent price system} if $Z_t\in \inter(K^*_t)$, $\Q$-a.s. $\forall t\in\cI$ and $H\circ Z$ is a $\Q$-local-supermartingale $\forall H\in\cL^0(\F^u_-;C)$.
\end{definition}
The interpretation is that $(Z,\Q)$ defines a frictionless arbitrage-free price process which is compatible with the model of transaction costs defined by $\{(K_t,C_t)\}_{t\in\cI}$. We shortly denote by $\cS$ the set of SCPS and by $\cS^0$ the class of normalized SCPS, namely, those satisfying $Z^d_t=1$ for any $t\in\cI$.

We are now ready to state the main result of the paper.
Let $G:\Omega\rightarrow\R^d$ be a Borel measurable random vector which represents the terminal payoff of an option in terms of physical units of the underlying assets.
The superhedging price of $G$ is given by
\begin{equation}\label{eq:defSH}
\sh_\K(G):=\inf\left\{y\in\R:\exists \eta\in\adm\text{ such that }\ y\unit+\eta_{T}-G\in K_T,\ \cP\text{-q.s.}\right\},
\end{equation}
where $\unit$ is the $d^{th}$ vector of the canonical basis of $\R^d$. 
\begin{theorem}\label{thm:main}
	Assume $\NAP$. For any Borel-measurable random vector $G$,
	\begin{equation}\label{eq:mainSH}
	\sh_\K(G)= \sup_{(Z,\Q)\in\cS}\E_\Q[G\cdot Z_T].
	\end{equation}
	Moreover, the superhedging price is attained when $\sh_\K(G)<\infty$.
\end{theorem}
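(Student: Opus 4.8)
The overall strategy is the randomization/fictitious-market approach announced in the introduction: reduce the superhedging duality with frictions to a known frictionless duality (that of \cite{BZ17}, which handles portfolio constraints) applied to a cleverly constructed price process $\hat S$ on an enlarged space, under an enlarged family of priors $\hat\cP$. I would organize the argument in three blocks.

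\medskip

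\textbf{Step 1: The backward construction of $\tilde\K^*$.} Following the dynamic programming idea of \cite{BayraktarZhangMOR}, I would construct, by backward recursion on $t=T,T-1,\ldots,0$, a collection of random cones $\tilde K^*_t$ designed so that: (i) $\tilde K^*_t\subset K^*_t$ with $\inter(\tilde K^*_t)\neq\emptyset$, (ii) $\tilde K^*_t$ has analytic graph, and (iii) $Z_t$ ranging over (normalized) selectors of $\inter\tilde K^*_t$ exactly captures the SCPS of the original market — i.e.\ the supermartingale/constraint conditions propagate correctly. Concretely $\tilde K^*_T := K^*_T$, and given $\tilde K^*_{t+1}$ one defines $\tilde K^*_t$ as (the closure of) the set of $z\in K^*_t$ that can be completed to a one-step SCPS increment into $\tilde K^*_{t+1}$ subject to the constraint cone $C_t$; measurable selection (Jankov--von Neumann, as already invoked in the Setup) gives the analytic graph. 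The key claim here is that $\NAP$ guarantees $\inter\tilde K^*_t\neq\emptyset$ $\cP$-q.s.\ for every $t$; this is the no-arbitrage input and I expect it to rely on a separating-hyperplane argument at each step, combined with the efficient-friction hypothesis $\inter(K^*_t)\neq\emptyset$ and the nesting $C_t\subset C_{t+1}$.

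\medskip

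\textbf{Step 2: The new randomization.} Rather than forcing $\hat S_t\in\tilde K^*_t$ pathwise (which would require Borel, not merely analytic, measurability and hence fails), I would enlarge the sample space to $\hat\Omega=\Omega\times(\R^d)^{T+1}$ (or similar), let $\hat S_t$ be the canonical coordinate process of the added factors (so $\hat S$ is trivially Borel), and define $\hat\cP$ as the set of priors on $\hat\Omega$ whose disintegrations are supported, $\hat\cP$-q.s.\ at each time, in $\inter\tilde K^*_t$ and which are adapted to the original dynamics. One shows: (a) the frictionless no-arbitrage condition $\mathrm{NA}(\hat\cP)$ holds for $\hat S$ with the constraints $C$ — this is where Step 1(iii) and the q.s.\ support characterization $\supp_{\cR}$ from the Notation enter; (b) the frictionless superhedging price of the (suitably lifted) claim $G$ under $\hat\cP$ equals $\sh_\K(G)$ — the ``$\le$'' direction is the easy one (a frictionless superhedge on $\hat\Omega$ projects to a solvency-cone superhedge), and the ``$\ge$'' direction uses that admissible strategies in $\adm$ generate, via $-K_t$, exactly the payoffs dominated using $\hat S$; and (c) the frictionless martingale measures for $\hat S$ (with the constraint that the deflator lies in the constraint dual) are in bijection, up to the value $\E[G\cdot Z_T]$ they assign, with $\cS$.

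\medskip

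\textbf{Step 3: Invoke the frictionless duality and close.} With $\hat S$ Borel, $\hat\cP$ of the required product/analytic-graph form, and $\mathrm{NA}(\hat\cP)$ in force, apply the frictionless superhedging duality of \cite{BZ17} on $(\hat\Omega,\hat\cP)$ with portfolio constraints $C$: the superhedging price equals the sup of $\E_\Q[\hat G]$ over constrained martingale measures $\Q\ll\hat\cP$, and it is attained when finite. Translating through the identifications (b) and (c) of Step 2 yields exactly \eqref{eq:mainSH} together with attainment.

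\medskip

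\textbf{Main obstacle.} The delicate point is Step 1 together with part (a) of Step 2: showing that $\NAP$ — a strictly weaker hypothesis than the $\NAs$ used in \cite{BDT17} — still forces the constructed cones $\tilde K^*_t$ to have nonempty interior q.s.\ and that the resulting $\hat S$ is genuinely arbitrage-free in the frictionless sense. Under $\NAs$ one gets $\tilde\K^*=\K$ essentially for free; under only $\NAP$ the recursion genuinely shrinks the cones, the analytic (non-Borel) graph appears, and one must carefully run the measurable-selection and separation arguments step by step, controlling the null sets across the $T$ periods and making sure the q.s.\ support of the disintegrated priors lands inside $\inter\tilde K^*_t$ rather than merely its closure. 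Handling the non-attainment/closedness issues of the superhedging set in the non-dominated setting (so that the infimum in \eqref{eq:defSH} is attained) is the other technical burden, inherited from the frictionless theory but requiring the efficient-friction hypothesis to transfer cleanly.
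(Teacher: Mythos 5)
Your overall architecture (backward construction of the cones $\tilde{\K}^*$, randomization onto an enlarged space with a family of priors $\hat{\cP}$ forcing $\hat S_t\in\inter(\tilde K^*_t)$ quasi-surely, then reduction to the constrained frictionless duality of \cite{BZ17}) is the same as the paper's. However, there is a genuine gap in your Step 2(b): you label the direction ``a frictionless superhedge on $\hat\Omega$ projects to a solvency-cone superhedge'' as the easy one. It is not; it is the crux of the whole argument. A generic frictionless superhedging strategy $H\in\admr$ is $\hat{\F}^u$-predictable and therefore depends on the fictitious coordinate $\theta$; such a strategy does not determine any element of $\adm$, which must be measurable with respect to $\F^u$ alone. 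The paper resolves this by introducing the subclass $\admc$ of \emph{consistent} strategies (depending only on $\omega$, with the worst-case self-financing condition $\sup_{\theta\in\Theta_t}(H_{t+1}-H_t)\cdot\hat S_t(\cdot,\theta)=0$) and proving the nontrivial identity $\shc(G\cdot\hat S_T)=\shr(G\cdot\hat S_T)$. That identity requires a minimax argument over a compact exhaustion of $\tilde K^{*,0}_0$ in the one-period case (where $\NA$ is used to rule out unbounded sequences of $\varepsilon$-optimizers), and in the multi-period case a dynamic-programming induction together with a normal-integrand / measurable-selection argument producing consistent conditional optimizers. None of this appears in your plan, and without it the inequality $\sh_\K(G)\le\shr(G\cdot\hat S_T)$ --- hence the nontrivial half of \eqref{eq:mainSH} and the attainment claim --- is unproved.

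A secondary, smaller imprecision: you describe the recursion as taking the $z\in K^*_t$ that ``can be completed to a one-step SCPS increment,'' whereas the paper's recursion is the explicit $\tilde K^*_t=K^*_t\cap\bigl(\overline{\conv(\Gamma_t)}+C^*_t\bigr)$ with $\Gamma_t$ the conditional quasi-sure support of $\tilde K^*_{t+1}$, the constraint entering additively through $C^*_t$ because constrained consistent price systems are supermartingales rather than martingales. This is in the right spirit, and your identification of the separation argument showing $\inter(\tilde K^*_t)\neq\emptyset$ under $\NAP$ is correct; but the missing consistent-versus-randomized step above is essential and must be supplied.
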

The proof of Theorem \ref{thm:main} is given in Section \ref{sec:mainproof}.
The main difficulty is to establish Theorem \ref{thm:main} when only dynamic trading is allowed.
In Theorem \ref{thm:options} below we extend the duality to the case where also buy and hold positions in a finite number of options are allowed.
In the following it would be more convenient to extend the original market with an extra unconstrained component.
More precisely, one could consider the market $\bar{\K}$ with $\bar{K}_t:=K_t\times\R_+$ and $\bar{C}_t:=C_t\times\R$ for $t\in\cI$, which also satisfies Assumption \ref{ass:EF}.
It is easy to see that $\sh_\K(G)=\sh_{\bar{\K}}(\bar{G})$ with $\bar{G}=[G;0]$. 
On the dual side, $i:\cS\to\bar{\cS}^0$ with $i(Z,\Q)=([Z;1],\Q)$ is clearly a bijection and $ \E_{\Q}[G\cdot Z_T]=\E_{\Q}[\bar{G}\cdot [Z_T;1]]$.
\begin{remark}\label{rmk:unconstrained}
	Without loss of generality, we may assume that $(K_t,C_t)$ are already in the form described above, for any $t\in\cI$.
\end{remark}
We adapt some results of \cite{BayraktarZhangMOR} to the case of portfolio constraint.
These will be useful in the next sections.
Consider the collection of random sets $\tilde{\K}:=\{\tilde{K}_t\}_{t\in\cI}$, defined via a backward recursion as follows. We let $\tilde{K}^*_T:=K^*_T$ and
\begin{equation}\label{def:tildeK}
\tilde{K}^*_t:=K^*_t\cap \big( \overline{\conv(\Gamma_t)}+C^*_t\big),\quad t=T-1,\ldots,0,
\end{equation}
where, for $\omega\in\Omega_t$ fixed, $\Gamma_t(\omega):=\supp_{\cP_t(\omega)}\tilde{K}^*_{t+1}(\omega;\cdot)$. 
We define $\tilde{K}_t$ as the dual of $\tilde{K}^*_t$ for any $t\in\cI$.
The following are generalizations of Lemma 6 and Proposition 4 in \cite{BayraktarZhangMOR} to the present setting.
The proofs are analogous and we postpone them to the Appendix.
\begin{lemma}\label{lem:analytic_construction}
	$\tilde{K}^*_t$ has analytic graph for every $t\in\cI$.
\end{lemma}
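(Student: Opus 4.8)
The plan is to prove the statement by backward induction on $t$, starting from $t=T$ where $\tilde{K}^*_T = K^*_T$ has a Borel (hence analytic) graph by assumption, and then showing that the defining recursion \eqref{def:tildeK} preserves analyticity of the graph at each step. So assume inductively that $\gr \tilde{K}^*_{t+1} = \{(\omega,z)\in\Omega_{t+1}\times\R^d : z\in\tilde{K}^*_{t+1}(\omega)\}$ is analytic. The first task is to understand the map $(\omega,\cdot)\mapsto\Gamma_t(\omega) = \supp_{\cP_t(\omega)}\tilde{K}^*_{t+1}(\omega;\cdot)$ viewed as a multifunction on $\Omega_t$, and to show its graph is analytic. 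This is the analogue of Lemma 6 in \cite{BayraktarZhangMOR}, and the argument there should carry over: the support of a random set under a quasi-sure family of measures can be characterized via a countable dense sequence of open balls, roughly $z\notin\Gamma_t(\omega)$ iff there is a rational ball $B$ around $z$ with $P(\{\omega_{t+1}: \tilde{K}^*_{t+1}((\omega,\omega_{t+1}))\cap B\neq\emptyset\})=0$ for all $P\in\cP_t(\omega)$. Combining the analytic graph of $\cP_t$ (part of the setup), the analytic graph of $\tilde{K}^*_{t+1}$ (induction hypothesis), the measurability of the map $(P,\omega)\mapsto P(A)$ for analytic $A$, and the projection theorem for analytic sets, one gets that the complement of $\gr\Gamma_t$ is (co-)analytic-ish; one has to be a little careful, but the upshot of that lemma is precisely that $\Gamma_t$ has an analytic graph.

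Next I would pass from $\Gamma_t$ to $\overline{\conv(\Gamma_t)}$. Taking the convex hull fiberwise is handled by noting that a point lies in $\conv(\Gamma_t(\omega))$ iff it is a convex combination of at most $d+1$ points of $\Gamma_t(\omega)$ (Carathéodory), so $\gr\conv(\Gamma_t)$ is the image of an analytic set (a subset of $\Omega_t\times(\R^d)^{d+1}\times\Delta_d$ cut out by the analytic condition that each of the $d+1$ points lies in $\gr\Gamma_t$) under the continuous map performing the convex combination; images of analytic sets under Borel maps are analytic. For the closure, I would use that the closed convex hull of a set equals the closure of its convex hull, and that taking fiberwise closure of a multifunction with analytic graph again yields an analytic graph — this can be done via the characterization $z\in\overline{A(\omega)}$ iff for every rational $\varepsilon>0$ there is $z'\in A(\omega)$ with $|z-z'|<\varepsilon$, again an analytic condition after projection. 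Then $\overline{\conv(\Gamma_t)} + C^*_t$ has analytic graph because $C_t$ is $\cF_t$-measurable (hence $C^*_t$ has Borel graph, being the dual of a measurable cone), and the fiberwise Minkowski sum of two analytic-graph multifunctions is the image of the analytic set $\{(\omega,a,b): a\in\overline{\conv(\Gamma_t)}(\omega),\ b\in C^*_t(\omega)\}$ under the map $(\omega,a,b)\mapsto(\omega,a+b)$. Finally, $\tilde{K}^*_t = K^*_t \cap(\overline{\conv(\Gamma_t)}+C^*_t)$ has analytic graph since $K^*_t$ has Borel graph and the intersection of an analytic-graph multifunction with a Borel-graph one has analytic graph.

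The main obstacle is the first step: showing that the quasi-sure support $\Gamma_t$ has analytic graph. This is where the non-dominated nature of $\cP_t$ and the mere analyticity (not Borel-measurability) of $\gr\tilde{K}^*_{t+1}$ bite, and one must combine the Jankov–von Neumann selection theorem, the projection theorem, and measurability of $(P,A)\mapsto P(A)$ on analytic sets, exactly as in the proof of Lemma 6 in \cite{BayraktarZhangMOR}. Since the excerpt explicitly says ``the proofs are analogous and we postpone them to the Appendix,'' I would structure the argument as a careful adaptation of that lemma's proof, flagging only the points where the portfolio-constraint cone $C^*_t$ enters (which, being Borel-measurable, causes no new difficulty) and where the efficient-friction assumption $\inter(K^*_t)\neq\emptyset$ is used (it is not actually needed for the analyticity claim itself, only later for non-emptiness of $\tilde K^*_t$ under $\NAP$). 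The Minkowski-sum and convex-hull manipulations, while needing Carathéodory's theorem and the closure-of-analytic-graph facts, are routine once the support step is in hand.
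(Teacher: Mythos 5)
Your proposal is correct and follows essentially the same route as the paper: backward induction with the base case $\gr(K^*_T)$ analytic, analyticity of the quasi-sure support $\Gamma_t$ via the argument of Lemma 6 in \cite{BayraktarZhangMOR}, then closed convex hull, Minkowski sum with $C^*_t$ (as the image of an analytic set under the Borel addition map, which is exactly the paper's Lemma \ref{lem:sum_graph}), and finally intersection with $K^*_t$. The only caveat is that in the support step the workable characterization is the positive one ($z\in\Gamma_t(\omega)$ iff every rational ball around $z$ is charged by some $P\in\cP_t(\omega)$, turning the quantifier into a projection of an analytic set rather than a universal quantifier over $\cP_t(\omega)$), but you flag the need for care there and defer to the cited lemma just as the paper does.
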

\begin{proposition}\label{prop:nonempty} If $\K$ satisfies Assumption \ref{ass:EF} and $\NAP$, the same holds for $\tilde{\K}$. In particular, $\inter(\tilde{K}^*_t)\neq\emptyset$ $\cP$-q.s. for all $t\in\cI$.
\end{proposition}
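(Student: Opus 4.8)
The plan is to prove the statement by backward induction on $t$, following the scheme of \cite[Proposition 4]{BayraktarZhangMOR} but carrying the portfolio constraints $C_t$ through the recursion. The base case $t=T$ is immediate: $\tilde K^*_T=K^*_T$, so $\inter(\tilde K^*_T)\neq\emptyset$ by Assumption \ref{ass:EF}, and the relevant no-arbitrage property at time $T$ is vacuous (or follows from $\NAP$ at time $T$ directly). For the inductive step, assume $\tilde K^*_{t+1}$ has nonempty interior $\cP$-q.s.\ and that the analogue of $\NAP$ holds for $\tilde\K$ from time $t+1$ onward. I would first record that, by Lemma \ref{lem:analytic_construction}, $\tilde K^*_t$ has analytic graph, so all the measurable-selection arguments below are legitimate (Jankov--von Neumann).

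The core of the argument is the local, one-period claim: for $\cP$-q.e.\ $\omega\in\Omega_t$,
\[
\inter\big(\tilde K^*_t(\omega)\big)=\inter\Big(K^*_t(\omega)\cap\big(\overline{\conv(\Gamma_t(\omega))}+C^*_t(\omega)\big)\Big)\neq\emptyset .
\]
Here I would argue that $\Gamma_t(\omega)=\supp_{\cP_t(\omega)}\tilde K^*_{t+1}(\omega;\cdot)$ is a nonempty closed cone (nonempty because $\tilde K^*_{t+1}$ is a cone containing $0$ with nonempty interior q.s.), hence $\overline{\conv(\Gamma_t(\omega))}+C^*_t(\omega)$ is a closed convex cone. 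The point where $\NAP$ enters is exactly in showing this intersection has nonempty \emph{relative} interior of full dimension: if $\inter(\tilde K^*_t(\omega))=\emptyset$ on a non-negligible set, one produces, via a measurable selection $k\in\cL^0(\cF^u_t;K_t)$ lying in the boundary and using the duality between $\tilde K^*_t$ and $\tilde K_t$ together with $\R^d_+\subset K_t$, a nonzero admissible strategy $\eta\in A_t\cap\cL^0(\cF^u_t;K_t)$, contradicting $\NAP$. The separation/support argument is the standard one: if the convex cone $K^*_t\cap(\overline{\conv\Gamma_t}+C^*_t)$ were contained in a hyperplane, its polar would contain a line $\{\pm y\}$, and $y$ (suitably signed) would be a solvent-at-$t$ direction that is also attainable at zero cost — i.e., a strict arbitrage. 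I would then upgrade this pointwise statement to a $\cP$-q.s.\ statement by a measurable selection of an interior point (the interior of an analytic-graph random cone admits a universally measurable selector), which also re-establishes Assumption \ref{ass:EF} for $\tilde\K$.

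The remaining bookkeeping is to propagate $\NAP$ itself: I must check $A_t^{\tilde\K}\cap\cL^0(\cF^u_t;\tilde K_t)=\{0\}$, where $A_t^{\tilde\K}$ is built from $\tilde K_s$ and $C_s$. Since $\tilde K^*_t\subset K^*_t$ we have $\tilde K_t\supset K_t$, so this does not follow formally from $\NAP$ for $\K$ and genuinely uses the construction \eqref{def:tildeK}; here I would use the defining property that $\tilde K^*_t$ "sees" the support of $\tilde K^*_{t+1}$, translating a would-be strict arbitrage in $\tilde\K$ at time $t$ into one at time $t+1$ (handled by the induction hypothesis) plus a one-period piece controlled exactly as above. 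The nestedness $C_t\subset C_{t+1}$ from Assumption \ref{ass:EF} is what lets a time-$t$ strategy be viewed as a (non-trading-after-$t$) time-$(t+1)$ strategy, closing the loop.

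The main obstacle I anticipate is the measurability layer of the pointwise separation argument: ensuring that the "bad" set $\{\omega:\inter(\tilde K^*_t(\omega))=\emptyset\}$ is universally measurable and that the arbitrage strategy extracted on it is an $\cF^u_t$-measurable selector — this is precisely where the analytic-graph conclusion of Lemma \ref{lem:analytic_construction} and the Jankov--von Neumann selection theorem do the heavy lifting, and it is the step that differs most from the dominated case. Everything else is convex geometry of cones plus the recursion, and, as the authors note, parallels \cite{BayraktarZhangMOR}.
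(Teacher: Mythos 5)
Your overall strategy---backward induction, a separation argument powered by the no-arbitrage condition, measurable selection via analytic graphs and Jankov--von Neumann---is the same as the paper's, which indeed parallels \cite[Proposition 4]{BayraktarZhangMOR}. However, the key step as you state it would fail. You claim that if $\inter(\tilde K^*_t)=\emptyset$ on a non-polar set, one extracts a nonzero $\eta\in A_t\cap\cL^0(\cF^u_t;K_t)$, i.e.\ a direction that is \emph{solvent at time $t$} and available at zero cost, contradicting $\NAP$ for $\K$ directly. This is not what the separation yields. Writing $\Lambda_t:=\Gamma_t^*$ (so that $\Lambda_t^*=\overline{\conv(\Gamma_t)}$), emptiness of $\inter\bigl(K^*_t\cap(\Lambda^*_t+C^*_t)\bigr)$ lets you separate $\Lambda^*_t$ from $K^*_t-C^*_t$, and the separating vector $x$ lands in $-K_t\cap C_t\cap\Lambda_t$: it is available at zero cost at $t$, satisfies the constraint, and is solvent \emph{at time $t+1$ in the modified cone} $\tilde K_{t+1}$, $\cP_t$-q.s. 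There is no reason for $x$ to lie in $K_t$; inferring time-$t$ solvency from quasi-sure time-$(t+1)$ solvency is exactly the No Arbitrage of the Second Kind assumption that this paper is written to avoid. The contradiction must therefore be with a no-arbitrage property of a market containing $\tilde K_{t+1}$ at time $t+1$, which cannot be the original $\NAP$ and cannot (without circularity) be $\NAP$ for $\tilde\K$ at time $t$ either.

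The paper resolves this by running the induction over the hybrid markets $\K^u:=\{K_0,\ldots,K_{u-1},\tilde K_u,\ldots,\tilde K_T\}$: the induction hypothesis is that $\K^{t+1}$ satisfies Assumption \ref{ass:EF} and $\NAP$, the vector $x$ above is read (using $C_t\subset C_{t+1}$) as an element of $A_{t+1}(\K^{t+1})\cap\cL^0(\cF^u_{t+1};\tilde K_{t+1})$, and the induction hypothesis kills it. Your proposal never introduces this intermediate family, and your phrase ``the analogue of $\NAP$ holds for $\tilde\K$ from time $t+1$ onward'' is not a well-posed induction hypothesis, since $A_r$ involves the cones at \emph{all} times $s\le r$. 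Relatedly, your propagation of $\NAP$ from $\K^{t+1}$ to $\K^t$ is only gestured at; the actual argument needs the identity $\tilde K_t=K_t+(\Lambda_t\cap C_t)$ together with a measurable decomposition $\tilde k_t=k_t+\lambda_t$, $\lambda_t\in\tilde K_{t+1}\cap C_t$ (Lemma 8 of \cite{BayraktarZhangMOR}), so that the time-$t$ leg of a would-be arbitrage can be pushed to time $t+1$, with the case $r=t$ treated separately. These are the substantive missing ingredients; the measurability layer you flag is handled exactly as you anticipate.
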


\section{The randomization approach}\label{sec:Random}
In this section we construct an enlarged measurable space $(\hat{\Omega},\hat{\cF},\hat{\cF}^u,\hat{\F},\hat{\F}^u)$ endowed with a suitable class of probabilities $\hat{\cP}$.
On this space, we construct a price process $\hat{S}=(\hat{S}_t)_{t\in\cI}$ which represents a \emph{frictionless} financial market with the property that $\hat{S}_t\in \tilde{K}^*_t$ $\hat{\cP}$-q.s. for any $t\in\cI$ (Corollary \ref{cor:supports} below) and which is arbitrage free (Proposition \ref{prop: RNA implies NA} below).

We choose $\hat{\Omega}_1:=\Omega_1\times \R^{d-1}$ and set $\hat{\Omega}=\hat{\Omega}_{T}$, where $\hat{\Omega}_t$ denotes the $(t+1)$-fold product of $\hat{\Omega}_1$.
We endow $\hat{\Omega}$ with the filtration $\hat{\F}:=\{\hat{\cF}_t\}_{t\in\cI}$, where, for every $t\in\cI$, $\hat{\cF}_t:=\cF_t\otimes\cB_{\R^{d-1}}$. We denote by $\hat{\F}^u$ the universal completion of $\hat{\F}$.
Similarly, $\hat{\cF}:=\cB_{\hat{\Omega}}$ and $\hat{\cF}^u$ is its universal completion.
We shortly denote $(\omega,\theta)\in \hat{\Omega}_t$ for an element of the form $(\omega_0,\ldots,\omega_t,\theta_0,\ldots,\theta_t)$ with $\omega_s\in\Omega_1$ and $\theta_s\in\R^{d-1}$, for any $s=0,\ldots, t$. 
The collection of constraints extends to $\hat{\Omega}$ in the obvious way.
Since there is no source of confusion, we still denote them by $C=\{C_t\}_{t\in\cI}$.
We next construct the price process $\hat{S}$.
Recall that, for any $t\in\cI$, $K_t$ is Borel-measurable and, thus, also $K^*_t$.
Moreover, $\inter(K^*_t)$ is non-empty by Assumption \ref{ass:EF}.
From \cite[Lemma A.1]{BN16} there exists $S_t\in\cL^0(\cF_t;\inter(K^*_t))$.
Since $K^*_t\subset\R^d_+$ we can normalize $S_t$ with respect to, e.g., the last component, so that $S_t$ takes values in
\begin{equation}\label{def:K0}
K^{*,0}_t:=\{y\in K^*_t: y^d=1\},\qquad t\in\cI.
\end{equation}
We define a Borel-measurable price process $\hat{S}$ as,
\begin{equation}\label{eq:shadow}
\hat{S}_t(\omega,\theta)=[S_t^1(\omega)\theta^1_t,\ldots, S_t^{d-1}(\omega)\theta^{d-1}_t, 1],\qquad (\omega,\theta)\in\hat{\Omega},\ t\in\cI,
\end{equation}
where the last component serves as a num\'eraire.
The rest of the section is devoted to the construction of the desired set of probability measures $\hat{\cP}$.
For every $t\in\cI$, we define the random sets
\begin{equation}\label{def:theta_t}
\Theta_t(\omega):=\{\theta\in\R^{d-1}: \hat{S}_{t}(\omega,\theta)\in\inter(\tilde{K}^{*}_{t}(\omega))\},\quad \omega\in\Omega_t.
\end{equation}
\begin{lemma}\label{lem:construction}
	For every $t\in\cI$, $\Theta_t$ has an analytic graph.
\end{lemma}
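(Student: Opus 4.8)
The plan is to express $\gr\Theta_t$ (within $\Omega_t\times\R^{d-1}$) as the projection along an auxiliary coordinate of an analytic set, and then invoke the fact that analytic sets are stable under projection and under countable unions/intersections. The starting point is the explicit formula \eqref{def:theta_t}: $\theta\in\Theta_t(\omega)$ if and only if $\hat S_t(\omega,\theta)\in\inter(\tilde K^*_t(\omega))$. First I would rewrite the membership in the open convex cone $\inter(\tilde K^*_t(\omega))$ in a way that is amenable to measurable-selection arguments. Since $\tilde K^*_t$ has analytic graph by Lemma \ref{lem:analytic_construction}, the graph of the multifunction $\omega\mapsto\inter(\tilde K^*_t(\omega))$ is also analytic: indeed, for a closed convex cone $K$ with nonempty interior one has $x\in\inter K$ iff there exists a rational $\varepsilon>0$ with the closed ball $B(x,\varepsilon)\subset K$, i.e.\ $x+\varepsilon z\in K$ for all $z$ in a fixed countable dense subset of the unit sphere; hence
\[
\gr(\inter\tilde K^*_t)=\bigcup_{\varepsilon\in\Q_{>0}}\ \bigcap_{z\in D}\ \{(\omega,x): (\omega, x+\varepsilon z)\in\gr\tilde K^*_t\},
\]
which is a countable union of countable intersections of (inverse images under continuous maps of) analytic sets, hence analytic. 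Here I use Proposition \ref{prop:nonempty}, which guarantees $\inter(\tilde K^*_t)\neq\emptyset$ $\cP$-q.s., so this description is the correct one off a polar set; on the exceptional set $\Theta_t$ is empty and contributes nothing to the graph.

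Next, the map $(\omega,\theta)\mapsto\hat S_t(\omega,\theta)$ given by \eqref{eq:shadow} is Borel measurable, since $S_t\in\cL^0(\cF_t;K^{*,0}_t)$ is $\cF_t=\cB_{\Omega_t}$-measurable and the remaining operations are continuous in $\theta$. Therefore
\[
\gr\Theta_t=\{(\omega,\theta)\in\Omega_t\times\R^{d-1}: (\omega,\hat S_t(\omega,\theta))\in\gr(\inter\tilde K^*_t)\}
\]
is the preimage of the analytic set $\gr(\inter\tilde K^*_t)$ under the Borel map $(\omega,\theta)\mapsto(\omega,\hat S_t(\omega,\theta))$. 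Preimages of analytic sets under Borel maps are analytic (a Borel map between Polish spaces has Borel, hence analytic, graph, and analytic sets are closed under such preimages), so $\gr\Theta_t$ is analytic, which is the claim.

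The main obstacle, and the only point requiring care, is the first step: translating $x\in\inter(\tilde K^*_t(\omega))$ into a countable Boolean combination of conditions of the form ``$(\omega,\text{something})\in\gr\tilde K^*_t$'' without losing analyticity and without appealing to more than the analytic (rather than Borel) regularity of $\tilde K^*_t$. One must be slightly careful that the parametrization by $\varepsilon$ and by the dense set $D$ is genuinely countable and that the exceptional $\cP$-polar set where $\inter(\tilde K^*_t)=\emptyset$ is handled correctly — on that set both sides of the desired graph identity are empty in the $\theta$-fibre, so the identity persists. Everything else is the standard stability calculus of the analytic $\sigma$-algebra (closure under countable unions, countable intersections, and Borel preimages), exactly as used in the proof of Lemma \ref{lem:analytic_construction}.
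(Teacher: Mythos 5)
Your argument is correct, and it reaches the conclusion by a genuinely different decomposition than the paper's. The paper first normalizes the cone and projects it to $\R^{d-1}$ (Step 1), characterizes the interior of a convex set through the $2(d-1)$ coordinate directions via Lemma \ref{lem:interConvex}, and then realizes each resulting condition as a set $A^v_t$ obtained as the \emph{forward image} of an analytic set under a Borel ``division'' map. You instead stay in $\R^d$, show directly that $\gr(\inter\tilde K^*_t)$ is analytic via a countable dense set of sphere directions and rational radii, and obtain $\gr\Theta_t$ as the \emph{preimage} of that analytic set under the Borel map $(\omega,\theta)\mapsto(\omega,\hat S_t(\omega,\theta))$. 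Since preimages of analytic sets under Borel maps between Polish spaces are analytic, this is sound and arguably shorter: it dispenses with the normalization, the projection, and the auxiliary map $f$ of the paper's Step 2. Two minor remarks. First, your ``i.e.'' identifying $B(x,\varepsilon)\subset K$ with ``$x+\varepsilon z\in K$ for all $z\in D$'' literally requires $K$ closed, and closedness of $\tilde K^*_t(\omega)$ is not evident from \eqref{def:tildeK} (a sum of a closed convex set and a closed convex cone need not be closed); however, the set identity you actually use holds for any \emph{convex} $K$, because $\conv\{x+\varepsilon z: z\in D\}=x+\varepsilon\conv(D)$ already contains the open ball $B(x,\varepsilon)$ (as $\conv(D)$ is a convex set with nonempty interior whose closure is the closed unit ball, so its interior is the open unit ball). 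Second, with that reading the polar-set discussion is superfluous: whenever $\inter(\tilde K^*_t(\omega))=\emptyset$ both sides of your identity have empty $\omega$-fibre, so no appeal to Proposition \ref{prop:nonempty} is needed in this lemma.
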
 
\begin{proof}
	In the proof we will repeatedly use the fact that the class of analytic sets is closed under countable union and intersection and that the image of an analytic set through a Borel-measurable function is again analytic.
	
	\emph{Step 1.} 
	For any $t\in\cI$, consider the random set $\tilde{K}^{*,d-1}_t:=\proj_{\R^{d-1}}(\tilde{K}^{*,0}_t)$ where the projection is taken over the first $d-1$ coordinates and $\tilde{K}^{*,0}_t$ is the analogous of \eqref{def:K0} for $\tilde{\K}$. 
	Observe that,
	\begin{equation}\label{eq:graphProj}
	\gr(\tilde{K}^{*,d-1}_t)=\proj_{\Omega_t\times\R^{d-1}}\left(\gr(\tilde{K}^{*}_t)\cap\big(\Omega_t\times\R^{d-1}\times\{1\}\big)\right).
	\end{equation}
	From  Lemma \ref{lem:analytic_construction}, $\gr(\tilde{K}^{*}_t)$ is analytic and so is the intersection in \eqref{eq:graphProj}.
	As the projection is a continuous map, we conclude that $\gr(\tilde{K}^{*,d-1}_t)$ is analytic.
	
	\emph{Step 2.} 
	We now show that the set
	\[
	A^{v}_t:=\left\{(\omega,\theta)\in\Omega_t\times\R^{d-1}: \theta^i_t=\frac{y^i}{S^i_t(\omega)},\quad y\in\tilde{K}^{*,d-1}_t(\omega)+v\right\}
	\]
	is analytic, for an arbitrary $v\in\R^{d-1}$ and $t\in\cI$. 
	This together with Lemma \ref{lem:interConvex} in the Appendix yields the claim, as $\gr(\Theta_t)$ is the intersection of countably many analytic sets of the from $A^v_t$.
	
	Observe that the function $f:\Omega_t\times\R^{d-1}\times\Omega_t\times\R^{d-1}\mapsto \Omega_t\times\R^{d-1}$ defined as
	\[
	f(\omega,y,\tilde{\omega},s)=
	\begin{cases}
	\left(\omega,\dfrac{y^1}{s^1},\ldots,\dfrac{y^{d-1}}{s^{d-1}}\right)&\omega=\tilde{\omega}\\
	\left(\omega,-1,\ldots,-1\right)&\omega\neq\tilde{\omega}
	\end{cases}
	\]
	is Borel-measurable.
	Recalling that $S_t>0$ and $K^{*}_t\subset\R^d_+$, we have that
	\[
	A_t^v=f\big(\gr(\tilde{K}^{*,d-1}_t+v),\gr(\tilde{S}_t)\big)\cap \big(\Omega_t\times\R^{d-1}_+\big),
	\]
	where $\tilde{S}$ is the process given by the first $d-1$ components of $S$. Since $\tilde{S}$ is Borel-measurable, $\gr(\tilde{S}_t)$ is a Borel set. Moreover, from Step 1, $\gr(\tilde{K}^{*,d-1}_t+v)$ is an analytic set. As $f$ is Borel-measurable, we conclude that $A_t^v$ is analytic.
	 \end{proof}
\begin{corollary}\label{cor:deltas}
	For any $t\in\cI$, the random set 
	\[
	\delta_{\Theta_t}(\omega):=\{\delta_\theta\in\mathfrak{P}(\R^{d-1}): \theta\in\Theta_t(\omega)\}
	\]
	has analytic graph.
\end{corollary}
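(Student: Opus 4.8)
The goal is to show that $\delta_{\Theta_t}$ has analytic graph, i.e.\ that
\[
\gr(\delta_{\Theta_t})=\{(\omega,\mu)\in\Omega_t\times\mathfrak{P}(\R^{d-1}):\mu=\delta_\theta\text{ for some }\theta\in\Theta_t(\omega)\}
\]
is analytic when $\mathfrak{P}(\R^{d-1})$ is equipped with its usual Polish topology (weak convergence). First I would invoke Lemma \ref{lem:construction}, which already gives us that $\gr(\Theta_t)\subset\Omega_t\times\R^{d-1}$ is analytic; this is the substantive input, and the corollary is really just a transfer of that fact through the embedding $\theta\mapsto\delta_\theta$.

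The key step is to observe that the Dirac embedding $\iota:\R^{d-1}\to\mathfrak{P}(\R^{d-1})$, $\theta\mapsto\delta_\theta$, is a Borel-measurable (indeed continuous, hence homeomorphic onto its image) map between Polish spaces. Consequently the product map $\mathrm{id}_{\Omega_t}\times\iota:\Omega_t\times\R^{d-1}\to\Omega_t\times\mathfrak{P}(\R^{d-1})$ is Borel-measurable. Since the image of an analytic set under a Borel-measurable map between Polish spaces is again analytic (the same fact used repeatedly in the proof of Lemma \ref{lem:construction}), and since
\[
\gr(\delta_{\Theta_t})=(\mathrm{id}_{\Omega_t}\times\iota)\big(\gr(\Theta_t)\big),
\]
we conclude immediately that $\gr(\delta_{\Theta_t})$ is analytic.

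The only point requiring a line of justification is the continuity of $\iota$: if $\theta_n\to\theta$ in $\R^{d-1}$ then $\delta_{\theta_n}\to\delta_\theta$ weakly, since $\int f\,d\delta_{\theta_n}=f(\theta_n)\to f(\theta)=\int f\,d\delta_\theta$ for every bounded continuous $f$. I do not anticipate any genuine obstacle here; the content of the corollary is entirely carried by Lemma \ref{lem:construction}, and the remainder is the standard behaviour of analytic sets under Borel images. If one wanted to be slightly more careful about measurability of $\iota$ one could alternatively note that the map $\mathfrak{P}(\R^{d-1})\ni\mu\mapsto(\mu(B))_{B}$ is Borel for a countable generating algebra of Borel sets $B$, and $\{\delta_\theta:\theta\in\R^{d-1}\}$ is a Borel (in fact closed) subset of $\mathfrak{P}(\R^{d-1})$, but for the purposes of this corollary the continuity argument is the cleanest.
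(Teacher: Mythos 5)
Your argument is correct and coincides with the paper's own proof: both identify $\gr(\delta_{\Theta_t})$ as the image of $\gr(\Theta_t)$ under the map $(\omega,\theta)\mapsto(\omega,\delta_\theta)$, note that $\theta\mapsto\delta_\theta$ is a continuous embedding of $\R^{d-1}$ into $\mathfrak{P}(\R^{d-1})$, and conclude via stability of analytic sets under Borel (here continuous) images. No gaps.
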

\begin{proof}The graph of $\delta_{\Theta_t}$ is the image of the graph of $\Theta_t$ via the map $(\omega,\theta)\mapsto (\omega,\delta_\theta)$, which is an embedding
	(see {\cite[Theorem 15.8]{aliprantis}}). 
	Since the image of an analytic set through a continuous function is again analytic, the thesis follows.
	 \end{proof}
For $t\in\{0,\ldots T-1\}$ we define the random sets
\begin{equation}\label{eq:def_random sets}
\hat{\cP}_t(\omega):=\{\hat{\P}\in \mathfrak{P}(\hat{\Omega}_1) : \hat{\P}_{|\Omega_1}\in\cP_t(\omega),\ \hat{\P}(\gr\Theta_{t+1}(\omega;\cdot))=1 \},\quad \omega\in\Omega_t.
\end{equation}
We extend the definition to $t=-1$ with $\hat{\cP}_{-1}:=\{\hat{\P}\in \mathfrak{P}(\hat{\Omega}_1) : \hat{\P}(\gr\Theta_{0})=1 \}$, which is a non-random set as $K_0$ itself is non-random.

\begin{proposition}\label{prop:construction}
	The random sets $\hat{\cP}_t$ defined in \eqref{eq:def_random sets} have analytic graphs.
\end{proposition}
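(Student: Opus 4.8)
The plan is to write the graph of $\hat{\cP}_t$, for $t\in\{0,\dots,T-1\}$, as an intersection $\gr(\hat{\cP}_t)=B_1^t\cap B_2^t$, where
\[
B_1^t:=\{(\omega,\hat\P)\in\Omega_t\times\mathfrak{P}(\hat\Omega_1):\hat\P_{|\Omega_1}\in\cP_t(\omega)\},\qquad B_2^t:=\{(\omega,\hat\P)\in\Omega_t\times\mathfrak{P}(\hat\Omega_1):\hat\P(\gr\Theta_{t+1}(\omega;\cdot))=1\},
\]
and then show that each of these two sets is analytic; since analytic sets are stable under finite intersections this yields the claim. By Lemma~\ref{lem:construction}, $\gr\Theta_{t+1}$ is an analytic subset of $\Omega_t\times\hat\Omega_1$, hence each section $\gr\Theta_{t+1}(\omega;\cdot)$ is analytic and in particular universally measurable, so the defining relation of $B_2^t$ is unambiguous once $\hat\P$ is read as its universal completion (equivalently, by Choquet capacitability of analytic sets, as outer measure). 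The set $B_1^t$ is the straightforward half: the marginalisation map $\mathfrak{P}(\hat\Omega_1)\to\mathfrak{P}(\Omega_1)$, $\hat\P\mapsto\hat\P_{|\Omega_1}$, is weakly continuous, so $(\omega,\hat\P)\mapsto(\omega,\hat\P_{|\Omega_1})$ is a Borel map from $\Omega_t\times\mathfrak{P}(\hat\Omega_1)$ to $\Omega_t\times\mathfrak{P}(\Omega_1)$; as $B_1^t$ is the preimage of the analytic set $\gr\cP_t$ under this map, and preimages of analytic sets under Borel maps are analytic, $B_1^t$ is analytic.

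The main work, and where I expect the only genuine obstacle to lie, is the analyticity of $B_2^t$. Write $\cG:=\gr\Theta_{t+1}\subset\Omega_t\times\hat\Omega_1$ and denote by $\hat\omega$ a generic point of $\hat\Omega_1$. Its indicator $\mathbf 1_\cG$ is upper semianalytic, since for each real level $c$ the set $\{\mathbf 1_\cG>c\}$ is one of $\emptyset$, $\cG$, or the whole space, all of which are analytic; composing with the continuous projection $\Omega_t\times\mathfrak{P}(\hat\Omega_1)\times\hat\Omega_1\to\Omega_t\times\hat\Omega_1$ then shows that $\big((\omega,\hat\P),\hat\omega\big)\mapsto\mathbf 1_\cG(\omega,\hat\omega)$ is upper semianalytic on $\big(\Omega_t\times\mathfrak{P}(\hat\Omega_1)\big)\times\hat\Omega_1$. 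Next I would regard $(\omega,\hat\P)\mapsto\hat\P$ as a Borel stochastic kernel on $\hat\Omega_1$ given $\Omega_t\times\mathfrak{P}(\hat\Omega_1)$ and invoke the stability of upper semianalyticity under integration against a Borel kernel (\cite[Proposition~7.48]{BS78}; see also \cite[Proposition~7.47]{BS78}) to conclude that
\[
(\omega,\hat\P)\ \longmapsto\ \hat\P\big(\gr\Theta_{t+1}(\omega;\cdot)\big)=\int_{\hat\Omega_1}\mathbf 1_\cG(\omega,\hat\omega)\,\hat\P(d\hat\omega)
\]
is upper semianalytic. Since this quantity is bounded by $1$, it then follows that
\[
B_2^t=\Big\{(\omega,\hat\P):\hat\P\big(\gr\Theta_{t+1}(\omega;\cdot)\big)\ge1\Big\}=\bigcap_{n\ge1}\Big\{(\omega,\hat\P):\hat\P\big(\gr\Theta_{t+1}(\omega;\cdot)\big)>1-\tfrac1n\Big\},
\]
which is a countable intersection of analytic sets, hence analytic.

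The point to be careful about is precisely this last passage: because $\tilde K^*_{t+1}$, and therefore $\Theta_{t+1}$, is only known to have an analytic graph and not a Borel one, the elementary measurability theory of Borel stochastic kernels is not available and one must argue through upper semianalytic functions. It is essential that $\mathbf 1_\cG$ is upper semianalytic, so that the superlevel set $\{\hat\P(\gr\Theta_{t+1}(\omega;\cdot))\ge1\}$ comes out analytic; had one only been able to assert that $1-\mathbf 1_\cG$ is lower semianalytic, the same set would be coanalytic and the argument would collapse. Finally, the non-random set $\hat{\cP}_{-1}$ is covered by the identical argument with $\Omega_t$ replaced by a one-point space and $\Theta_{t+1}$ by $\Theta_0$; by comparison with $B_2^t$, both the reduction $\gr(\hat{\cP}_t)=B_1^t\cap B_2^t$ and the treatment of $B_1^t$ are routine.
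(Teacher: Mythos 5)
Your proof is correct and follows essentially the same route as the paper: the same decomposition of $\gr(\hat{\cP}_t)$ into the marginal constraint and the mass-one constraint, with the latter handled by showing that $(\omega,\hat\P)\mapsto\hat\P(\gr\Theta_{t+1}(\omega;\cdot))$ is upper semianalytic and taking the superlevel set at $1$ as a countable intersection of analytic sets. The only difference is cosmetic — you spell out the kernel-integration and preimage arguments that the paper delegates to citations of \cite{BN13} and \cite{BDT17}.
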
 
\begin{proof}
	Fix $t\in\cI_{-1}$.
	For ease of notation, denote $A_{t+1}:=\gr\Theta_{t+1}$, which is analytic from Lemma \ref{lem:construction}.
	The function $\mathbf{1}_{A_{t+1}}:\Omega_t\times\hat{\Omega}_1\rightarrow\R$ is, thus, upper semianalytic.
	It is not difficult to show that the function $\phi:\Omega_t\times \mathfrak{P}(\hat{\Omega}_1)\rightarrow\R$ such that $\phi(\omega,\hat{\P})=\E_{\hat{\P}}[\mathbf{1}_{A_{t+1}}(\omega;\cdot)]$ is upper semianalytic (see proof of \cite[Lemma 4.10]{BN13}).
	As a consequence the set 
	\[
	\{(\omega,\hat{\P})\in\Omega_t\times \mathfrak{P}(\hat{\Omega}_1) : \hat{\P}(\gr\Theta_{t+1}(\omega;\cdot))=1\}=\phi^{-1}([1,\infty)),
	\]
	is analytic as $\phi$ is upper semianalytic.
	In particular, the thesis follows for $t=-1$.
	
	Let now $0\le t\le T-1$.
	Recall that the map $\pi_{\Omega_1}:\mathfrak{P}(\hat{\Omega}_1)\rightarrow \mathfrak{P}(\Omega_1)$, which associates to every $\hat{\P}\in \mathfrak{P}(\hat{\Omega}_1)$ its marginal on $\Omega_1$, is Borel measurable (see {\cite[Theorem 15.14]{aliprantis}}).
	Therefore, as in the proof of \cite[Lemma 2.12 (i)]{BDT17}, the set 
	\[
	\{(\omega,\hat{\P})\in\Omega_t\times\mathfrak{P}(\hat{\Omega}_1): \hat{\P}_{|\Omega_1}\in\cP_t(\omega)\}
	\]
	is also analytic.
	To conclude observe that $\gr(\hat{\cP}_t)$ is the intersection of the two previous sets.
	 \end{proof}
We now show that this class is non-empty on a sufficiently rich set of events.
\begin{lemma} \label{lem:non-empty}
	Assume $\NAP$.
	The set $N_t:=\{\omega\in\Omega_t: \hat{\cP}_t(\omega)=\emptyset\}$ is a universally measurable $\cP$-polar set, for any $t\in\cI_{-1}$. In particular, the same holds for the set $N:=\cup_{t\in\cI_{-1}}N_t$.
\end{lemma}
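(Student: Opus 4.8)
The plan is to show that the polar set $N_t$ arises precisely from the $\cP$-polar set where $\tilde{K}^*_{t+1}$ fails to have nonempty interior, combined with a measurable selection argument to produce an element of $\hat{\cP}_t(\omega)$ whenever that interior is nonempty. First I would treat the case $t=-1$ separately: since $K_0$ and $\tilde{K}^*_0$ are non-random and $\inter(\tilde{K}^*_0)\neq\emptyset$ by Proposition \ref{prop:nonempty}, the set $\Theta_0$ is a nonempty open subset of $\R^{d-1}$, so any Dirac mass $\delta_\theta$ with $\theta\in\Theta_0$ lies in $\hat{\cP}_{-1}$; hence $N_{-1}=\emptyset$.

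For $0\le t\le T-1$, fix $\omega\in\Omega_t$ with $\inter(\tilde{K}^*_{t+1}(\omega;\cdot))\neq\emptyset$ $\cP_{t+1}$-q.s.\ and $\cP_t(\omega)\neq\emptyset$. The key geometric observation is that for $\cP_{t+1}$-q.s.\ $\omega'$, the fiber $\Theta_{t+1}(\omega,\omega')$ is a nonempty open set: indeed $\hat{S}_{t+1}(\omega,\omega',\cdot)$ maps $\R^{d-1}$ onto the hyperplane $\{y^d=1\}$ via the (strictly positive, by efficient friction) coordinates of $S_{t+1}$, and $\inter(\tilde{K}^*_{t+1})$ meets this hyperplane exactly because $\tilde{K}^*_{t+1}$ is a cone with nonempty interior contained in $\R^d_+$ (Proposition \ref{prop:nonempty} plus $\tilde K^*_{t+1}\subset K^*_{t+1}\subset\R^d_+$). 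So for such $\omega'$ we may pick a measurable selector $\theta_{t+1}(\omega,\omega')\in\Theta_{t+1}(\omega,\omega')$, e.g.\ via the Jankov--von Neumann selection theorem applied to the analytic graph of $\Theta_{t+1}$ from Lemma \ref{lem:construction}. Now take any $P\in\cL^0(\cF^u_t;\cP_t)$ with $P(\omega)$ well-defined and define $\hat\P:=(\mathrm{id},\theta_{t+1}(\omega,\cdot))_{*}P(\omega)\in\mathfrak{P}(\hat\Omega_1)$, i.e.\ the pushforward of $P(\omega)$ under $\omega'\mapsto(\omega',\theta_{t+1}(\omega,\omega'))$. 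By construction $\hat\P_{|\Omega_1}=P(\omega)\in\cP_t(\omega)$ and $\hat\P(\gr\Theta_{t+1}(\omega;\cdot))=P(\omega)\big(\{\omega':\theta_{t+1}(\omega,\omega')\in\Theta_{t+1}(\omega,\omega')\}\big)=1$, so $\hat\P\in\hat\cP_t(\omega)$ and $\omega\notin N_t$. This shows $N_t\subset\{\omega:\cP_t(\omega)=\emptyset\}\cup\{\omega:\inter(\tilde{K}^*_{t+1}(\omega;\cdot))=\emptyset\text{ on a non-}\cP_{t+1}\text{-null set}\}$.

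It remains to argue the latter set is $\cP$-polar and that $N_t$ is universally measurable. Universal measurability of $N_t$ follows from Proposition \ref{prop:construction}: $\gr(\hat{\cP}_t)$ is analytic, hence $\dom\hat{\cP}_t=\proj_{\Omega_t}\gr(\hat{\cP}_t)$ is analytic and therefore universally measurable, and $N_t=\Omega_t\setminus\dom\hat{\cP}_t$. For the polarity, the set of $\omega$ at which $\inter(\tilde K^*_{t+1}(\omega;\cdot))=\emptyset$ with positive $\cP_{t+1}$-mass is, by Proposition \ref{prop:nonempty}, contained in the $\cP$-polar set where the conclusion $\inter(\tilde K^*_s)\neq\emptyset$ $\cP$-q.s.\ fails; a standard Fubini/product-structure argument (of the type in \cite{BN13,BDT17}) converts a non-$\cP$-polar set of such ``bad'' $\omega\in\Omega_t$ into a non-$\cP$-polar set in $\Omega_{t+1}$ where $\inter(\tilde K^*_{t+1})=\emptyset$, contradicting Proposition \ref{prop:nonempty}. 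Finally $N=\cup_t N_t$ is a finite union of universally measurable $\cP$-polar sets, hence universally measurable and $\cP$-polar. I expect the main obstacle to be the measurable selection step: one must select $\theta_{t+1}$ jointly measurably in $(\omega,\omega')$ from an only-analytic-graph multifunction whose fibers are open but whose ``size'' degenerates near the $\cP$-polar bad set, so care is needed to either restrict to the good set first or to use a universally measurable selector and then discard a polar set — this is where the efficient friction hypothesis ($S_t>0$, $\inter(K^*_t)\neq\emptyset$) and Proposition \ref{prop:nonempty} are essential.
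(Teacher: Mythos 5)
Your proof is correct and follows essentially the same route as the paper: you reduce the emptiness of $\hat{\cP}_t(\omega)$ to the $\cP$-polarity of $\{\inter(\tilde K^*_{t+1})=\emptyset\}$ given by Proposition \ref{prop:nonempty} via a Fubini/disintegration argument, build an element of $\hat{\cP}_t(\omega)$ as the product of a kernel in $\cP_t(\omega)$ with a Dirac mass along a Jankov--von Neumann selector of $\Theta_{t+1}$, and get universal measurability from the analyticity of $\gr\hat{\cP}_t$ — all exactly as in the paper. The only quibble is notational: the quasi-sure qualifier on $\inter(\tilde K^*_{t+1}(\omega;\cdot))\neq\emptyset$ should be with respect to $\cP_t(\omega)$ (the one-step transition laws at $\omega$), not $\cP_{t+1}$.
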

\begin{proof}
	Fix $t\in\cI_{-1}$. For $t=-1$ there is nothing to show as $\Theta_0$ is non-random and non-empty from Proposition \ref{prop:nonempty}. 
	Suppose $0\le t\le T-1$.
	Let $D_{t+1}:=\dom(\Theta_{t+1})$ which is analytic from Lemma \ref{lem:construction}. 
	As in the proof of Proposition \ref{prop:construction}, the function $\phi:\Omega_t\times \mathfrak{P}(\Omega_1)\rightarrow\R$ such that $\phi(\omega,\P)=\E_{\P}[\mathbf{1}_{D_{t+1}}(\omega;\cdot)]$ is upper semianalytic.
	We deduce that the set 
	\begin{equation}\label{eq:Bt}
	B_t:=\{(\omega,\P)\in\Omega_t\times \mathfrak{P}(\Omega_1): \phi(\omega,\P)=1\}\cap \gr(\cP_t)  
	\end{equation}
	is analytic and, thus, also its projection on $\Omega_t$.
	Denote by $N'_t:=(\proj_{\Omega_t}(B_t))^C\in\cF^u$.
	We show that, under $\NAP$, $N'_t$ is $\cP$-polar. 
	To see this observe that 
	\[
	D_{t+1}^C=\{\omega\in\Omega_{t+1}: \inter(\tilde{K}^{*}_{t+1}(\omega))= \emptyset\},
	\]
	is $\cP$-polar from Proposition \ref{prop:nonempty}.
	Suppose that there exists $\P\in\cP$ such that $\P(N'_t)>0$ and denote by $\{P_t\}_{t\in\cI}$ its disintegration.
	By definition of $B_t$, $P_t(\omega,D_{t+1}^C)>0$ for every $\omega\in N'_t$, therefore, the random variable
	\[
	\int_{\Omega_1} \mathbf{1}_{D_{t+1}^C}(\omega;\omega')P_t(\omega; d\omega'),\qquad \omega\in\Omega_t,
	\]
	is strictly positive on $N'_t$.
	Since $\P(N'_t)>0$, by integrating over $P_0\otimes P_{t-1}$ we obtain $\P(D_{t+1}^C)>0$. This is a contradiction, since $D_{t+1}^C$ is $\cP$-polar.
	
	It remains to show that $N'_t=N_t$.
	The inclusion $\subset$ follows from the definition of $B_t$.
	Take now $P_t$ an $\cF^u_t$-measurable selector of $B_t$ and $\delta_{\theta_{t+1}}\in\cL^0(\cF^u_{t+1};\delta_{\Theta_{t+1}})$, where $\delta_{\Theta_{t+1}}$ is defined in Corollary \ref{cor:deltas}.
	Since $P_t(\omega,\dom(\Theta_{t+1}))=1$ for any $\omega\in (N'_t)^C$, we can extend $\delta_{\theta_{t+1}}$ arbitrarily on the complement of $\dom(\Theta_{t+1})$ and, with a slight abuse of notation, we still denote it by $\delta_{\theta_{t+1}}$.
	The product measure $P_{t}\otimes\delta_{\theta_{t+1}}$ belongs to $\hat{\cP}_{t}(\omega)$ for any $\omega\in (N'_t)^C$. 
	This shows $(N'_t)^C\subset (N_t)^C$ and the thesis follows.
	 \end{proof}
\begin{corollary} \label{cor:supports}
	Assume $\NAP$.
	For any $t\in\cI_{-1}$,  $\hat{S}_{t+1}\in\inter(\tilde{K}_{t+1}^*)$ $\hat{\cP}$-q.s. and for any
	$(\omega,\theta)\in N_t^C\times\R^{d-1}$, 
	
	\[
	\supp_{\hat{\cP_t}(\omega)}\hat{S}_{t+1}(\omega,\theta;\cdot)=\supp_{\cP_t(\omega)}\tilde{K}^{*,0}_{t+1}(\omega;\cdot).
	\]
\end{corollary}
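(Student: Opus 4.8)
The plan is to derive both assertions from the definition \eqref{eq:def_random sets} of $\hat{\cP}_t$, the structure \eqref{eq:shadow} of the shadow price, and the definition \eqref{def:theta_t} of $\Theta_t$. First I would unwind the quasi-sure statement: fix $\hat{\P}\in\hat{\cP}$ with disintegration $\hat{\P}=\hat P_0\otimes\cdots\otimes\hat P_{T-1}$, where $\hat P_t\in\cL^0(\hat\cF^u_t;\hat\cP_t)$. By Lemma \ref{lem:non-empty} the polar set $N$ is avoided $\hat{\cP}$-q.s., so for $\hat{\P}$-a.e.\ history $(\omega,\theta)\in\hat\Omega_t$ we have $\hat P_t(\omega,\theta;\,\cdot\,)\in\hat\cP_t(\omega)$, hence by definition $\hat P_t(\omega,\theta;\gr\Theta_{t+1}(\omega;\cdot))=1$. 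Unpacking $\gr\Theta_{t+1}$ via \eqref{def:theta_t}, this says exactly that $\hat S_{t+1}(\omega,\cdot,\omega',\theta'_{t+1})\in\inter(\tilde K^*_{t+1}(\omega,\omega'))$ for $\hat P_t$-a.e.\ $(\omega',\theta'_{t+1})$; integrating over the earlier coordinates gives $\hat S_{t+1}\in\inter(\tilde K^*_{t+1})$ $\hat{\P}$-a.s., and since $\hat{\P}$ was arbitrary this is the first claim. The case $t=-1$ is the same using that $\Theta_0$ is non-random and nonempty (Proposition \ref{prop:nonempty}).

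For the support identity, fix $(\omega,\theta)\in N_t^C\times\R^{d-1}$ and compare the two closed sets. One inclusion: the first part (conditioned at $\omega$) already gives $\hat S_{t+1}(\omega,\theta;\cdot)\in\inter(\tilde K^*_{t+1}(\omega;\cdot))\subset\tilde K^{*}_{t+1}(\omega;\cdot)$ $\hat\cP_t(\omega)$-q.s.; moreover the last component of $\hat S_{t+1}$ is identically $1$, so in fact the values lie in $\tilde K^{*,0}_{t+1}(\omega;\cdot)$, and taking $\cP_t(\omega)$-q.s.\ closure of the latter on the right we get ``$\subseteq$''. For the reverse inclusion I would argue that the map $\omega'\mapsto$ (the fibre over $\omega'$ of) $\hat S_{t+1}(\omega,\theta;\cdot)$ has, as its $\hat\cP_t(\omega)$-q.s.\ support, at least $\supp_{\cP_t(\omega)}\tilde K^{*,0}_{t+1}(\omega;\cdot)$: given $y$ in the latter support and a relatively open neighbourhood, one needs an element of $\hat\cP_t(\omega)$ putting positive mass on the event $\{\hat S_{t+1}(\omega,\theta;\cdot)\text{ near }y\}$. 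Since $y\in\tilde K^{*,0}_{t+1}$ on a non-$\cP_t(\omega)$-null set of $\omega'$, and on $D_{t+1}$ the set $\inter(\tilde K^*_{t+1}(\omega,\omega'))$ is nonempty so $\Theta_{t+1}(\omega,\omega';\cdot)$ is nonempty with $\hat S_{t+1}$ sweeping out a dense subset of $\inter(\tilde K^{*,0}_{t+1})$ as $\theta'$ varies, one can choose a selector $\theta'_{t+1}(\omega')$ with $\hat S_{t+1}$ close to $y$ on that set; then build $\hat P_t:=P_t\otimes\delta_{\theta'_{t+1}}$ as in the last paragraph of the proof of Lemma \ref{lem:non-empty}, with $P_t$ a measure in $\cP_t(\omega)$ charging the relevant event. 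This $\hat P_t\in\hat\cP_t(\omega)$ witnesses $y\in\supp_{\hat\cP_t(\omega)}\hat S_{t+1}(\omega,\theta;\cdot)$, giving ``$\supseteq$''.

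I expect the main obstacle to be the reverse inclusion, and specifically the measurable-selection bookkeeping needed to produce, for a given target point $y$ and neighbourhood, a single $\hat P_t\in\hat\cP_t(\omega)$ charging $\{\hat S_{t+1}\in\text{nbhd}(y)\}$: one must select $\theta'_{t+1}$ measurably so that $\hat S_{t+1}(\omega,\theta;\omega',\theta'_{t+1})$ lies in the prescribed neighbourhood of $y$ whenever $y\in\tilde K^{*,0}_{t+1}(\omega,\omega')$ (using that $S_{t+1}>0$ to invert the coordinatewise scaling in \eqref{eq:shadow}), and extend it arbitrarily elsewhere, exactly as the final paragraph of the proof of Lemma \ref{lem:non-empty} does for a single selector. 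The analyticity of $\gr\Theta_{t+1}$ (Lemma \ref{lem:construction}) and the Jankov--von Neumann theorem supply the selector; the only care needed is that $y$ ranges over a countable dense subset of the target support so that countably many such witnesses suffice, and that the neighbourhoods are taken relative to the hyperplane $\{y^d=1\}$ so the normalisation is consistent on both sides.
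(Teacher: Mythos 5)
Your proposal is correct and takes essentially the same route as the paper's own (one-sentence) proof: both rest on the observation, already established in the last paragraph of the proof of Lemma \ref{lem:non-empty}, that $P_t\otimes\delta_{\theta_{t+1}}$ belongs to $\hat{\cP}_t(\omega)$ for every selector $\theta_{t+1}$ of $\Theta_{t+1}$ and every selector $P_t$ of $B_t$, so that as the selector varies $\hat S_{t+1}$ sweeps out a dense subset of $\tilde K^{*,0}_{t+1}$. The only slip is the phrase ``$y\in\tilde K^{*,0}_{t+1}$ on a non-$\cP_t(\omega)$-null set'': a point of the quasi-sure support need only have every neighbourhood meet the random set on a non-polar set, which is in fact what your subsequent neighbourhood-and-selector argument uses.
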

\begin{proof}
	It follows from $P_{t}\otimes\delta_{\theta_{t+1}}$ belonging to $\hat{\cP}_{t}$ for any $\delta_{\theta_{t+1}}\in\cL^0(\cF^u_{t+1};\delta_{\Theta_{t+1}})$ and $P_t$ being a measurable selector of $B_t$ in \eqref{eq:Bt}.
	 \end{proof}
Corollary \ref{cor:supports} shows that the role of the parameter $\theta$ for the price process $\hat{S}$ is to ``span'' the dual cones given by the backward recursion \eqref{def:tildeK}.
We set 
\[
\hat{\cP}:=\{\hat{P}_{-1}\otimes\cdots\otimes \hat{P}_{T-1}: \hat{P}_t\in\cL^0(\cF^u_t;\hat{\cP}_t),  \forall t\in\cI_{-1}\}.\]
The class is well defined and constructed via Fubini's Theorem as done for $\cP$.
Indeed, from Lemma \ref{lem:non-empty}, the set $N_t$ is $\cP$-polar, thus, we can extend arbitrarily any $\hat{P}_t$ to a universally measurable kernel which, with a slight abuse of notation, we still denote by $\hat{P}_t$.
By construction, we have that the probability of the set of trajectories taking values in the interior of $\tilde{K}^{*}_t$ is equal to $1$, i.e.,
\begin{equation}\label{eq:hatSinKtilde}
\hat{\P}\left((\omega,\theta)\in \hat{\Omega}: \hat S_t(\omega,\theta)\in\inter\big(\tilde{K}^{*}_t(\omega)\big),\ \forall t\in\cI\right)=1,\ \forall\ \hat{\P}\in\hat{\cP}.
\end{equation}
We finally show that starting from a model $(K,\cP)$ satisfying $\NAP$, the induced frictionless market $(\hat{S},\hat{\cP})$ satisfies the no arbitrage condition of Definition \ref{def:NoA} below.
\begin{definition}\label{def:randomized_strategies}
	We say that a process $H$ is an \emph{admissible strategy} if $H_{t+1}\in \cL^0(\hat{\cF}^u_t;C_t)$  and the \emph{self-financing} condition $(H_{t+1}-H_t)\cdot \hat{S}_t=0$ $\hat{\cP}$-q.s. holds for $0\le t\le T-1$.
	The class of admissible strategies is denoted by $\admr$. 
\end{definition}
\begin{definition}[$\NA$]\label{def:NoA} The \emph{No Arbitrage condition} holds if $(H\circ \hat{S})_T\ge 0$ $\hat{\cP}$-q.s. implies $(H\circ \hat{S})_T= 0$ $\hat{\cP}$-q.s. for any $H\in\admr$.
\end{definition}
In order to use the frictionless duality results of \cite{BZ17} we need to verify Assumption 3.1 and 5.1 in that paper.
Note that the set $\cH_t$, in the notation of \cite{BZ17}, corresponds to the set of constraints $C_t$ considered here. 
Under $\NAP$, Corollary \ref{cor:supports} and Proposition \ref{prop:nonempty}, imply that\[
\lin\big( \supp_{\hat{\cP_t}(\omega)}(\hat{S}_{t+1}(\omega;\cdot)-\hat{S}_t(\omega))\big)=\R^{d-1}\times\{0\},\quad \hat{\cP}\text{-q.s.},
\]
where, for a set $U\subset\R^d$, $\lin(U)$ denotes its linear hull.
We deduce that the sets $\cH_t$, $\cH_t(\hat{\cP})$ and $\cC_{\cH_t}(\hat{\cP})$ in \cite{BZ17}, they all coincide $\hat{\cP}$-q.s. with the first $d-1$ components of the set $C_t$.
Since $C_t$ is a convex closed cone, Assumption 3.1 i)-ii) and 5.1 i) are met.
By \cite[Remark 5.2]{BZ17} it is sufficient to verify Assumption 5.1 ii). 
In particular we show that 
\[
A_t(\hat{\omega},\hat{\P}):= \sup_{x\in C_t(\hat{\omega})}x\cdot \E_{\hat{\P}}[\Delta \hat{S}_{t}(\hat{\omega};\cdot)],\quad \hat{\omega}\in\hat{\Omega}_t,\ \hat{\P}\in\mathfrak{P}(\hat{\Omega}_1),
\]
is Borel-measurable.
To see this observe that $D:=\{(\hat{\omega},\hat{\P}): \E_{\hat{\P}}|\Delta\hat{S}_t(\hat{\omega};\cdot)|<\infty\}$ is Borel-measurable as $\hat{S}$ is Borel-measurable (see, e.g., the proof of \cite[Lemma 4.10]{BN13}).
Moreover, the function $F((\hat{\omega},\hat{\P}),x):=x\cdot \E_{\hat{\P}}[\Delta \hat{S}_{t}(\hat{\omega};\cdot)]$ is a Charath\'eodory map, namely, it is continuous in $x$ when $(\hat{\omega},\hat{\P})$ are fixed and it is measurable in $(\hat{\omega},\hat{\P})$ when $x$ is fixed.
From \cite[Example 14.15]{R}, the random set $F((\hat{\omega},\hat{\P}),C_t(\hat{\omega}))$ is again Borel-measurable.
Finally, $A_t$ restricted to $D$ is again Borel-measurable since, for any $c\in\R$,
\[
A_t^{-1}((c,\infty])\cap D=\{(\hat{\omega},\hat{\P}): F((\hat{\omega},\hat{\P}),C_t(\hat{\omega}))\cap (c,\infty)\neq\emptyset \}\cap D.
\]
Let
$\hat{\cQ}:=\{\Q\ll\hat{\cP}: H\circ\hat{S}\text{ is a }\Q\text{-local-supermartingale }\forall H\in\cL(\hat{\F}^u_-;C)\}$.
The following is Theorem 3.2 of \cite{BZ17} which is also valid in our context.
For $\omega\in\Omega_t$ fixed, $\NAt$ corresponds to $\NA$ for the one period market $(\hat{S}_t(\omega),\hat{S}_{t+1}(\omega;\cdot))$.
\begin{theorem}\label{lem:NA_onestep} The following are equivalent:
	\begin{enumerate}
		\item $\NA$;
		\item for any $0\le t\le T-1$, $N'_t:=\{(\omega,\theta)\in\hat{\Omega}_t: \NAt\text{ fails}\}\in\cF^u$ is a $\hat{\cP}$-polar set;
		\item for any $\P\in\hat{\cP}$ there exists $\Q\in\hat{\cQ}$ such that $\P\ll\Q$.
	\end{enumerate}
\end{theorem}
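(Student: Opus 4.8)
The plan is to deduce the three-way equivalence directly from \cite[Theorem 3.2]{BZ17}, once we have checked that the randomized market $(\hat S,\hat\cP)$ together with the constraints $(C_t)_{t\in\cI}$ fits the standing framework of that paper. First I would record that the enlarged stochastic basis $(\hat\Omega,\hat\cF,\hat\F,\hat\cP)$ is of exactly the type required there: $\hat\Omega$ is the $(T{+}1)$-fold product of the Polish space $\hat\Omega_1=\Omega_1\times\R^{d-1}$, $\hat\F$ is the raw product filtration and $\hat\F^u$ its universal completion, the one-step priors $\hat\cP_t$ are random sets with analytic graph (Proposition \ref{prop:construction}), and $\hat\cP$ is assembled from universally measurable selectors of the $\hat\cP_t$ through Fubini's theorem. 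The only point of care is that $\hat\cP_t$ may be empty on $N_t$; but $N_t$ is universally measurable and $\cP$-polar, hence $\hat\cP$-polar (Lemma \ref{lem:non-empty}), so extending the kernels arbitrarily off $N:=\cup_t N_t$ --- exactly as in the definition of $\cP$ --- yields a well-defined class $\hat\cP$ meeting the requirements of \cite{BN13,BZ17}. The price process $\hat S$ is Borel-measurable by \eqref{eq:shadow}, and the constant last component together with the self-financing constraint $(H_{t+1}-H_t)\cdot\hat S_t=0$ of Definition \ref{def:randomized_strategies} reduces the dynamics to the first $d-1$ coordinates, which is the form handled in \cite{BZ17}.

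Next I would verify the structural hypotheses, Assumptions 3.1 and 5.1 of \cite{BZ17}; this is precisely the content of the discussion preceding the statement, which I would merely organise. Under $\NAP$, Corollary \ref{cor:supports} and Proposition \ref{prop:nonempty} give $\lin\big(\supp_{\hat\cP_t(\omega)}(\hat S_{t+1}(\omega;\cdot)-\hat S_t(\omega))\big)=\R^{d-1}\times\{0\}$ $\hat\cP$-q.s., so no degenerate trading direction survives and the sets $\cH_t$, $\cH_t(\hat\cP)$ and $\cC_{\cH_t}(\hat\cP)$ in the notation of \cite{BZ17} all coincide $\hat\cP$-q.s. with the first $d-1$ coordinates of $C_t$. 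Since each $C_t$ is a closed convex cone containing $0$ and is $\hat\cF_t$-measurable, Assumption 3.1 i)--ii) and Assumption 5.1 i) hold. By \cite[Remark 5.2]{BZ17} it then remains only to check Assumption 5.1 ii), namely the Borel-measurability of $(\hat\omega,\hat\P)\mapsto A_t(\hat\omega,\hat\P)=\sup_{x\in C_t(\hat\omega)}x\cdot\E_{\hat\P}[\Delta\hat S_t(\hat\omega;\cdot)]$ on the Borel set $D$ where the expectation is finite: this follows because $F((\hat\omega,\hat\P),x):=x\cdot\E_{\hat\P}[\Delta\hat S_t(\hat\omega;\cdot)]$ is a Carath\'eodory map, so $F((\hat\omega,\hat\P),C_t(\hat\omega))$ is a measurable random set by \cite[Example 14.15]{R} and $A_t$ is measurable as its supremum.

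With all hypotheses in force, \cite[Theorem 3.2]{BZ17} applies verbatim to the frictionless constrained market $(\hat S,\hat\cP,(C_t))$ and yields the equivalence of: the global no-arbitrage condition, which in our notation is $\NA$ (Definition \ref{def:NoA}, recalling that $\admr$ is the class of self-financing $C$-valued strategies); the one-step condition that $N'_t:=\{(\omega,\theta)\in\hat\Omega_t:\NAt\text{ fails}\}$ is $\hat\cP$-polar for every $0\le t\le T-1$; and the existence, for each $\P\in\hat\cP$, of $\Q\in\hat\cQ$ with $\P\ll\Q$, where $\hat\cQ=\{\Q\ll\hat\cP: H\circ\hat S\text{ is a }\Q\text{-local-supermartingale }\forall H\in\cL(\hat\F^u_-;C)\}$ is exactly the dual set appearing in that theorem. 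One should also note in passing that $N'_t\in\cF^u$ rather than merely $\hat\cF^u_t$-measurable follows since $\Theta_{t+1}$, hence the data defining the one-period sub-market, depends on $\omega\in\Omega_t$ only and not on the auxiliary coordinates $\theta$.

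The step I expect to be the only genuine obstacle is the $\hat\cP$-q.s.\ coincidence $\cH_t=\cH_t(\hat\cP)=\cC_{\cH_t}(\hat\cP)$: this is where the efficient friction hypothesis and $\NAP$ are really used, through the span identity above, and it is what makes the \emph{constrained} FTAP of \cite{BZ17}, rather than merely its unconstrained predecessor \cite{BN13}, the right tool; everything else is a bookkeeping translation of notation and a reuse of the measurability arguments already developed in Section \ref{sec:Random}.
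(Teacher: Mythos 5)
Your overall route is the same as the paper's: reduce everything to \cite[Theorem 3.2]{BZ17} after checking that the randomized market fits that framework, and indeed most of your second paragraph (the verification of Assumptions 3.1 and 5.1) reproduces the discussion the paper places \emph{before} the theorem. The problem is the sentence claiming that the cited theorem then ``applies verbatim.'' It does not, and the paper's entire proof is devoted to the one point you dismiss in passing: the random sets $\hat{\cP}_t$ have \emph{empty values} on the set $N_t$, whereas the standing framework of \cite{BN13,BZ17} requires nonempty values for every $\omega$. Your proposed fix --- extend the kernels arbitrarily off $N=\cup_t N_t$ --- only makes the \emph{global} class $\hat{\cP}$ well defined (which was already done when $\hat{\cP}$ was introduced); it does nothing to the conditional random sets $\hat{\cP}_t(\omega)$ that the proof of \cite[Theorem 3.2]{BZ17} manipulates $\omega$-by-$\omega$. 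Nor can one simply redefine $\hat{\cP}_t$ to be nonempty on $N_t$: $N_t$ is merely coanalytic, so adjoining values there would in general destroy the analyticity of $\gr(\hat{\cP}_t)$ on which all the measurable-selection arguments rest.

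What is actually needed, and what the paper supplies, is a check that the two places in \cite{BZ17} where nonemptiness of the one-step priors is used still go through when $\dom(\hat{\cP}_t)$ is only an analytic set with $\cP$-polar complement. For $1.\Rightarrow 2.$, the characterization of $(N'_t)^C$ as $\{\omega: (\Lambda^*\cap C_t)(\omega)\subset-\Lambda^*(\omega)\}$ with $\Lambda(\omega)=\supp_{\hat{\cP}_t(\omega)}(\hat{S}_{t+1}(\omega;\cdot)-\hat{S}_t(\omega))$ must be intersected with $\dom(\hat{\cP}_t)$; since the latter is analytic, $N'_t$ remains universally measurable and the polarity argument is unaffected. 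For $2.\Rightarrow 3.$, the kernels $Q_t$ of the dominating $\Q\in\hat{\cQ}$ are chosen as measurable selectors of a random set $\Xi$ whose domain is $(N'_t)^C$ in \cite{BZ17} but becomes $(N_t\cup N'_t)^C$ here; one must note that this is still universally measurable with $\cP$-polar complement before extending the selectors arbitrarily off it. Without these two observations the appeal to \cite[Theorem 3.2]{BZ17} is not justified as stated. (A minor additional remark: Assumption 5.1 of \cite{BZ17} concerns the superhedging duality and is not needed for the FTAP you are invoking here, so that part of your verification, while harmless, is not doing work for this statement.)
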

\begin{proof}
	The only difference from the proof of \cite[Theorem 3.2]{BZ17} is that $\hat{\cP}_t(\omega)$ might have empty values on the $\cP$-polar set $N_t\in\cF^u$.
	Recall $\gr \hat{\cP}_t$ is analytic by Proposition \ref{prop:construction}.
	Thus, also $\dom(\hat{\cP}_t)$ is an analytic set.
	$1.\Rightarrow 2.$ is proven in \cite[Lemma 3.3]{BZ17}.
	It is shown that $(N'_t)^C$ is equal to the set $\{\omega\in\Omega_t: (\Lambda^*\cap C_t)(\omega)\subset-\Lambda^*(\omega)\}$, where we denote $\Lambda(\omega)=\supp_{\hat{\cP_t}(\omega)}(\hat{S}_{t+1}(\omega;\cdot)-\hat{S}_t(\omega))$.
	In our framework, the above set has to be intersected with $\dom(\hat{\cP}_t)$ which is analytic and, therefore, the intersection is again universally measurable.
	The same proof yields that $N'_t$ is $\cP$-polar. 
	
	$2.\Rightarrow 3.$ is based on \cite[Lemma 3.4]{BZ17}.
	The universally measurable kernels $Q_t$ defining $\Q\in\hat{\cQ}$ are constructed outside a $\cP$-polar set and, in particular, they are chosen as selectors of a set $\Xi$ with $\dom(\Xi)=(N'_t)^C$. 
	In our framework, the same $\Xi$ satisfies $\dom(\Xi)=(N_t\cup N'_t)^C$, which is still universally measurable and $\cP$-polar. The same proof allows to conclude.
	
	$3.\Rightarrow 1.$ is standard.
	 \end{proof}

\begin{proposition}\label{prop: RNA implies NA}
	$\NAP$ implies $\NA$.
\end{proposition}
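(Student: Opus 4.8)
The plan is to derive $\NA$ from the equivalence in Theorem \ref{lem:NA_onestep}: it suffices to show that for each $0\le t\le T-1$ the (universally measurable) set $N'_t=\{(\omega,\theta)\in\hat\Omega_t:\NAt\text{ fails}\}$ is $\hat\cP$-polar. By Remark \ref{rmk:unconstrained} I may assume the market is in the enlarged form $K_t=K'_t\times\R_+$, $C_t=C'_t\times\R$, so that $C^*_t(\omega)\subset\R^{d-1}\times\{0\}$. Fixing $\hat\P\in\hat\cP$, I would first observe that by \eqref{eq:hatSinKtilde} the measure $\hat\P$ is concentrated on $\{\hat S_t\in\inter(\tilde K^*_t)\}$, while by Lemma \ref{lem:non-empty} and Proposition \ref{prop:nonempty} the set of ``bad'' $\omega\in\Omega_t$ --- those in $N_t$, or such that $\tilde K^*_{t+1}(\omega;\cdot)$ fails to have non-empty interior $\cP_t(\omega)$-q.s.\ --- is $\cP$-polar, hence its $\theta$-cylinder is $\hat\P$-null (the $\Omega$-coordinates of elements of $\hat\cP$ are governed by $\cP$, which is exactly how $\hat\cP$ was constructed after Corollary \ref{cor:supports}). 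Thus the problem reduces to a deterministic, one-step statement: for $\omega$ outside those polar sets and $p:=\hat S_t(\omega,\theta)\in\inter(\tilde K^*_t(\omega))$, the one-period frictionless market $\big(\hat S_t(\omega,\theta),\hat S_{t+1}(\omega,\theta;\cdot)\big)$ with cone constraint $C_t(\omega)$ is arbitrage-free, i.e.\ (by the characterization recalled in the proof of Theorem \ref{lem:NA_onestep}) $C_t(\omega)\cap\Lambda^*\subset-\Lambda^*$, where $\Lambda:=\supp_{\hat\cP_t(\omega)}\big(\hat S_{t+1}(\omega,\theta;\cdot)-p\big)$.

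The core of the argument is geometric. By Corollary \ref{cor:supports}, $\Lambda=F-p$ with $F:=\supp_{\cP_t(\omega)}\tilde K^{*,0}_{t+1}(\omega;\cdot)\subset\{y^d=1\}\cap\R^d_+$, $F\neq\emptyset$. Next I would identify the closed convex cone $\overline{\conv(\Gamma_t(\omega))}$ appearing in the recursion \eqref{def:tildeK} with $\cK:=\overline{\cone(\conv F)}$. The inclusion $\cK\subset\overline{\conv(\Gamma_t(\omega))}$ is immediate from $F\subset\Gamma_t(\omega)$; for the reverse, use that $\cP_t(\omega)$-q.s.\ the cone $\tilde K^*_{t+1}(\omega,\omega')$ is closed, convex, contained in $\R^d_+$ and full-dimensional (since $\omega$ is not bad), and any such cone equals the closure of the cone generated by its slice $\{y^d=1\}$, hence $\tilde K^*_{t+1}(\omega,\omega')=\overline{\cone(\tilde K^{*,0}_{t+1}(\omega,\omega'))}\subset\overline{\cone F}$, which forces $\Gamma_t(\omega)\subset\overline{\cone F}$ and so $\overline{\conv(\Gamma_t(\omega))}\subset\cK$ --- this is precisely where the efficient-friction consequence of Proposition \ref{prop:nonempty} is used. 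A short computation (using $F\subset\{y^d=1\}$) then gives $\cK\cap\{y^d=1\}=\overline{\conv F}$. Since \eqref{def:tildeK} yields $p\in\inter(\tilde K^*_t(\omega))\subset\inter(\cK+C^*_t(\omega))$ and $C^*_t(\omega)$ lies in $\{y^d=0\}$, intersecting a ball around $p$ with the hyperplane $\{y^d=1\}$ (which contains $p$) shows $p\in\ri\big(\overline{\conv F}+C^*_t(\omega)\big)=\ri(\conv F)+\ri\big(C^*_t(\omega)\big)\subset\ri(\conv F)+C^*_t(\omega)$.

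The final step is a routine one-step no-arbitrage computation. Write $p=q+c$ with $q\in\ri(\conv F)$ and $c\in C^*_t(\omega)$. For $h\in C_t(\omega)$ with $h\cdot(y-p)\ge0$ on $F$ one has $h\cdot c\ge0$, hence $h\cdot(y-q)=h\cdot(y-p)+h\cdot c\ge0$ on $F$, and therefore on $\conv F$; since $q$ is a relative-interior point of $\conv F$, a linear functional that is non-negative on $\conv F$ and vanishes at $q$ vanishes on all of $\conv F$, so $h\cdot(y-p)=-h\cdot c$ is one and the same non-negative constant for every $y\in F$, forcing $h\cdot c=0$ and $h\cdot(y-p)=0$ on $F$, i.e.\ $h\in\Lambda^\perp\subset-\Lambda^*$. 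Hence $\NAt$ holds at every such $(\omega,\theta)$, so $N'_t$ is contained in the $\hat\cP$-polar set $\{\hat S_t\notin\inter(\tilde K^*_t)\}\cup(\text{bad }\omega)\times\R^{d-1}$, and Theorem \ref{lem:NA_onestep} concludes. The step I expect to be the main obstacle is the geometric identification in the middle paragraph: relating the recursively built cone $\overline{\conv(\Gamma_t)}$ to the cross-sectional support $F$ of $\hat S_{t+1}$ (which genuinely needs the full-dimensionality of $\tilde K^*_{t+1}$ from Proposition \ref{prop:nonempty} and the precise shape of \eqref{def:tildeK}), and then passing from interiority of $\tilde K^*_t$ in $\R^d$ to relative interiority of $\conv F+C^*_t$ inside the numéraire hyperplane; the reduction to one step and the closing computation are essentially bookkeeping.
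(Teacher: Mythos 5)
Your proposal is correct, and its skeleton is the same as the paper's: reduce to the conditional one-period statement via Theorem \ref{lem:NA_onestep}, use Corollary \ref{cor:supports} to identify the conditional support of $\hat S_{t+1}$ with $F=\supp_{\cP_t(\omega)}\tilde K^{*,0}_{t+1}(\omega;\cdot)$, and then exploit the recursion \eqref{def:tildeK} together with $\hat S_t\in\inter(\tilde K^*_t)$ $\hat\cP$-q.s. Where you diverge is in the endgame. The paper argues directly in $\R^d$: a candidate arbitrage $H\in C_t$ weakly separates $\hat S_t$ from $\overline{\conv(\Gamma_t)}+C^*_t\supset\tilde K^*_t$, and since $\hat S_t$ lies in the \emph{full} interior of that cone, the separating functional must vanish, i.e.\ $\{H\neq0\}$ is $\hat\cP$-polar --- a stronger conclusion than the characterization $C_t\cap\Lambda^*\subset-\Lambda^*$ you target. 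You instead pass to the num\'eraire hyperplane, decompose $p=q+c$ with $q\in\ri(\conv F)$ and $c\in C^*_t$, and use that an affine functional nonnegative on $\conv F$ and vanishing at a relative-interior point is identically zero there, concluding only $h\in\Lambda^\perp\subset-\Lambda^*$ --- which is all that Theorem \ref{lem:NA_onestep} requires. Your route is longer but it makes explicit a point the paper glosses over: Corollary \ref{cor:supports} controls the support of the normalized slice $\tilde K^{*,0}_{t+1}$, whereas the recursion involves $\Gamma_t=\supp_{\cP_t(\omega)}\tilde K^{*}_{t+1}$, and your identification $\overline{\conv(\Gamma_t)}=\overline{\cone(\conv F)}$ (which genuinely uses the $\cP$-q.s.\ full-dimensionality of $\tilde K^*_{t+1}$ from Proposition \ref{prop:nonempty}) is exactly the bridge needed to extend the separation from the slice to the cone. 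In short: same strategy, a more pedestrian but more carefully justified final separation step, at the cost of the stronger ``$H=0$'' conclusion the paper obtains for free from full interiority.
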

\begin{proof}
	Fix $t\in\{0,\ldots,T-1\}$.
	From Theorem \ref{lem:NA_onestep}, we only need to show that $\NAP$ implies that the set $N'_t$ is $\hat{\cP}$-polar. 
	Suppose that there exists $H\in\cL^0(\hat{\cF}^u_{t};C_t)$, such that 
	\[
	H(\hat{\omega})\cdot\big(\hat S_{t+1}(\hat{\omega};\cdot)-\hat S_{t}(\hat{\omega})\big)\ge 0,\qquad \hat{\cP}_t(\omega)\textrm{-q.s.}  
	\]
	Corollary \ref{cor:supports} implies that $H(\hat{\omega})$ (weakly) separates $\{\hat{S}_{t}(\hat{\omega})\}$ from the set $\supp_{\cP_t(\omega)}\tilde{K}^*_{t+1}(\omega;\cdot)$ for any $\hat{\omega}$ in the complementary of a $\hat{\cP}$-polar set.
	Such a separation extends to the closed convex hull $A_t(\hat{\omega}):=\overline{\conv(\Gamma_t(\omega))}$, with the notation of \eqref{def:tildeK}.
	We can thus rewrite $H(\hat{\omega})\in \big(A_t-\hat{S}_{t})^*(\hat{\omega})$.
	Moreover, from the condition $H(\hat{\omega})\in C_t(\hat{\omega})=C_t^{**}(\hat{\omega})$, we also have $H(\hat{\omega})\in \big(A_t+C_t^*-\hat{S}_{t})^*(\hat{\omega})$.
	Finally, \eqref{def:tildeK} implies that $\inter(\tilde{K}^*_t)\subset \inter(A_t+C_t^*)$.
	We deduce that \[
	H(\hat{\omega})\neq 0\Rightarrow\hat{S}_{t}(\hat{\omega})\notin\inter(\tilde{K}^*_t(\omega)).
	\]
	By Corollary  \ref{cor:supports}, $\hat{S}_{t}\in\inter(\tilde{K}^*_t)$ $\hat{\cP}$-q.s., thus,
	$\{\hat{\omega}\in\hat{\Omega}_t: H(\hat{\omega})\neq 0\}$ is $\hat{\cP}$-polar. 
	 \end{proof}

\section{The Superhedging duality}\label{sec:mainproof}

This section is devoted to the proof of Theorem \ref{thm:main}.
To this aim we compare both the primal and the dual problem with the randomized counterpart in the frictionless market induced by $\hat{S}$ and constructed in Section \ref{sec:Random}.
Using duality results known for the frictionless case we obtain the result.

\paragraph{Equality of the primal problems.}
We first observe that using admissible strategies with respect to $\K$ or with respect to $\tilde{\K}$ yields the same superhedging price.
\begin{lemma}\label{lem:equality admissibility}
	$\sh_{\K}(G)=\sh_{\tilde \K}(G)$.
\end{lemma}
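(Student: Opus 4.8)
The plan is to show the two inequalities separately, with the nontrivial direction being $\sh_{\tilde{\K}}(G) \ge \sh_{\K}(G)$ (equivalently, $\sh_{\K}(G) \le \sh_{\tilde{\K}}(G)$). For the easy inequality $\sh_{\K}(G) \ge \sh_{\tilde{\K}}(G)$, I would first establish that $K_t \subset \tilde{K}_t$ $\cP$-q.s. for every $t$, which follows by dualizing the defining relation \eqref{def:tildeK}: since $\tilde{K}^*_t \subset K^*_t$, taking duals gives $K_t = K_t^{**} \subset (\tilde{K}^*_t)^* = \tilde{K}_t$. Consequently every admissible strategy for $\K$ — being a sum of elements of $-K_s \subset -\tilde{K}_s$ while still satisfying the same constraints $C_t$ — is admissible for $\tilde{\K}$, and the terminal solvency requirement $y\unit + \eta_T - G \in K_T = \tilde{K}_T$ is literally the same since $\tilde{K}_T = K_T$. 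Hence any $y$ feasible for $\sh_{\K}(G)$ is feasible for $\sh_{\tilde{\K}}(G)$, giving $\sh_{\tilde{\K}}(G) \le \sh_{\K}(G)$.

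For the reverse inequality I would argue by backward induction on $t$, showing that a strategy liquidating into $\tilde{K}_t$ at time $t$ can be modified, at a cost that is absorbed by the earlier trades, into one liquidating into $K_t$. The key structural fact is the identity $\tilde{K}^*_t = K^*_t \cap (\overline{\conv(\Gamma_t)} + C^*_t)$ from \eqref{def:tildeK}, which dualizes (using that intersection of dual cones corresponds to closed sum of the cones, and that $C_t$ is a closed convex cone so $C_t^{**}=C_t$) to an expression of the form $\tilde{K}_t = \overline{K_t + (\text{something involving } \Gamma_t^\circ \text{ and } C_t)}$. More usefully: a position $\xi \in -\tilde{K}_t$ that is $C_t$-admissible can be decomposed, up to a $\tilde{K}_{t+1}$-solvent residual that is carried forward to time $t+1$, so that after finitely many such steps one reaches a genuine $\K$-admissible strategy. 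The cleanest route is probably to show directly that if $\eta \in \hat{\cH}^{\tilde{\K}}$ (admissible for $\tilde{\K}$) with $y\unit + \eta_T - G \in \tilde{K}_T = K_T$, then one can construct $\eta' \in \adm$ (admissible for $\K$) with the same property and the same $y$; the construction proceeds inductively using that $\Gamma_t(\omega) = \supp_{\cP_t(\omega)}\tilde{K}^*_{t+1}(\omega;\cdot)$ encodes exactly which positions become solvent one period later, so a $\tilde{K}_t$-solvent position differs from a $K_t$-solvent one by a transfer that is $\cP_t(\omega)$-q.s. in $-K_{t+1} + \ldots$ at time $t+1$.

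The main obstacle will be handling the measurability and the q.s.\ nature of the decomposition carefully: the sets $\tilde{K}^*_t$ only have analytic graph (Lemma \ref{lem:analytic_construction}), not Borel graph, so the selectors realizing the decomposition must be taken as universally measurable selectors of analytic-graph multifunctions, invoking the Jankov–von Neumann selection theorem, and one must check that the residual positions pushed forward to time $t+1$ genuinely lie in the relevant cone $\cP$-q.s.\ rather than merely pointwise. A secondary subtlety is the closure operation in $\overline{\conv(\Gamma_t)} + C^*_t$: the dual decomposition of $\tilde{K}_t$ involves a closure, so the "transfer" element realizing $\xi \in -\tilde{K}_t$ may only be approximable, forcing an $\varepsilon$-argument — but since the efficient friction hypothesis (Assumption \ref{ass:EF}) guarantees $\inter(\tilde{K}^*_t) \ne \emptyset$ (Proposition \ref{prop:nonempty}), one can perturb into the interior, carry out the exact decomposition there, and pass to the limit, so this should not cause a genuine loss in the infimum defining $\sh$.
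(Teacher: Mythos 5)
Your plan is correct and follows essentially the same route as the paper: the easy inequality via $K_t\subset\tilde{K}_t$ and $\tilde{K}_T=K_T$, and the hard one by dualizing \eqref{def:tildeK}, splitting each $\tilde{K}_t$-transfer into a $K_t$-part plus a residual in $\tilde{K}_{t+1}\cap C_t$ carried forward, iterating to $T$, and selecting measurably. The only divergence is your worry about the closure in the dual sum, which in the paper is moot: under the efficient friction hypothesis one has the exact identity $\tilde{K}_t=K_t+(\Gamma_t^*\cap C_t)$ (via Lemma 16 of \cite{BayraktarZhangMOR}), so no $\varepsilon$-perturbation is needed.
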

\begin{proof}
	Since $K_t\subset\tilde{K}_t$, the inequality $(\ge)$ is trivial.
	Let now $(y,\tilde{\eta})\in \R\times\cH^{\tilde{K}}$ be a superhedge for $G$. 
	We show that there exists $\eta\in\cH^K$ such that $\eta_T=\tilde{\eta}_T$ and, thus, $(y,\eta)$ is a superhedge for $G$.
	By definition, we can write $\tilde{\eta}_T=\sum_{t=0}^T-\tilde{k}_t$ for some $\tilde{k}_t\in\cL^0(\cF^u_t;\tilde{K}_t)$, for any $t\in\cI$.
	We observe that, from \eqref{def:tildeK},
	\[
	\tilde{K}_t=\tilde{K}^{**}_t=\big(K_t^*\cap\big(\overline{\conv (\Gamma_t)}+C^*_t\big)\big)^*=K_t+(\Gamma_t^*\cap C_t).
	\]
	From \cite[Lemma 8]{BayraktarZhangMOR}, $\tilde{k}_t=f+g$ with $f\in\cL^0(\cF^u_t;K_t)$ and $g\in\cL^0(\cF^u_t;\tilde{K}_{t+1}\cap C_t)$.
	Iterating the same procedure up to time $T-1$ and recalling that $\tilde{K}_T=K_T$ we obtain that
	\[
	\tilde{k}_t=f_t^t+\ldots f_T^t,\quad\text{ for some } f_s^t\in \cL^0(\cF^u_t;K_{s}),\  \forall s\in\{t,\ldots T\}. 
	\] 
	Moreover, $g_s^t:=\sum_{u=s+1}^Tf^t_u$ belongs to $\cL^0(\cF^u_t;\tilde{K}_{s+1}\cap C_s)$ for $s=t,\ldots,T-1$.
	Note that $f_s^t$ is defined only for $s\ge t$. 
	We set $f_s^t=0$ for $s< t$, so that we can rewrite $\tilde{k}_t=\sum_{s=0}^T f_s^t$.
	
	Define now $k_t:=\sum_{s=0}^t f_t^s$ and $\eta_t:=\sum_{u=0}^t-k_u$ for $t\in\cI$.
	Clearly $\eta_T=\tilde{\eta}_T$ so that
	\[
	y\unit+\tilde{\eta}_T-G\in K_T\quad \Rightarrow\quad y\unit+\eta_{T}-G\in K_T.
	\]
	We are only left to show that $\eta_t\in \cL^0(\cF^u_t;C_t)$ for any $t\in\cI$.
	To this aim observe that for $t=T$ it follows from $\eta_T=\tilde{\eta}_T$. 
	For $t=0,\ldots,T-1$, we have
	\[
	\sum_{u=0}^t\tilde{k}_u=\sum_{u=0}^t \left(\sum_{s=0}^t f_s^u+\sum_{s=t+1}^T f_s^u\right)= \sum_{s=0}^t (k_s+g_t^s)
	\]
	where, for the second equality we exchanged the order of summation in the first term and used the definition of $g_t^s$ in the second term.
	The above equation reads as $\eta_t=\tilde{\eta}_t+\sum_{s=0}^t g_t^s$.
	By construction, $g_t^s\in C_t$ $\cP$-q.s.
	Moreover, the admissibility of $\tilde{\eta}$ implies that $\tilde{\eta}_t\in C_t$ $\cP$-q.s. 
	By recalling that $C_t$ is a convex cone, the thesis follows.
	 \end{proof}
We now consider the superhedging problem in the frictionless market defined by $\hat{S}$.
Note that a trading strategy in $\admr$ (see Definition \ref{def:randomized_strategies}) could, in principle, depend on the variable $\theta$.
As this variable is only fictitious, a generic $\hat{\F}^u$-predictable process cannot consistently identify an element in $\adm$. 
To this aim we need to reduce the class of admissible strategies to those which only depend on the variable $\omega$.
\begin{definition}\label{def:strategies}
	A \emph{consistent} strategy $H=\{H_t\}_{t\in\cI}$ is an $\R^d$-valued process satisfying $H_{t+1}\in \cL^0(\cF^u_t\otimes\{\emptyset,\R^d\};C_t)$ for any $0\le t\le T-1$ and the following  \emph{self-financing} condition holds:
	\begin{equation}\label{eq:SF}
	\sup_{\theta\in\Theta_t}(H_{t+1}-H_t)\cdot \hat{S}_t(\cdot,\theta)=0\quad \cP\text{-q.s.}\quad \forall 0\le t\le T-1.
	\end{equation}
	We denote by $\admc$ the set of all self-financing consistent strategies.
\end{definition}
\begin{remark}Recall that the last component of $\hat{S}$ serves as a num\'eraire.
	The self financing condition for $H\in\admr$ is standard, namely, it requires that $-(H^d_{t+1}-H^d_t)$ coincides with $\sum_{i=1}^{d-1}(H^i_{t+1}-H^i_t)\hat{S}^i_t$ $\hat{\cP}$-q.s. On the other hand, a consistent strategy depends only on the $\omega$ variable and hence the position in the num\'eraire needs to be able to cover the worst case scenario for the price of $\hat{S}_t$, which explains \eqref{eq:SF}.
	We show below that, for any consistent strategy, the left hand side of \eqref{eq:SF} is measurable.
\end{remark}
Depending on the choice of the admissible strategies, two corresponding superhedging prices of a random variable $g$ can be computed in the enlarged market:
\begin{eqnarray}
\shr(g)&:=&\inf\big\{y\in\R:\exists H\in\admr\text{ such that }\ y+(H\circ\hat{S})_T\ge g,\quad \hat{\cP}\text{-q.s.}\big\},\label{eq: SH random}\\
\shc(g)&:=&\inf\big\{y\in\R:\exists H\in\admc\,\ \text{ such that }\ y+(H\circ\hat{S})_T\ge g,\quad \hat{\cP}\text{-q.s.}\big\}.\label{eq: SH consistent}
\end{eqnarray}
We want to show that the superhedging price of $G$ is equal to the superhedging price of $G\cdot\hat{S}_T$ in the frictionless market, using only consistent strategies.
Towards this aim, let us first elaborate on the self-financing condition for consistent strategies.
For any $0\le t\le T-1$, let $\Delta H_t:=H_{t+1}-H_t$ and define $F(\omega,x):=\sum_{i=1}^{d-1}\Delta H^i_{t}(\omega)x^i$, which is a Charath\'eodory map.
Recall that the set $\tilde{K}^{*,d-1}_t$ from Step 1 in Lemma \ref{lem:construction} has analytic graph and, thus, it is universally measurable (see e.g. \cite[Lemma 12]{BayraktarZhangMOR}).
From \cite[Example 14.15]{R}, the random set $F(\omega,\tilde{K}^{*,d-1}_t)$ is again $\cF^u_t$-measurable.
We define $\phi_t(\omega):=\sup F(\omega,\tilde K^{*,d-1}_t(\omega))$.
Observe that $\phi_t$ is $\cF^u_t$-measurable, indeed, for any $c\in\R$,
\[
\phi_t^{-1}((c,\infty])=\{\omega\in\Omega_t: F(\omega,\tilde K^{*,d-1}_t(\omega))\cap (c,\infty)\neq\emptyset \}\in\cF^u_t
\]
and $\phi_t^{-1}(\{\infty\})=\cap_{c\in\Q}\phi_t^{-1}((c,\infty])\in\cF^u_t$.
We also observe that the self-financing condition \eqref{eq:SF} can be rewritten as follows.  
\begin{equation}\label{eq:SF_short}
-\Delta H^d_t=\sup_{\theta\in\Theta_t}\sum_{i=1}^{d-1}\Delta H^i_t\ \hat{S}^i_t(\cdot,\theta)=\sup_{x\in \tilde K^{*,d-1}_t}\sum_{i=1}^{d-1}\Delta H^i_t\ x^i=\phi_t,
\end{equation}
where the second equality follows from Corollary \ref{cor:supports}.
Define $A^{\infty}:=\cup_{t=0}^{T-1}\phi_t^{-1}(\{\infty\})$.

\begin{lemma}\label{almost attainment}
	Let $\P\in\cP$, $H\in\admc$.
	\begin{itemize}
		\item Suppose that $\P(A^{\infty})>0$.
		Then, for any $\forall n\in\N$, there exists $\hat{\P}^n\in\hat{\cP}$ such that $\hat{\P}^n_{\mid_\Omega}=\P$ and 
		$\hat{\P}^n(\sum_{i=1}^{d-1}\Delta H^i_t\ \hat S^i_t\ge n)>0$ for some $0\le t\le T-1$;
		\item Suppose that $\P(A^{\infty})=0$.
		Then, for any $n\in\N$, there exists $\hat{\P}^n\in\hat{\cP}$ such that $\hat{\P}^n_{\mid_\Omega}=\P$ and 
		$\sum_{i=1}^{d-1}\Delta H^i_t\ \hat S^i_t\ge -\Delta H^d_t-\frac{1}{n}$, $\hat{\P}^n$-a.s. for any $0\le t\le T-1$.
	\end{itemize}
\end{lemma}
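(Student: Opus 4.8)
The statement concerns a single prior $\P\in\cP$ with disintegration $\{P_t\}_{t\in\cI}$, so the plan is to build the kernels $\hat P_t$ of $\hat\P^n$ one period at a time, using the freedom in choosing the ``$\theta$-marginal'' of $\hat P_t$ given $\P$. Recall that an element of $\hat\cP$ projecting onto $\P$ is obtained by specifying, for each $t$, a kernel $\hat P_t(\omega,\cdot)\in\hat\cP_t(\omega)$ whose $\Omega_1$-marginal is $P_t(\omega,\cdot)$; by Lemma~\ref{lem:non-empty} this can be done off the $\cP$-polar set $N_t$, and any such choice of $\theta$-kernels supported on $\gr\Theta_{t+1}$ works. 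So the task reduces to choosing, $\omega$ by $\omega$, a probability on $\Theta_{t+1}(\omega)$ (i.e.\ on the $\theta$-fibre) that makes $\hat S_{t+1}(\omega,\cdot)$ land appropriately relative to the hyperplane $\{x:\sum_{i<d}\Delta H^i_t(\omega)x^i=-\Delta H^d_t(\omega)\}$.

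For the second bullet ($\P(A^\infty)=0$): fix $n$. For $\cP$-a.e.\ $\omega\in\Omega_t$ we have $\phi_t(\omega)=\sup F(\omega,\tilde K^{*,d-1}_t(\omega))<\infty$, where $F(\omega,x)=\sum_{i<d}\Delta H^i_t(\omega)x^i$. By definition of supremum and a measurable-selection argument (applied to the $\cF^u_t$-measurable random set $\{x\in\tilde K^{*,d-1}_t(\omega): F(\omega,x)>\phi_t(\omega)-1/n\}$, which has analytic graph and nonempty values a.s.), we obtain an $\cF^u_t$-measurable selector $x^*_t(\omega)$ with $F(\omega,x^*_t(\omega))\ge\phi_t(\omega)-1/n=-\Delta H^d_t(\omega)-1/n$. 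Via the coordinatewise map $\theta^i=x^i/S^i_t(\omega)$ (well-defined since $S_t>0$ and recalling the definition \eqref{eq:shadow} of $\hat S$), this $x^*_t$ corresponds to an $\cF^u_t$-measurable selector $\theta^*_{t+1}(\omega)\in\Theta_{t+1}(\omega)$ (here one uses Corollary~\ref{cor:supports}, which identifies $\supp_{\cP_t(\omega)}\tilde K^{*,0}_{t+1}$ with the support of $\hat S_{t+1}(\omega,\theta;\cdot)$, together with the fact that $x^*_t\in\tilde K^{*,d-1}_t$ means the lifted point lies in the relevant support, hence in particular can be realized as $\hat S_{t+1}(\omega,\theta^*_{t+1})$ up to working in the support; more precisely one picks $\theta^*_{t+1}$ so that $\hat S_{t+1}(\omega,\theta^*_{t+1})$ is in $\inter(\tilde K^*_{t+1})$ arbitrarily close to the lifted $x^*_t$, which suffices after shrinking $1/n$). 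Then set $\hat P_t(\omega,\cdot):=P_t(\omega,\cdot)\otimes\delta_{\theta^*_{t+1}(\omega)}$, extend arbitrarily over the $\cP$-polar set $N_t$, and let $\hat\P^n:=\hat P_{-1}\otimes\cdots\otimes\hat P_{T-1}$. By construction $\hat\P^n\in\hat\cP$, $\hat\P^n_{\mid\Omega}=\P$, and $\sum_{i<d}\Delta H^i_t\hat S^i_t=F(\cdot,x^*_t)\ge-\Delta H^d_t-1/n$ holds $\hat\P^n$-a.s.\ for each $t$, since at time $t$ the coordinate $\hat S_{t+1}$ is the relevant one but $\Delta H_t$ is $\cF^u_t$-measurable and the inequality is evaluated at $\hat S_t$; care must be taken to evaluate at the correct time index, i.e.\ to choose $\theta^*_t$ at stage $t-1$ so that $\hat S_t(\omega,\theta^*_t)$ realizes $x^*_t$, which is where the backward/forward bookkeeping of indices matters.

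For the first bullet ($\P(A^\infty)>0$): there is $t$ with $P_0\otimes\cdots\otimes P_{t-1}$-positive mass on $\{\omega\in\Omega_t:\phi_t(\omega)=\infty\}$. On that event, for each $n$ the set $\{x\in\tilde K^{*,d-1}_t(\omega): F(\omega,x)\ge n\}$ is a nonempty $\cF^u_t$-measurable random set, so it admits an $\cF^u_t$-measurable selector $x^*_t$, hence (as above) a selector $\theta^*_{t+1}\in\Theta_{t+1}$ realizing $\sum_{i<d}\Delta H^i_t\hat S^i_t\ge n$ on a set of positive mass. Building $\hat P_t$ from this selector on the positive-mass event (and arbitrarily, within $\hat\cP_t$, elsewhere and off $N_t$) and composing the kernels as before yields $\hat\P^n\in\hat\cP$ with $\hat\P^n_{\mid\Omega}=\P$ and the desired positivity.

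\textbf{Main obstacle.} The delicate point is the measurable-selection step producing $\theta^*$: one must select from random sets that only have analytic (not Borel) graphs, which is why the Jankov--von Neumann theorem and the universal measurability of analytic sets (as already invoked throughout Section~\ref{sec:Random}) are essential; and one must correctly match the time indices so that the chosen $\theta$ at stage $t$ controls $\hat S_t$ appearing in $(H\circ\hat S)$, while $\Delta H_t$ is measurable with respect to the past. The identification via Corollary~\ref{cor:supports} of the conditional support of $\hat S_{t+1}(\omega,\theta;\cdot)$ with $\supp_{\cP_t(\omega)}\tilde K^{*,0}_{t+1}$ is what guarantees that the lifted point $x^*_t$ is (approximately) attainable as a value of $\hat S$, closing the argument.
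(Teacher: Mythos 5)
Your proposal is correct and follows essentially the same route as the paper: period-by-period Dirac kernels $\hat P_t = P_t\otimes\delta_{\theta^*}$ built from Jankov--von Neumann selectors of the analytic sets where the linear functional $F(\omega,\cdot)$ exceeds $n$ (resp.\ comes within $1/n$ of $\phi_t$), the identity \eqref{eq:SF_short} guaranteeing these sets are nonempty. The only cosmetic difference is that you select first in $\tilde K^{*,d-1}_t$ and then lift to $\Theta_t$ (requiring the small approximation into the interior that you correctly flag), whereas the paper selects directly in the $\theta$-variable from $\gr\Theta_t$ intersected with the Borel level sets of the Carath\'eodory map $\hat F(\omega,\theta)=F(\omega,\hat S_t(\omega,\theta))$; your time-index bookkeeping, though garbled mid-paragraph, is resolved correctly at the end.
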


\begin{proof}
	For a fixed $\P$, we might take a Borel-measurable version of $H$ (and therefore of $\phi_t$).
	We define the Charath\'eodory map $\hat{F}(\omega,\theta):=F(\omega,\hat S_t(\omega,\theta))$ with $0\le t\le T-1$ and $F$ as above, but with the difference that now $F$ is Borel-measurable in $\omega$.
	From \cite[Example 14.15 b)]{R}, the random sets
	\[
	\omega\rightarrow\{\theta\times \R^{d-1}: \hat{F}(\omega,\theta)\ge n\}\quad 
	\omega\rightarrow\{\theta\in \R^{d-1}: \hat{F}(\omega,\theta)\ge \phi_t(\omega)-\frac{1}{n}\},
	\]
	are Borel-measurable and, therefore, they have Borel-measurable graph.
	By intersecting their graphs with $\gr(\Theta_t)$ we obtain two analytic sets.
	The Jankov-Von Neumann Theorem provides the existence of universally measurable selectors $\bar\theta^{\infty}_t$ and $\bar\theta^{<\infty}_t$ respectively.
	
	If $\P(A_t^{\infty})>0$ for some $0\le t\le T-1$ we let $\bar\theta_t=\bar\theta^{\infty}_t$ and, for $s\neq t$, we let $\bar\theta_s$ be an arbitrary selector of $\Theta_s$.
	If $\P(A^{\infty})=0$ we let $\bar{\theta}_t=\bar\theta^{<\infty}_t$ for any $0\le t\le T-1$ and $\bar\theta_T$ an arbitrary selector of $\Theta_T$.
	Since $\P$ is fixed, we take Borel measurable versions of the above selectors.
	In both cases, recalling that the map $\theta\rightarrow\delta_\theta$ is an embedding of $\R^d$ into $\mathfrak{P}(\R^d)$, we construct a probability measure $\hat{\P}^n$ from the collection of kernels
	\[\hat{P}_{t}(\omega_0,\ldots,\omega_t;\ \omega',\theta'):=P_t(\omega_0,\ldots,\omega_t;\ d\omega')\otimes \delta_{\bar\theta_{t+1}(\omega_0,\ldots,\omega_t;\ \omega')}(\theta'),
	\]
	for $0\le t\le T-1$ and extend it to $\hat{P}_{-1}:=P_{-1}\otimes \delta_{\bar\theta_{0}}$ for an arbitrary $P_{-1}\in\mathfrak{P}(\Omega_1)$.
	The constructed $\hat{\P}^n$ satisfies $\hat{\P}^n_{\mid_\Omega}=\P$ and the desired properties. 
	 \end{proof}

\begin{lemma}\label{lem: eqSH}
	$\sh_{\tilde \K}(G)=\shc(G\cdot\hat{S}_T)$.
\end{lemma}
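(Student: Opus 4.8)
The plan is to prove the two inequalities separately by translating a superhedge of one problem into a superhedge of the other. For the inequality $\sh_{\tilde\K}(G) \ge \shc(G\cdot\hat S_T)$: start from $(y,\tilde\eta)$ with $\tilde\eta\in\cH^{\tilde K}$ and $y\unit + \tilde\eta_T - G \in \tilde K_T = K_T$ $\cP$-q.s. Write $\tilde\eta_t = \sum_{u=0}^t -\tilde k_u$ with $\tilde k_u \in \cL^0(\cF^u_u;\tilde K_u)$. The key observation is that for each $u$, since $\tilde k_u(\omega)\in\tilde K_u(\omega)=\tilde K^{**}_u(\omega)$ and $\tilde K^*_u \subset K^*_u$, one has $x\cdot\tilde k_u \ge 0$ for every $x\in\tilde K^{*,0}_u$; combined with Corollary \ref{cor:supports} this should let me read off that the increments $\Delta H_t := H_{t+1}-H_t$, for an appropriate candidate strategy $H$, can be taken to satisfy $\Delta H^i_t = -\tilde k^{\,i}_t$ in the first $d-1$ coordinates (suitably aggregated over the $-\tilde k$'s available by time $t$) with the $d$-th component fixed by the self-financing rule \eqref{eq:SF_short}, and that the resulting $H$ lies in $\admc$ (the constraint $H_{t+1}\in C_t$ follows because $\tilde\eta_t \in C_t$ and, by the decomposition used in Lemma \ref{lem:equality admissibility}, the running position is an element of $C_t$ $\cP$-q.s.). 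Then $(H\circ\hat S)_T = \sum_t \Delta H_t\cdot(\hat S_{t+1}-\hat S_t)$ telescopes and, using $\hat S_t\in\tilde K^*_t$ $\hat\cP$-q.s. and the solvency relation $y\unit+\tilde\eta_T - G\in K_T$ pushed through the dual pairing against $\hat S_T$, yields $y + (H\circ\hat S)_T \ge G\cdot\hat S_T$ $\hat\cP$-q.s., so $\shc(G\cdot\hat S_T)\le y$.

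For the reverse inequality $\sh_{\tilde\K}(G) \le \shc(G\cdot\hat S_T)$: take $(y,H)$ with $H\in\admc$ and $y+(H\circ\hat S)_T\ge G\cdot\hat S_T$ $\hat\cP$-q.s. Define $\tilde k_t$ in terms of $-\Delta H_t$; the self-financing condition \eqref{eq:SF_short} says exactly that $-\Delta H^d_t = \sup_{x\in\tilde K^{*,d-1}_t}\sum_{i<d}\Delta H^i_t x^i = \phi_t$, so that $\Delta H_t\cdot x \ge 0$ for all $x\in\tilde K^{*,0}_t$, i.e. $-\Delta H_t \in (\tilde K^{*,0}_t)^* $; intersecting with $C_t$ (from $H_{t+1}-H_t\in C_t - C_t \subset C_t$ since $C_t$ is a cone closed under the relevant operation, or more carefully using $H_{t+1}\in C_t$ and $C_t\subset C_{t+1}$) gives $-\Delta H_t\in \tilde K_t \cap (\text{constraint})$, which is the data defining an element of $\cH^{\tilde K}$. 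Setting $\tilde\eta_t := -\sum_{u\le t}(-\Delta H_u)$ appropriately, the q.s. inequality $y+(H\circ\hat S)_T \ge G\cdot\hat S_T$ must be converted back to the physical solvency statement $y\unit + \tilde\eta_T - G\in \tilde K_T = K_T$; here the point is that $\tilde K_T = K_T$ has $\inter(K^*_T)\ne\emptyset$ and, since the inequality holds $\hat\cP$-q.s. with $\hat S_T$ spanning a dense set of directions in $\tilde K^{*,0}_T$ (Corollary \ref{cor:supports}), the vector inequality follows $\cP$-q.s. by closedness of $K_T$. A subtlety handled by Lemma \ref{almost attainment}: if $\P(A^\infty)>0$ for some $\P\in\cP$, the self-financing consistent strategy forces $-\Delta H^d_t=+\infty$ on a non-null set, which cannot happen for a real-valued $H$, so $A^\infty$ is automatically $\cP$-polar and we may work on its complement; on $(A^\infty)^C$ the second bullet of Lemma \ref{almost attainment} provides, for each $\P$ and $n$, a lift $\hat\P^n\in\hat\cP$ with $\hat\P^n_{|\Omega}=\P$ along which $\sum_{i<d}\Delta H^i_t\hat S^i_t \ge -\Delta H^d_t - 1/n$, which is what lets us pass the $\hat\cP$-q.s. inequality down to a $\cP$-q.s. inequality after letting $n\to\infty$.

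I expect the main obstacle to be the bookkeeping in the correspondence between the \emph{running} physical position $\tilde\eta_t = \sum_{u\le t}-\tilde k_u$ and the \emph{running} strategy $H_{t+1}$: because a consistent strategy's $d$-th component is not free but slaved to the worst-case over $\theta$ via \eqref{eq:SF_short}, the naive identification $\Delta H_t \leftrightarrow -\tilde k_t$ must be corrected so that the $d$-th coordinates match the self-financing rule while the first $d-1$ coordinates still reproduce $\tilde\eta_T$; one must check this correction stays within the constraint cones $C_t$, which is where Assumption \ref{ass:EF} ($C_t\subset C_{t+1}$) and the cone decomposition from Lemma \ref{lem:equality admissibility} do the work. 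The second genuine technical point is the measurability/selection argument needed to produce a \emph{consistent} (i.e. $\omega$-only, $\cF^u_t$-measurable) strategy rather than a general $\hat\F^u$-predictable one from the $\tilde k_t$'s; this is essentially routine given that $\tilde K^{*,d-1}_t$ and $\Theta_t$ have analytic graphs (Lemmas \ref{lem:analytic_construction}, \ref{lem:construction}) and the measurability of $\phi_t$ established above, but it has to be written carefully. The remaining steps — telescoping, the dual pairing against $\hat S_T$, passing from $\hat\cP$-q.s. to $\cP$-q.s. via Lemma \ref{almost attainment}, and invoking $\tilde K_T=K_T$ together with Proposition \ref{prop:nonempty} to recover the cone-membership — are then straightforward.
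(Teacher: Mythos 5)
Your proposal follows the paper's proof essentially step for step: in one direction the identification $H_{t+1}=\eta_t$ with the num\'eraire component corrected through the self-financing rule \eqref{eq:SF_short} (the paper's $\delta_t$-shift), and in the other direction $k_t=H_t-H_{t+1}\in\tilde{K}_t$, the use of Lemma \ref{almost attainment} to discard the residual term $\sum_t k_t\cdot\hat{S}_t$, and Corollary \ref{cor:supports} to convert $(y\unit+\eta_T-G)\cdot\hat{S}_T\ge 0$ $\hat{\cP}$-q.s. into $\cP$-q.s. membership in $\tilde{K}_T$. The one place you compress more than the paper does is that last conversion, where ``closedness of $K_T$'' is really a Jankov--von Neumann selection of a direction $s_T\in\inter(\tilde{K}^{*,0}_T)$ with negative pairing, lifted to a measure in $\hat{\cP}$ to produce a contradiction -- but this is the same argument, not a different one.
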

\begin{proof}
	\emph{The inequality} $(\ge)$.
	If the set of superhedging strategies for $G$ is empty, then $\sh_{\tilde \K}(G)=\infty$ and the inequality holds trivially.
	Suppose that $(y,\eta)\in \R\times\cH^{\tilde{K}}$ is a superhedge for $G$.
	Define $H_{t+1}:=\eta_t$ for $t=0,\ldots, T-1$.
	Since $\tilde{K}_T$ is a convex cone and $\eta_T-\eta_{T-1}\in -\tilde{K}_T$ by admissibility, we have that 
	\[
	ye_d+\eta_{T}- G\in \tilde{K}_T\quad \Rightarrow \quad
	ye_d+\eta_{T-1}- G\in \tilde{K}_T.
	\]
	For any $\omega$ outside a $\cP$-polar set and for any $s_t\in \tilde{K}^{*,0}_t(\omega)$,
	\begin{eqnarray*}
		0&\le & \left(ye_d+\eta_{T-1}(\omega)- G(\omega)\right)\cdot s_T \\
		&\le&y+H_T(\omega)\cdot s_T+\sum_{t=0}^{T-1}k_t(\omega)\cdot s_t-G(\omega)\cdot s_T\\
		&=&y+ \sum_{t=0}^{T-1}H_{t+1}(\omega)\cdot(s_{t+1}-s_t)-G(\omega)\cdot s_T.\\
	\end{eqnarray*}
	where the second inequality follows from $k_t\in \tilde{K}_t$ for any $t\in\cI$, $\cP$-q.s. 
	Recalling that, from \eqref{eq:hatSinKtilde} we have $\hat{S}_t\in \tilde{K}^*_t$ $\hat{\cP}$-q.s., it follows $y+(H\circ\hat{S})_T\ge G\cdot\hat{S}_T$ $\hat{\cP}$-q.s.
	It remains to show that, without loss of generality, $\eta$ can be chosen such that $H$ is admissible.
	From $\eta\in\cH^{\tilde{K}}$ we have that $\Delta H_t=\eta_t-\eta_{t-1}\in -\tilde{K}_t=(-\tilde{K}^*_t)^*$.
	In particular, this implies $\Delta H_t\cdot \hat{S}_t\le 0$ $\hat{\cP}$-q.s. and, therefore,
	\[
	\delta_t:=\Delta H^d_t+\sup_{\theta\in\Theta_t}\sum_{i=1}^{d-1}\Delta H^i_t\ \hat{S}^i_t(\cdot,\theta)\le 0\quad \cP\text{-q.s.}	
	\]
	Consider the new strategy $\tilde{\eta}$ with $\tilde{\eta_t}=\eta_t-e_d\sum_{u=0}^t\delta_u$ for $t\in\cI$.
	Since $\Delta\tilde{H}^i_t=\Delta H^i_t$ for $i=1,\ldots d-1$ and $-\Delta \tilde{H}^d_t=-\Delta H^d_t+\delta_t$, the self-financing condition follows from \eqref{eq:SF_short}.
	Moreover, using again \eqref{eq:SF_short}, we also have $(\tilde{\eta}_t-\tilde{\eta}_{t-1})\cdot x\le 0$ for any $x\in \tilde K^{*,0}_t$, which implies $\tilde{\eta}_t-\tilde{\eta}_{t-1}\in-\tilde{K}_t$.
	Finally, $\tilde{\eta}_t\in \cL^0(\cF^u_t;C_t)$  since it coincides with $\eta_t$ on the first $d-1$ coordinates and the last one is unconstrained (see Remark \ref{rmk:unconstrained}).
	We conclude by observing that $ye_d+\tilde{\eta}_{T-1}- G=ye_d+\eta_{T-1}-e_d\sum_{t=0}^{T-1}\delta_t- G$
	which belong to $\tilde{K}_T$ since $\sum_{t=0}^{T-1}\delta_t\le 0$ $\cP$-q.s. and $\tilde{K}_T$ is a convex cone containing $\R^d_+$.
	
	\emph{The inequality} $(\le)$.
	If the set of superhedging strategies for $G\cdot\hat{S}_T$ is empty, then $\shc(G\cdot\hat{S}_T)=\infty$ and the inequality holds trivially.
	Suppose that $(y,H)\in \R\times\admc$ is a superhedge for $G\cdot\hat{S}_T$.
	We set $k_T:=0$, $k_{t}:=H_{t}-H_{t+1}$ and $\eta_t:=\sum_{u=0}^t-k_u$ for any $t\in\cI$.
	Suppose first $\P(k^d_t=\infty)>0$ for some $\P\in\cP$ and $t\in\cI$.
	From Lemma \ref{almost attainment}, for any $n\in\N$, there exists $\hat{\P}^n\in\hat{\cP}$ such that
	\[
	-(H^{d}_{t+1}-H^{d}_t)(\omega)\ge  \sum_{i=1}^{d-1}-k^i_t \ \hat S^i_t\ge n,\quad \hat{\P}^n\text{-a.s.}
	\]
	where the first inequality follows from \eqref{eq:SF}.
	Note that the first term on the left is the cost of rebalancing, at time $t$, the strategy $H$ which superhedge $G\cdot\hat{S}_T$ under $\hat{\P}^n$
	Since $n$ is arbitrary and $\hat{\P}^n\in\hat{\cP}$ for any $n\in\N$, we deduce that $\shc(G\cdot\hat{S}_T)=\infty$ and the inequality holds trivially.
	
	For the rest of the proof we suppose that $k^d_t$ is pointwise finite for any $t\in\cI$ (indeed, the $\cP$-q.s. version $k^d_t\mathbf{1}_{\{k^d_t<\infty\}}$ is again universally measurable).
	From \eqref{eq:SF_short},  $k_t(\omega)\cdot x\ge 0$ for any $x\in \tilde K^{*,0}_t$, which implies $k_t(\omega)\in \tilde K_t(\omega)$.
	We now rewrite the superhedging property of $(y,H)$ in terms of $(y,\eta)$.
	For any $(\omega,\theta)$ outside a $\hat{\cP}$-polar set, we have
	\begin{eqnarray}\label{eq:SHrewrite}
	0&\le & y+ \sum_{t=0}^{T-1}H_{t+1}(\omega)\cdot(\hat{S}_{t+1}-\hat{S_t})(\omega,\theta)-G(\omega)\cdot\hat{S}_T(\omega,\theta)\nonumber\\
	&=&y+\sum_{t=0}^{T-1}-k_t(\omega)\cdot(\hat{S}_T-\hat{S}_t)(\omega,\theta)-G(\omega)\cdot\hat{S}_T(\omega,\theta)\nonumber\\
	&=&\left(ye_d+\eta_{T}(\omega)- G(\omega)\right)\cdot\hat{S}_T(\omega,\theta)+\sum_{t=0}^{T-1}k_t(\omega)\cdot\hat{S}_t(\omega,\theta).
	\end{eqnarray}
	We claim that this implies:
	\begin{equation}\label{eq:SHnoterm}
	0\le\left(ye_d+\eta_{T}(\omega)- G(\omega)\right)\cdot\hat{S}_T(\omega,\theta),\quad\hat{\cP}\text{-q.s.}
	\end{equation}
	To prove the claim observe that the first term in \eqref{eq:SHrewrite} depends on $\theta$ only through the last component $\theta_T$, whereas, the second term in \eqref{eq:SHrewrite} depends only on the first $T-1$ components of $\theta$.
	Fix $n\in\N$, $\P\in\cP$. 
	From Lemma \ref{almost attainment} and \eqref{eq:SF_short}, there exists $\hat{\P}^n\in\hat{\cP}$ such that $\hat{\P}^n_{\mid_\Omega}=\P$ and 
	\[
	k_t\cdot \hat{S}_t =\sum_{i=1}^{d-1}k^i_t\hat{S}^i_t+k^d_t\le\frac{1}{n},\quad \hat{\P}^n\text{-a.s.}
	\]
	Since $n\in\N$ and $\P\in\cP$ are arbitrary, we deduce that \eqref{eq:SHnoterm} holds and the claim is proven.
	It remains to show that for $\xi=ye_d+\eta_{T}- G$, 
	\[
	\xi\cdot \hat{S}_T\ge 0\quad \hat{\cP}\text{-q.s.}\quad \Rightarrow\quad  \xi\in \tilde{K}_T\quad \cP\text{-q.s.}
	\]
	Suppose that, by contradiction, there exists a set $A$ and a probability $\P\in\cP$ such that $\P(A)>0$ and $\xi(\omega)\notin \tilde{K}_T(\omega)$ for any $\omega\in A$.
	Without loss of generality, we may take a Borel measurable version of $\xi$.
	Recall that $\tilde{K}_T=K_T$ is assumed to be Borel measurable, so that
	\[
	B:=\{(\omega,y)\in\Omega\times\R^d: \xi(\omega)\cdot y < 0 \}\cap \gr (\inter(\tilde{K}^{*}_T))
	\]
	is Borel measurable from \cite[Lemma A.1]{BN16}.
	Moreover, its projection on $\Omega$ contains $A$.
	Since $B$ is Borel, from Jankov-Von Neumann Theorem, there exists a universally measurable map $s_T:\Omega\rightarrow\R^d$ with $\gr s_T\subset B$. 
	Since $\gr s_T\subset \gr(\inter K^*_T)$ we can normalize with respect to the last component and from \cite[Theorem 14.16]{R} there exists $\cF^u_T$-measurable random vector $\bar \theta_T$ satisfying 
	\[
	\xi(\omega)\cdot \hat{S}_T(\omega,\bar \theta_T(\omega)) < 0,\quad \forall\omega\in A.
	\]
	Take also $\bar\theta_t$ a selector of $\Theta_t$, for any $0\le t\le T-1$.
	For any of the above selectors, we take a Borel measurable version.
	Consider now the probability measure $\hat{\P}\in\hat{\cP}$ obtained from the kernels
	\[\hat{P}_{t}(\omega_0,\ldots,\omega_t;\ \omega',\theta'):=P_t(\omega_0,\ldots,\omega_t;\ d\omega')\otimes \delta_{\bar\theta_{t+1}(\omega_0,\ldots,\omega_t;\ \omega')}(\theta'),
	\]
	with $\hat{P}_{-1}:=P_{-1}\otimes \delta_{\bar\theta_{0}}$ for an arbitrary $P_{-1}\in\mathfrak{P}(\Omega_1)$.
	Note that $\hat{\P}\in\hat{\cP}$ and $\hat{\P}_{|\Omega}=\P$.
	By construction,
	\[
	\hat{\P}(\xi\cdot \hat{S}_T< 0)\ge \P(A)>0,
	\]
	which contradicts the hypothesis.
	 \end{proof}

\paragraph{Equality of the dual problems.}
From \cite[Lemma 5.7]{BZ17} any $\hat{\Q}\in\hat{\cQ}$ admits a disintegration $(\hat{Q}_t)_{t=0,\ldots T-1}$ where $\hat{Q}_t$ is a universally measurable selector of 
\begin{eqnarray}\label{eq:supermtg_cond}
\hat{\cQ}_t(\hat{\omega}):=\big\{\hat{\P}\in\mathfrak{P}(\hat{\Omega}_1)&:& \hat{\P}\ll\hat{\cP}_t(\omega),\quad \E_{\hat{\P}}|\Delta\hat{S}_t(\hat{\omega};\cdot)|<\infty\quad \text{and}\nonumber\\
&& \E_{\hat{\P}}[y\cdot\Delta\hat{S}_t(\hat{\omega};\cdot)]\le 0\quad \forall y\in C_t(\omega)\big\},\quad \hat{\omega}\in\hat{\Omega}_t,
\end{eqnarray}
for $t=0,\ldots T-1$.
Analogously, the set of normalized SCPS $\tilde{\cS}^0$ for the market $\tilde{\K}$ is composed of couples $(Z,\Q)$ for which the disintegration $(Q_t)_{t=0,\ldots T-1}$ of $\Q$ satisfies
\begin{equation}\label{eq:SCPScond}
\E_{Q_t}|\Delta Z_t(\omega;\cdot)|<\infty\quad \text{and}\quad \E_{Q_t}[y\cdot \Delta Z_t]\le 0,\ \forall y\in C_t(\omega),
\end{equation} 
for $t=0,\ldots T-1$.
\begin{proposition}\label{prop:eqDual}
	For any random vector $G\in\cF^u$,
	\begin{equation}
	\sup_{(Z,\Q)\in\tilde{\cS}^0}\E_\Q[G\cdot Z_T]=\sup_{\hat{\Q}\in\hat{\cQ}}\E_{\hat{\Q}}[G\cdot \hat{S}_T].
	\end{equation}
\end{proposition}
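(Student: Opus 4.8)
\emph{Strategy.} The plan is to set up a correspondence between $\tilde{\cS}^0$ and $\hat{\cQ}$ that preserves the pairing with $G$ — namely $(Z,\Q)\mapsto\hat{\Q}$ with $\E_{\hat{\Q}}[G\cdot\hat{S}_T]=\E_\Q[G\cdot Z_T]$ and, conversely, $\hat{\Q}\mapsto(Z,\Q)$ with the same identity — and then to take suprema on both sides. The correspondence amounts to inserting, respectively extracting, the fictitious coordinate $\theta$: by \eqref{eq:shadow} the vector $\hat{S}_t(\omega,\cdot)$ runs over the affine hyperplane $\{z:z^d=1\}$ with $z^i=S^i_t(\omega)\theta^i$, and by \eqref{def:theta_t} it lies in $\inter(\tilde{K}^*_t(\omega))$ exactly when $\theta\in\Theta_t(\omega)$; this is the ``spanning'' property quantified by Corollary \ref{cor:supports}. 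Note that $\inter(K^*_t)$ consists of vectors with strictly positive components (being an open subset of $\R^d_+$), so both $S^i_t$ and, for $(Z,\Q)\in\tilde{\cS}^0$, $Z^i_t$ are positive.

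\emph{The inequality $(\le)$.} Given $(Z,\Q)\in\tilde{\cS}^0$ I would set $\theta_t:=(Z^1_t/S^1_t,\dots,Z^{d-1}_t/S^{d-1}_t)\in\cL^0(\cF^u_t;\Theta_t)$ (up to a $\Q$-null set), so that $\hat{S}_t(\omega,\theta_t(\omega))=Z_t(\omega)$, and let $\hat{\Q}$ be the image of $\Q$ under $\omega\mapsto(\omega,\theta_0(\omega),\dots,\theta_T(\omega))$, a probability carried by the graph of that map. To get $\hat{\Q}\ll\hat{\cP}$, take $\P\in\cP$ with $\Q\ll\P$ and build $\hat{\P}\in\hat{\cP}$ with $\hat{\P}_{|\Omega}=\P$ and $\hat{\Q}\ll\hat{\P}$ exactly as in the proofs of Lemma \ref{lem:non-empty} and Lemma \ref{almost attainment}, i.e.\ by pairing the disintegration $(P_t)$ of $\P$ with the Dirac kernels $\delta_{\bar\theta_{t+1}}$, where $\bar\theta_{t+1}$ equals $\theta_{t+1}$ off the ($\Q$-null, resp.\ $\cP$-polar by Proposition \ref{prop:nonempty}) sets $\{Z_{t+1}\notin\inter(\tilde{K}^*_{t+1})\}$ and $\{\inter(\tilde{K}^*_{t+1})=\emptyset\}$ and is redefined there to a universally measurable selector of $\Theta_{t+1}$. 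Since $\hat{\Q}$ is carried by the graph of $\theta(\cdot)$, every $H\in\cL(\hat{\F}^u_-;C)$ coincides $\hat{\Q}$-a.s.\ with the process $H'_{t+1}(\omega):=H_{t+1}(\omega,\theta(\omega))\in\cL^0(\F^u_-;C)$, and $(H\circ\hat{S})=(H'\circ Z)$ $\hat{\Q}$-a.s., which is a $\Q$-local-supermartingale because $(Z,\Q)\in\tilde{\cS}^0$; pushing it forward along the filtration-compatible map $\omega\mapsto(\omega,\theta(\omega))$ shows $\hat{\Q}\in\hat{\cQ}$, and finally $\E_{\hat{\Q}}[G\cdot\hat{S}_T]=\E_\Q[G\cdot Z_T]$ since $G$ depends only on $\omega$ and $\hat{S}_T(\omega,\theta(\omega))=Z_T(\omega)$.

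\emph{The inequality $(\ge)$.} Given $\hat{\Q}\in\hat{\cQ}$ I would put $\Q:=\hat{\Q}_{|\Omega}$, which satisfies $\Q\ll\cP$ since the $\Omega$-marginal of any element of $\hat{\cP}$ is in $\cP$, and $Z_t:=\E_{\hat{\Q}}[\hat{S}_t\mid\cF^u_t]$ (viewing $\cF^u_t\subset\hat{\cF}^u_t$ through the projection $\hat{\Omega}\to\Omega$), which descends to an $\F^u$-adapted process on $\Omega$ with $Z^d_t\equiv1$. By \eqref{eq:hatSinKtilde} and $\hat{\Q}\ll\hat{\cP}$, $\hat{S}_t\in\inter(\tilde{K}^*_t)$ $\hat{\Q}$-a.s.; since $\tilde{K}^*_t$ is an $\cF^u_t$-measurable closed convex set and $\inter(\tilde{K}^*_t)\neq\emptyset$ $\cP$-q.s.\ (Proposition \ref{prop:nonempty}), the conditional version of the elementary fact ``the barycenter of a probability concentrated on an open convex set lies in that set'' — which follows from a measurable separating-hyperplane argument — gives $Z_t\in\inter(\tilde{K}^*_t)$ $\Q$-a.s. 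For $H\in\cL^0(\F^u_-;C)$, extending $H$ constantly in $\theta$ to $\hat{H}\in\cL(\hat{\F}^u_-;C)$ and using that $H_{t+1}$ is $\cF^u_t$-measurable gives, by the tower property,
\[
\E_\Q\!\big[H_{t+1}\cdot(Z_{t+1}-Z_t)\mid\cF^u_t\big]=\E_{\hat{\Q}}\!\big[H_{t+1}\cdot\Delta\hat{S}_t\mid\cF^u_t\big]=\E_{\hat{\Q}}\!\big[\,\E_{\hat{\Q}}[H_{t+1}\cdot\Delta\hat{S}_t\mid\hat{\cF}^u_t]\mid\cF^u_t\big]\le0,
\]
so $(Z,\Q)\in\tilde{\cS}^0$; and $\E_\Q[G\cdot Z_T]=\E_{\hat{\Q}}[G\cdot\E_{\hat{\Q}}[\hat{S}_T\mid\cF^u_T]]=\E_{\hat{\Q}}[G\cdot\hat{S}_T]$ as $G$ is $\cF^u_T$-measurable. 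Taking suprema and combining with $(\le)$ proves the equality.

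\emph{The main obstacle.} The conceptual part is straightforward; the work is the measure theory, and the one genuinely delicate point is that $\hat{S}$ is merely a $\hat{\Q}$-local-supermartingale, so $Z_t:=\E_{\hat{\Q}}[\hat{S}_t\mid\cF^u_t]$ above is not literally a conditional expectation and must be read along a localizing sequence — equivalently, the whole argument should be carried out at the level of the one-step conditional dual sets $\hat{\cQ}_t(\hat{\omega})$ of \eqref{eq:supermtg_cond} and the one-step SCPS condition \eqref{eq:SCPScond}, which are matched through Corollary \ref{cor:supports} and then concatenated by a measurable selection of kernels exactly as in the proof of Theorem \ref{lem:NA_onestep} and in \cite{BZ17}. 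The other point to be careful about, and the only one that is new with respect to \cite{BDT17}, is obtaining the strict inclusion $Z_t\in\inter(\tilde{K}^*_t)$ rather than merely $Z_t\in\tilde{K}^*_t$; this comes for free from the barycenter fact above once $\inter(\tilde{K}^*_t)\neq\emptyset$ $\cP$-q.s.\ is known, so no perturbation of the dual elements is needed.
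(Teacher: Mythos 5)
Your proposal is correct and follows essentially the same route as the paper: for $(\le)$ the paper likewise inserts a $\theta$-process with $\hat S_t(\omega,\theta_t(\omega))=Z_t(\omega)$ (obtained via the implicit mapping theorem rather than your explicit formula $\theta^i_t=Z^i_t/S^i_t$, which is the same thing here) and forms $\hat\Q$ from the kernels $Q_t\otimes\delta_{\theta_{t+1}}$, and for $(\ge)$ it sets $\Q:=\hat\Q_{|\Omega}$, $Z_t:=\E_{\hat\Q}[\hat S_t\mid\cF_t]$ and verifies \eqref{eq:SCPScond} through the one-step disintegrations. Your extra care about $\hat\Q\ll\hat\cP$, the barycenter-in-the-interior step, and reading $Z_t$ through the one-step conditional sets only makes explicit what the paper leaves implicit.
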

\begin{proof}
	Suppose that $(Z,\Q)\in\tilde{\cS}^0$. 
	By construction, $\hat{S}_t$ is a Charath\'eodory map.
	From the implicit mapping theorem (see \cite[Theorem 14.16]{R}), there exists an $\F$-adapted process $(\theta_t)_{t\in\cI}$ with $\theta_t:\Omega_t\rightarrow\R^{d-1}$ and such that
	\begin{equation}\label{eq:implicit}
	\hat{S}_t(\omega,\theta_t(\omega))=Z_t(\omega).
	\end{equation}
	Denote by $(Q_t)_{t\in\cI_{-1}}$ the collection of conditional probabilities of $\Q$ given $\cF_{t-1}$, extended to $t=-1$ with an arbitrary $Q_{-1}\in\mathfrak{P}(\Omega_1)$.
	We use $\delta_{\theta_t}$ as a stochastic kernel and construct the probability $\hat{\Q}:= (Q_{-1}\otimes\delta_{\theta_0})\otimes\cdots \otimes(Q_{T-1}\otimes\delta_{\theta_{T}})$.
	Since $(Z,\Q)$ satisfies \eqref{eq:SCPScond} and \eqref{eq:implicit} holds, we deduce that that $\hat{\Q}\in\hat{\cQ}$. 
	Moreover, it is clear that $\E_\Q[G\cdot Z_T]=\E_{\hat{\Q}}[G\cdot \hat{S}_T]$.
	
	Conversely, suppose $\hat{\Q}\in\hat{\cQ}$.
	We define $(Z,\Q)$ as $\Q:=\hat{\Q}_{\mid_\Omega}$ and, for every $t\in\cI$, $Z_t:=\E_{\hat{\Q}}[\hat{S}_t\mid\cF_t]$.
	Denote by $(\hat{Q}_t)_{t\in\cI_{-1}}$ (respectively $(Q_t)_{t\in\cI_{-1}}$) the disintegration of $\hat{\Q}$ (respectively of $\Q$).
	From $\E_{\hat{Q}_t}[y\cdot\Delta\hat{S}_{t+1}(\hat{\omega};\cdot)]\le 0$ for any $y\in C_t$ and from the fact that $C_t$ is $\cF_t$-measurable, we deduce that $Q_t$ satisfies \eqref{eq:SCPScond} for any $t=0,\ldots, T-1$.
	Moreover, since $\hat{\Q}\ll\hat{\cP}$, by definition of $\hat{\cP}$ we obtain: i) $\Q\ll\cP$ and ii) $Z_t$ takes values in $\inter(\tilde{K}^*_t)$ $\Q$-a.s. for any $t\in\cI$. 
	We conclude that $(Z,\Q)\in\tilde{\cS}^0$. Moreover, we obviously have $\E_\Q[G\cdot Z_T]=\E_{\hat{\Q}}[G\cdot \hat{S}_T]$.
	 \end{proof}

\subsection{Proof of Theorem \ref{thm:main}. }

We are now ready to prove the main result of the section. Note that from Lemma \ref{lem:equality admissibility} and \ref{lem: eqSH}, we can only deduce the equality of the primal problems if one restricts to \emph{consistent} trading in the enlarged market (compare with \eqref{eq: SH consistent}). 
It remains to show that the same price is obtained with \emph{randomized} strategies as defined in \eqref{eq: SH random}, in other words, we need to prove that $\shc(G\cdot \hat{S}_T)=\shr(G\cdot\hat{S}_T)$. 
Denote by $USA(\hat{\Omega}_t,t)$ the class of $g:\hat{\Omega}_t\rightarrow\R$ upper semianalytic functions which depends on $\theta$ only through $\theta_t$, i.e.,
\[g(\omega,\theta)=g(\omega',\theta'),\qquad \forall (\omega,\theta),(\omega',\theta')\in\hat{\Omega}_t,\text{ with } \omega=\omega' \text{ and }\theta_t=\theta'_{t}.
\]
\paragraph{The one-period case.} We obtain first the results for $T=1$ which will constitute the building blocks for the general case.
\begin{proposition}\label{prop: eq primals} Suppose $T=1$ and $g\in USA(\hat{\Omega}_T,T)$. If $\NA$ holds true, then  $\shc(g)= \shr(g)$.
\end{proposition}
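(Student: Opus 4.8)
The plan is to prove the two inequalities separately, with the inequality $\shr(g)\ge\shc(g)$ being the substantive one. The easy direction $\shr(g)\le\shc(g)$ should follow because every consistent strategy induces a randomized strategy: given $H\in\admc$, the process $H$ (which depends on $\omega$ only) is certainly $\hat{\F}^u$-predictable, and the consistent self-financing condition \eqref{eq:SF}, via the reformulation \eqref{eq:SF_short}, forces $-\Delta H^d_t=\sup_{\theta\in\Theta_t}\sum_{i<d}\Delta H^i_t\hat{S}^i_t(\cdot,\theta)\ge\sum_{i<d}\Delta H^i_t\hat{S}^i_t$ $\hat{\cP}$-q.s.; hence $\Delta H_t\cdot\hat{S}_t\le 0$ $\hat{\cP}$-q.s., so the superhedging inequality $y+(H\circ\hat{S})_T\ge g$ is only improved when passing from the consistent payoff to the randomized one. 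One then checks that $\admr$ membership holds after possibly absorbing the slack into the num\'eraire component exactly as in the $(\ge)$ part of Lemma \ref{lem: eqSH}. So $\shc(g)$ is an admissible value for the $\shr$ infimum.

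For the reverse inequality $\shr(g)\ge\shc(g)$, I would argue that a randomized superhedge can be ``projected'' onto a consistent one without increasing the initial capital. Suppose $(y,H)\in\R\times\admr$ superhedges $g$, i.e. $y+(H\circ\hat{S})_1\ge g$ $\hat{\cP}$-q.s., with $H_1\in\cL^0(\hat{\cF}^u_0;C_0)$; in the one-period case the only predictable increment is $H_1$ at time $0$, which on $\hat{\Omega}_0=\hat{\Omega}_1$ could genuinely depend on $\theta_0$. The key observation is that $\hat{S}_0(\omega,\theta_0)$ ranges over $\inter(\tilde{K}^{*,0}_0(\omega))$, which is non-random by Assumption \ref{ass:EF} and Proposition \ref{prop:nonempty} (recall $K_0,C_0$ are non-random and $\Theta_0$ is non-random), so $\theta_0$ is a fictitious variable and we may fix it; since by the self-financing condition $(H_1-H_0)\cdot\hat{S}_0=0$ the value of the strategy at time $0$ is independent of the particular $\theta_0$, and using $\NA$ (through Theorem \ref{lem:NA_onestep} and the structure of $\hat{\cP}_0$) we can select a single $\theta_0$-independent version. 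Concretely, I would use the measurable selection machinery (Jankov--Von Neumann, as already invoked repeatedly) to extract a consistent strategy $\bar H$ with $\bar H_1\in\cL^0(\cF^u_0\otimes\{\emptyset,\R^d\};C_0)$ agreeing with $H$ in a worst-case sense, and verify that the worst case over $\theta_1$ of the terminal value $y+(\bar H\circ\hat{S})_1$ still dominates $g$ using that $g\in USA(\hat{\Omega}_1,1)$ depends on $\theta$ only through $\theta_1$ and that by Corollary \ref{cor:supports} the conditional support of $\hat{S}_1(\omega,\theta_0;\cdot)$ equals $\supp_{\cP_0}\tilde{K}^{*,0}_1(\omega;\cdot)$ regardless of $\theta_0$.

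The technically delicate point, which I expect to be the main obstacle, is handling the possible unboundedness of the self-financing adjustment $\phi_t$ — i.e. the set $A^\infty=\phi_0^{-1}(\{\infty\})$ from Lemma \ref{almost attainment}. If $\sup_{x\in\tilde{K}^{*,d-1}_0}\sum_{i<d}\Delta\bar H^i_0 x^i=+\infty$ on a set of positive $\cP$-measure, then no finite consistent strategy can match the randomized one there, but Lemma \ref{almost attainment} shows that in that case the randomized superhedging problem already has value $+\infty$ (one can build $\hat{\P}^n\in\hat{\cP}$ along which the rebalancing cost blows up), so the inequality $\shr(g)=\infty\ge\shc(g)$ is trivial; this is exactly the dichotomy exploited in the $(\le)$ direction of Lemma \ref{lem: eqSH}, and I would reuse it verbatim. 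On the complement $\{\phi_0<\infty\}$ the adjustment is finite and $\cF^u_0$-measurable, so the consistent strategy is well-defined and one finishes by the support identity of Corollary \ref{cor:supports} plus the $USA(\hat{\Omega}_1,1)$ hypothesis on $g$. I would also note that measurability of all constructed objects follows from the now-standard facts (analyticity of graphs, upper semianalyticity of integrals, Borel measurability of the marginal map) used throughout Section \ref{sec:Random}, so no new measurability input is needed beyond assembling these.
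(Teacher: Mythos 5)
Your easy direction ($\shc(g)\ge\shr(g)$, obtained by absorbing the nonnegative self-financing slack of a consistent strategy into the unconstrained num\'eraire component) is correct and is exactly what the paper dismisses as trivial. The gap is in the substantive direction. A randomized superhedge $(y,H_1)$ with $H_1\in\cL^0(\hat{\cF}^u_0;C_0)$ may genuinely depend on $\theta_0$, and for each fixed $\theta_0$ it only guarantees $y+H_1(\omega_0,\theta_0)\cdot\big(\hat{S}_1-\hat{S}_0(\omega_0,\theta_0)\big)\ge g$; a consistent superhedge must instead be a \emph{single} $\bar{H}_1(\omega_0)$ that works simultaneously against the worst case over all initial prices $\hat{S}_0(\omega_0,\theta_0)$, $\theta_0\in\Theta_0$ — this is what \eqref{eq:SF} encodes. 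Your justification for the passage — that the self-financing condition $(H_1-H_0)\cdot\hat{S}_0=0$ makes the time-$0$ value independent of $\theta_0$, so one can "select a single $\theta_0$-independent version" by Jankov--Von Neumann — is not valid: the self-financing condition only pins down $H_0^d$; it removes neither the $\theta_0$-dependence of $H_1$ nor that of the cost term $-H_1\cdot\hat{S}_0(\omega_0,\theta_0)$ in the wealth. What you are implicitly asserting is the interchange $\inf_{H}\sup_{\theta_0}(\cdots)=\sup_{\theta_0}\inf_{H}(\cdots)$, which is the whole content of the proposition and does not follow from measurable selection. Corollary \ref{cor:supports} (the conditional law of $\hat{S}_1$ does not depend on $\theta_0$) does not help either, because the problematic $\theta_0$-dependence enters through the initial price, not the terminal one.

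The paper closes this gap with a minimax argument: it truncates $\tilde{K}^{*,0}_0$ by compact sets $O^n\times\{1\}$, applies the minimax theorem of \cite[Corollary 2]{T72} to $(H,\theta)\mapsto\sup_{\hat{\P}\ll\hat{\cP}_0}\E_{\hat{\P}}[g-H\cdot(\hat{S}_1-\hat{S}_0)]$ (convex in $H$, affine in $\theta$) to obtain $\shc_n(g)=\shr_n(g)$ for every $n$, and then passes to the limit using $\varepsilon$-optimizers $H_n$: if $\{H_n\}$ is bounded, a limit point yields a consistent superhedge at price $\shr(g)+\varepsilon$; if unbounded, the normalized limit produces an arbitrage, contradicting $\NA$ — this limiting step is where $\NA$ actually enters, not where you place it. Your $A^\infty$ dichotomy from Lemma \ref{almost attainment} is the right tool for Lemma \ref{lem: eqSH}, but it does not substitute for the missing minimax/compactness step here.
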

\begin{proof}
	The inequality $\shc(g)\ge\shr(g)$ is trivial.
	For the converse, let $B_{n}(0)$ be the closed ball in $\R^{d}$ with center in $0$ and radius $n\in\N$.
	The intersection $\tilde{K}^{*,0}_0\cap B_{n}(0)$ is a compact set of the form $O^n\times\{1\}$ for $O^n$ a compact subset of $\R^{d-1}$.
	Recall the definition of $\hat{\cP}_0$ from \eqref{eq:def_random sets} and let $P_{-1}\in\mathfrak{P}(\Omega_1)$ be arbitrary\footnote{Recall that the cone $\tilde{K}^{*,0}_0$ is non-random. Thus, in the enlarged market, the only relevant variable is $\theta$.}. 
	We define
	\[
	\hat{\cP^n}:=\{(P_{-1}\otimes\delta_\theta)\otimes\hat{\P}:\theta\in O^n,\  \hat{\P}\in\hat{\cP}_0\}\subset\hat{\cP}.\]
	
	Denote by $\shc_n$ and $\shr_n$ the analogous of $\shc$ and $\shr$ in equations \eqref{eq: SH random} and \eqref{eq: SH consistent} with $\hat{\cP^n}$ replacing $\hat{\cP}$ and note that, by construction, $\{\shc_n(g)\}_{n}$ and $\{\shr_n(g)\}_{n}$ are increasing sequences bounded from above by $\shc(g)$ and $\shr(g)$ respectively.
	We use now a minimax argument as in \cite{BDT17} to deduce that
	\begin{eqnarray*}
		\shc_n(g)&=&
		\inf_{H\in C_0}\sup_{\theta\in O^n}\sup_{\hat{\P}\ll\hat{\cP}_0}\E_{\hat{\P}}[g-H\cdot(\hat{S}_1-\hat{S_0})]\\
		&=& \sup_{\theta\in O^n}\inf_{H\in C_0}\sup_{\hat{\P}\ll\hat{\cP}_0}\E_{\hat{\P}}[g-H\cdot(\hat{S}_1-\hat{S_0})]\\
		&=& \shr_n(g).
	\end{eqnarray*}
	To justify the above it is sufficient to observe that the function
	\[
	(H,\theta)\mapsto\sup_{\hat{\P}\ll\hat{\cP}_0}\E_{\hat{\P}}[g-H\cdot(\hat{S}_1-\hat{S_0})]=\sup_{\hat{\P}\ll\hat{\cP}_0}\E_{\hat{\P}}[g-H\cdot\hat{S}_1]-H\cdot\hat{S_0}(\omega,\theta),
	\]
	is convex in $H$ for $\theta$ fixed and affine in $\theta$ for $H$ fixed. We can thus apply the minimax theorem of \cite[Corollary 2]{T72}. 
	If $\shr(g)\ge\lim_{n\rightarrow\infty}\shr_n(g)=\infty$, then $\shr(g)\ge \shc(g)$ holds trivially and the proof is complete.
	Suppose the limit is finite. Let $\varepsilon>0$ be arbitrary and, for any $n\in\N$, let $H_n\in C_0$ be an $\varepsilon$-optimal strategy for $\shc_n(g)$, namely,
	\begin{equation}\label{eq:sequence_optimal}
	\shc_n(g)+\varepsilon + H_n\cdot (\hat{S}_1(\omega,\theta_1)-\hat{S_0}(\omega,\theta_0))\ge g(\omega,\theta_1),
	\end{equation}
	for any $(\omega,\theta_1)$ outside a $\hat{\cP}_0$-polar set and for any $\theta_0\in O^n$.
	If $\{H_n\}_{n\in\N}\subset C_0$ is bounded, it admits a convergent subsequence.
	Denote by $\bar H$ its limit. 
	From \eqref{eq:sequence_optimal},  $O^n\times\{1\}\uparrow\tilde{K}^{*,0}_0$ and $\shr(g)\ge \lim_{n\rightarrow\infty}\shr_n(g)=\lim_{n\rightarrow\infty}\shc_n(g)$, it holds
	\[
	\shr(g)+\varepsilon+ \bar H\cdot (\hat{S}_1-\hat{S_0})\ge g,\quad \hat{\cP}\text{-q.s}
	\]
	from which $\shc(g)\le \shr(g)+\varepsilon$.
	We show now that if $\{H_n\}_{n\in\N}\subset C_0$ is unbounded, it contradicts $\NA$.
	This together with $\varepsilon>0$ arbitrary yields the desired inequality.
	To see this divide by $c_n:=\|H_n\|$ both sides of \eqref{eq:sequence_optimal}. 
	Since $\shc_n(g)=\shr_n(g)$ is assumed to be bounded, $\shc_n(g)/c_n$ converges to $0$.
	$H_n/c_n$ belongs to the compact sphere of $\R^d$, thus, there exists $\bar{H}$ with $\|\bar H\|=1$ such that $H_n/c_n\rightarrow\bar{H}$ (up to extracting a subsequence).
	Note that $\bar{H}\in C_0$ since $C_0$ is a closed cone.
	By the same argument as above, this implies, 
	\[
	\bar H\cdot (\hat{S}_1-\hat{S_0})\ge 0,\quad \hat{\cP}\text{-q.s}.
	\]
	Recalling \eqref{eq:hatSinKtilde}, the condition $\NA$ implies $\bar H=0$, which is a contradiction since $\|\bar H\|=1$.
	This concludes the proof.
	 \end{proof}
Note that if $G:\Omega\rightarrow\R^d$ is a Borel-measurable vector, $G\cdot \hat{S}_T:\hat{\Omega}\rightarrow\R$ is a Borel-measurable function which depends on $\theta$ only through $\theta_T$.
In particular Proposition \ref{prop: eq primals}, together with Lemma \ref{lem:equality admissibility} and Lemma \ref{lem: eqSH}, yields $\pi_K(G)= \shr(G\cdot\hat{S}_T)$.

\begin{proof}[\textbf{Proof of Theorem \ref{thm:main} for $\mathbf{T=1}$.}]
	From Proposition \ref{prop: RNA implies NA}, $\NAP$ implies $\NA$ for the enlarged market.
	From Lemma \ref{lem:equality admissibility}, Lemma \ref{lem: eqSH} and Proposition \ref{prop: eq primals}, $\sh_\K(G)=\shr(G\cdot\hat{S}_T)$ which is the superhedging price of $G\cdot\hat{S}_T$, in the enlarged market.
	We show that
	\[
	\sh_\K(G)=\shr(G\cdot\hat{S}_T)=\sup_{\hat{\Q}\in\hat{\cQ}}\E_{\hat{\Q}}[G\cdot \hat{S}_T]=\sup_{(Z,\Q)\in\tilde{\cS}^0}\E_\Q[G\cdot Z_T]\le\sup_{(Z,\Q)\in\cS^0}\E_\Q[G\cdot Z_T].
	\]
	Indeed, the second equality follows from \cite[Theorem 4.3]{BZ17} after observing that, when $C_0$ is a cone, $A^\Q$ in the aforementioned paper is finite if and only if $\hat{\Q}\in\hat{\cQ}$.
	The third equality follows from Proposition \ref{prop:eqDual} and the last inequality follows from $\tilde{\cS}^0\subset \cS^0$.
	The converse inequality follows from standard arguments.
	From \cite[Theorem 4.3]{BZ17} an optimal superhedging strategy exists in the enlarged market, when the price is finite.
	The proofs of Lemma \ref{lem:equality admissibility} and \ref{lem: eqSH} provide the construction of an optimal strategy in the original market.
	 \end{proof}

\paragraph{The multi-period case.}
From \cite[Lemma 3.4]{BZ17}, $\hat{\cQ}_t$ as in \eqref{eq:supermtg_cond} has analytic graph for every $0\le t\le T-1$.
Given $g_{t+1}\in USA(\hat{\Omega}_{t+1},t+1)$, we define
\begin{eqnarray}
g_t:\hat{\Omega}_t\rightarrow \overline{\R} &\text{ as }&g_t(\hat{\omega})=\sup_{\hat{\Q}\in\hat{\cQ}_t(\hat{\omega})}\E_{\hat{\Q}}[g_{t+1}]\label{eq:gt1}\\
g'_t:\Omega_t\times\R^d\rightarrow \overline{\R} &\text{ as }&g'_t(\omega,h)=\sup_{\theta\in\Theta_t(\omega)}\{g_t(\omega,\theta)-h\cdot\hat{S}_t(\omega,\theta)\},\label{eq:gt2}
\end{eqnarray} 
with $\Theta_t$ as in \eqref{def:theta_t}. 
It is possible to show that $g_t\in USA(\hat{\Omega}_t,t)$. Indeed, the measurability property follows exactly from the same argument as in the first lines of the proof of \cite[Lemma 4.10]{BN13}. Moreover, $g_t$ depends on $\theta$ only through $\hat{S}_t$, thus, only through $\theta_t$ (see \eqref{eq:shadow}).
Recall now that the sum of two upper semi-analytic functions is again upper semi-analytic (see e.g. \cite[Lemma 7.30]{BS78}).
Since $\hat{S}_t$ is Borel measurable we deduce that $g_t-h\cdot\hat{S}_t$ is an upper semi-analytic function of $(\omega,h,\theta)$.
From Lemma \ref{lem:construction} and \cite[Proposition 7.47]{BS78} we deduce that $g'_t$ is upper semi-analytic.
Let $\tilde{\cP}_t$ be the set of probabilities on $\Omega_t\times\R^{d-1}\times\hat{\Omega}_1$ given by
\[ \tilde{\cP}_t(\omega):=\{(\delta_{\omega}\otimes\delta_{\theta})\otimes\hat{\P} : (\omega,\theta)\in \gr\Theta_t,\ \hat{\P}\in\hat \cP_{t}(\omega)\},\quad \omega\in\Omega_t.
\]
Recall that the random sets $\hat \cP_{t}$ and $\delta_{\Theta}$ from Corollary \ref{cor:deltas} have analytic graph. 
Since the map $x\mapsto \delta_x$ is an embedding and the map $(P,Q)\mapsto P\otimes Q$ is continuous (see \cite[Lemma 7.12]{BS78}), it follows that also $\tilde{\cP}_t$ has analytic graph.
\begin{lemma}\label{lem:norm int}
	For any $0\le t\le T-1$, the function $f:\Omega_t\times\R^d\times\R^d\rightarrow\overline{\R}$ defined as
	\begin{equation}\label{eq:normal int}
	f(\omega,h,x)=\sup_{\tilde{\P}\ll\tilde{\cP}_t(\omega)}\E_{\tilde{\P}}\big[g_{t+1}-h\cdot\hat{S}_t-x\cdot(\hat{S}_{t+1}-\hat{S}_{t})\big]
	\end{equation}
	is a universally measurable normal integrand.
\end{lemma}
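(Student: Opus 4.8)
The plan is to verify two things separately: (i) that for fixed $(\omega,h,x)$ the supremum defining $f$ is lower semicontinuous, actually continuous, in $x$ (and jointly appropriate in $(h,x)$), so that $f(\omega,\cdot,\cdot)$ is lower semicontinuous, i.e.\ $\epi f(\omega,\cdot,\cdot)$ is closed; and (ii) that $f$ is jointly universally measurable in all its arguments. Together these say precisely that $f$ is a universally measurable normal integrand in the sense of \cite[Chapter 14]{R}. I would set this up by writing the integrand inside the expectation as $\Psi(\omega,h,x,\hat\omega):=g_{t+1}(\hat\omega)-h\cdot\hat S_t(\omega,\cdot)-x\cdot(\hat S_{t+1}-\hat S_t)(\hat\omega)$ evaluated along the disintegration encoded by $\tilde\cP_t$, and recall from the paragraph preceding the lemma that $g_{t+1}\in USA(\hat\Omega_{t+1},t+1)$, that $\hat S$ is Borel, and that $\tilde\cP_t$ has analytic graph.

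For the measurability part (ii): the map $(\omega,h,x)\mapsto \tilde\cP_t(\omega)$ has analytic graph, so the set $\{(\omega,h,x,\tilde\P):\tilde\P\ll\tilde\cP_t(\omega)\}$ is analytic (using that $\ll$ against a class with analytic graph is preserved—this is the standard argument, e.g.\ as in \cite[Lemma 4.8]{BN13}). The function $(\omega,h,x,\tilde\P)\mapsto \E_{\tilde\P}[\Psi(\omega,h,x,\cdot)]$ is upper semianalytic: $\Psi$ is upper semianalytic in $(\omega,h,x,\hat\omega)$ since $g_{t+1}$ is upper semianalytic and $h\cdot\hat S_t$, $x\cdot(\hat S_{t+1}-\hat S_t)$ are Borel, and then integration against a kernel preserves upper semianalyticity as in \cite[Lemma 7.30, Proposition 7.48]{BS78}; one has to check $\E_{\tilde\P}[\cdot^-]<\infty$ or otherwise handle the $\pm\infty$ convention, but this is routine given $\hat S$ is finite-valued. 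Taking the supremum over the analytic-graph multifunction $\tilde\P\mapsto\{\tilde\P\ll\tilde\cP_t(\omega)\}$ keeps the function upper semianalytic by \cite[Proposition 7.47]{BS78}, hence universally measurable. This gives joint universal measurability of $f$, and in particular $\omega\mapsto f(\omega,\cdot,\cdot)$ is measurable in the appropriate sense.

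For the lower semicontinuity part (i): fix $\omega$. For fixed $\tilde\P\ll\tilde\cP_t(\omega)$, the map $(h,x)\mapsto \E_{\tilde\P}[\Psi(\omega,h,x,\cdot)]$ is affine, hence continuous, whenever the relevant integrals are finite; a supremum of lower semicontinuous (here continuous) functions is lower semicontinuous, so $f(\omega,\cdot,\cdot)$ is lower semicontinuous, possibly $+\infty$-valued, which is exactly what a normal integrand permits. The only subtlety is the integrability needed to make each affine piece genuinely finite/continuous rather than identically $+\infty$; one passes to the effective domain and notes that on its complement $f\equiv+\infty$, which is harmless. Combining (i) and (ii) with \cite[Example 14.29 / Corollary 14.34]{R} (a Carath\'eodory-type function that is lower semicontinuous in the decision variable and measurable in $\omega$ is a normal integrand) yields the claim.

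The main obstacle is the bookkeeping around the extended-real-valued arithmetic: ensuring that ``$\sup$ of affine functions'' genuinely produces a lower semicontinuous integrand and that the integrals $\E_{\tilde\P}[g_{t+1}-\cdots]$ are well-defined (not $\infty-\infty$) along all $\tilde\P\ll\tilde\cP_t(\omega)$. This is handled by splitting $g_{t+1}$ into positive and negative parts and observing that $g_{t+1}^-$ is bounded on the relevant supports by the efficient-friction structure (or, more cheaply, by just defining the conditional expectation via the upper-integral convention as in \cite{BN13}, for which upper semianalyticity is preserved regardless); once that convention is fixed, everything else is the routine composition of the measurability and semicontinuity facts cited above.
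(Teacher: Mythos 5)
Your two ingredients — joint upper semianalyticity of $f$ (via \cite[Lemma 7.30, Proposition 7.47]{BS78} and the analytic graph of $\tilde{\cP}_t$) and lower semicontinuity of $f(\omega,h,\cdot)$ as a supremum of affine functions — are both correct and both appear in the paper. The gap is in the final assembly. The gloss you give of \cite[Example 14.29 / Corollary 14.34]{R}, ``lsc in the decision variable and measurable in $\omega$ implies normal integrand,'' is false as stated: separate measurability plus lower semicontinuity does not yield normality (Example 14.29 requires continuity in $x$, which fails here since $f(\omega,h,\cdot)$ is only lsc and can jump to $+\infty$). The correct version, Corollary 14.34, requires $f$ to be $\mathcal{A}\otimes\cB_{\R^d}$-measurable for a \emph{complete} state sigma-algebra $\mathcal{A}$; but upper semianalyticity only gives measurability with respect to the universal completion of the Borel sigma-algebra on the \emph{product} $\Omega_t\times\R^d\times\R^d$, which is strictly larger than, and not of the form, $(\cB_{\Omega_t\times\R^d})^u\otimes\cB_{\R^d}$. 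So the joint-measurability criterion does not apply in this analytic-set setting, and your argument does not reach the conclusion.

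This is exactly why the paper instead uses the inf-projection criterion \cite[Corollary 14.41]{R}: besides lower semicontinuity in $x$, one must show that $(\omega,h)\mapsto\inf_{\tilde x\in B_\varepsilon(x)}f(\omega,h,\tilde x)$ is universally measurable for small closed balls $B_\varepsilon(x)$. Since infima (unlike suprema over analytic graphs) do not preserve upper semianalyticity, this requires the key step missing from your proposal: a minimax interchange $\inf_{\tilde x\in B_\varepsilon(x)}\sup_{\tilde\P\ll\tilde{\cP}_t(\omega)}=\sup_{\tilde\P\ll\tilde{\cP}_t(\omega)}\inf_{\tilde x\in B_\varepsilon(x)}$ (justified by compactness of $B_\varepsilon(x)$, convexity of $\{\tilde\P\ll\tilde{\cP}_t(\omega)\}$, and \cite[Corollary 2]{T72}), after which the inner infimum splits as an upper semianalytic term plus a Borel term $-\sup_{\tilde x\in B_\varepsilon(x)}\E_{\tilde\P}[\tilde x\cdot\Delta\hat S]$ handled via \cite[Example 14.15]{R}, and the outer supremum is treated by \cite[Proposition 7.47]{BS78}. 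Without this interchange you have no route to the measurability of the inf-projections, so the proposal as written does not establish normality. (Your remarks on the extended-real conventions are a side issue that the paper also treats lightly; they are not the problem.)
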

\begin{proof}
	Denote by $f^{\tilde{\P}}(\omega,h,x)$ the functions on the right hand side of \eqref{eq:normal int} for which the supremum is taken.
	From \cite[Corollary 14.41]{R} we need to check:
	\begin{enumerate}
		\item for any $(\omega,h)\in\Omega_t\times\R^d$, the function $f(\omega,h,\cdot)$ is lower semi-continuous.
		\item for any $x\in\R^d$ there exists $\varepsilon'>0$ such that, for all $\varepsilon\in(0,\varepsilon')$, the function
		\[ 
		\Psi_\varepsilon:(\omega,h)\mapsto \inf_{\tilde{x}\in B_\varepsilon(x)} f(\omega,h,\tilde x)
		\]
		is universally measurable, where $B_\varepsilon(x)$ denotes the closed ball of radius $\varepsilon$ centerd in $x$.
	\end{enumerate}
	Since $f^{\tilde{\P}}(\omega,h,\cdot)$ is continuous for every $\tilde{\P}$ and the pointwise supremum of continuous functions is lower semi-continuous, the first claim follows.
	Consider $\varepsilon>0$ arbitrarily.
	We first show that for any $(\omega,h)$,
	\[
	\Psi_\varepsilon(\omega,h)=\sup_{\tilde{\P}\ll\tilde{\cP}_t(\omega)}\inf_{\tilde{x}\in B_\varepsilon(x)} f^{\tilde{\P}}(\omega,h,\tilde x).
	\]
	This follows from the application of a minimax Theorem (see e.g. \cite[Corollary 2]{T72}).
	$B_\varepsilon(x)$ is a compact set and, for fixed $\tilde{x}$, the map $f^{\tilde{\P}}(\omega,h,\tilde x)$ is linear (hence concave) in $\tilde{\P}$.
	On the other hand $\{\tilde{\P}\ll\tilde{\cP}_t(\omega)\}$ is a convex set and, for fixed $\tilde{\P}$, the map $f^{\tilde{\P}}(\omega,h,\tilde x)$ is affine (hence convex) and continuous in $\tilde{x}$.
	We can rewrite $\inf_{\tilde{x}\in B_\varepsilon(x)} f^{\tilde{\P}}(\omega,h,\tilde x)=f_1(\omega,h,\tilde{\P})+f_2(\omega, \tilde{\P})$ where 
	\[
	f_1= \E_{\tilde{\P}}\big[g_{t+1}-h\cdot\hat{S}_t\big],\qquad f_2=-\sup_{\tilde{x}\in B_\varepsilon(x)}\E_{\tilde{\P}}\big[\tilde{x}\cdot(\hat{S}_{t+1}-\hat{S}_{t})\big].
	\]
	$f_1$ is an upper semi-analytic function on $\Omega\times\R^d\times\mathfrak{P}(\Omega)$ (see \cite[Lemma 4.10]{BN13}).
	We claim that $f_2$ is a Borel-measurable function (hence upper semi-analytic).
	Given the claim we observe that $f_1+f_2$ is again upper semi-analytic.
	Moreover, by the same argument for the measurability of \eqref{eq:gt2} above, we can conclude that $\Psi_\varepsilon=\sup_{\tilde{\P}\ll\tilde{\cP}_t(\omega)}(f_1+f_2)$ is upper semi-analytic on $\Omega\times\R^d$ and therefore universally measurable. This proves the second property.
	
	To conclude the proof it is enough to show that $f_2$ is Borel measurable.
	To see this observe that the function $\E_{\P}[\tilde x\cdot(\hat{S}_{t+1}(\omega;\cdot)-\hat{S}_{t}(\omega;\cdot))]$ is measurable in $(\omega,\P)$ and continuous in $\tilde x$, namely, it is a Carath\'eodory map. 
	From \cite[Example 14.15]{R} its composition with $B_\varepsilon(x)$ yields a Borel-measurable random set $A$ such that $\sup A= -f_2$.
	To conclude, observe that for an arbitrary $c\in\R$,
	\[
	\left\{(\omega,\P): f_2<c \right\}= \left\{(\omega,\P): A \cap(-c,\infty)\neq\emptyset \right\},
	\]
	is a Borel set from the measurability of $A$.
	 \end{proof}

\begin{remark}Note that, for any $\omega\in\Omega_t$, the right hand side of \eqref{eq:normal int} is equal to the $\inf\{K\in\R: X\le K\  \hat{\cP}_t(\omega)$-q.s$\}$, where $X$ is the random variable inside the expectation.
	In particular, this is equal to the minimal amount, at time $t$, for which the strategy $x$ is a superhedge for $g_{t+1}$ given that $h$ is the strategy used at time $t-1$.
	Moreover, by construction of $\tilde{\cP}_t$, the strategy $x$ with the initial amount $f(\omega,h,x)$, is a (conditional) superhedging strategy which depends only on the event $\omega$ and not on the event $(\omega,\theta)$.
	In the terminology of Definition \ref{def:strategies}, this construction provides \emph{consistent strategies}. 
\end{remark}
Recall that $\NAt$ is the conditional version of $\NA$ (see Theorem \ref{lem:NA_onestep}).
\begin{proposition}\label{prop:Concatenation}
	Let $0\le t\le T-1$ and assume $\NAt$.
	There exists a universally measurable map $\varphi:\Omega_t\times\R^d\rightarrow\R^d$ and a $\cP$-polar set $N$ such that for any $(\omega,h)\in N^C\times\R^d$ 
	\[
	g'_t(\omega,h)+h\cdot \hat{S}_t+\varphi(\omega,h)\cdot (\hat{S}_{t+1}-\hat{S}_{t})\ge g_{t+1},\ \tilde{\cP}_t\text{-q.s.}
	\]
	and $g'_t(\omega,h)>-\infty$.
\end{proposition}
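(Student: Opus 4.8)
First I would fix $t$ and restrict attention to $\omega$ outside the $\cP$‑polar set on which $\inter(\tilde K^*_t)=\emptyset$ (Proposition \ref{prop:nonempty}) or $\NAt$ fails (Theorem \ref{lem:NA_onestep}); set $\Phi_\omega(x):=\sup_{\hat\P\ll\hat\cP_t(\omega)}\E_{\hat\P}[g_{t+1}-x\cdot\hat S_{t+1}]$, which is convex and lower semicontinuous in $x$ as a supremum of affine maps, and $L_{\omega,h}(x,s):=\Phi_\omega(x)+(x-h)\cdot s$. Disintegrating the measures in $\tilde\cP_t(\omega)$, and using that $L_{\omega,h}(x,\cdot)$ is continuous while $\{\hat S_t(\omega,\theta):\theta\in\Theta_t(\omega)\}=\inter(\tilde K^*_t(\omega))\cap\{s^d=1\}$ is dense in $\tilde K^{*,0}_t(\omega)$, one obtains for $f$ as in \eqref{eq:normal int}
\[
f(\omega,h,x)=\sup_{s\in\tilde K^{*,0}_t(\omega)}L_{\omega,h}(x,s).
\]
By the Remark following Lemma \ref{lem:norm int}, $f(\omega,h,x)\le g'_t(\omega,h)$ is equivalent to $g'_t(\omega,h)+h\cdot\hat S_t+x\cdot(\hat S_{t+1}-\hat S_t)\ge g_{t+1}$ $\tilde\cP_t$‑q.s., so it suffices to produce a universally measurable $\varphi$ with $\varphi(\omega,h)\in C_t(\omega)$ and $f(\omega,h,\varphi(\omega,h))\le g'_t(\omega,h)$.

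The preliminary step is the primal description of $g_t$. By the one‑period frictionless superhedging duality with cone constraints (\cite[Theorem 4.3]{BZ17} applied conditionally to the market $(\hat S_t(\omega,\theta),\hat S_{t+1}(\omega,\theta;\cdot))$ with priors $\hat\cP_t(\omega)$ and constraint $C_t(\omega)$, legitimate since $\NAt$ holds), the quantity $g_t(\omega,\theta)$ of \eqref{eq:gt1} equals $\inf_{x\in C_t(\omega)}\big(\Phi_\omega(x)+x\cdot\hat S_t(\omega,\theta)\big)$ and is $>-\infty$. Hence \eqref{eq:gt2} reads
\[
g'_t(\omega,h)=\sup_{\theta\in\Theta_t(\omega)}\ \inf_{x\in C_t(\omega)}L_{\omega,h}\big(x,\hat S_t(\omega,\theta)\big),
\]
and picking any $\theta_0\in\Theta_t(\omega)$ gives $g'_t(\omega,h)\ge g_t(\omega,\theta_0)-h\cdot\hat S_t(\omega,\theta_0)>-\infty$, the last assertion. (If $g_t(\omega,\theta)=+\infty$ for some $\theta$ then $g'_t(\omega,h)=+\infty$ and there is nothing to prove, so assume $g'_t(\omega,h)<\infty$; this forces $g_{t+1}<\infty$ $\hat\cP_t(\omega)$‑q.s., and I take $g_{t+1}>-\infty$ $\hat\cP_t(\omega)$‑q.s., as is preserved along the recursion under $\NAP$.)

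The heart is then $\inf_{x\in C_t(\omega)}f(\omega,h,x)\le g'_t(\omega,h)$ with the infimum attained — the conditional version of Proposition \ref{prop: eq primals}, whose argument I would reproduce (the reverse inequality being immediate from the two displays). As $L_{\omega,h}$ is convex lower semicontinuous in $x\in C_t(\omega)$ and affine, hence concave and continuous, in $s$, and $\tilde K^{*,0}_t(\omega)\cap B_n(0)$ is compact convex, the minimax theorem \cite[Corollary 2]{T72} gives
\[
\inf_{x\in C_t(\omega)}f_n(\omega,h,x)=\sup_{s\in\tilde K^{*,0}_t(\omega)\cap B_n(0)}\ \inf_{x\in C_t(\omega)}L_{\omega,h}(x,s),\qquad f_n:=\sup_{s\in\tilde K^{*,0}_t(\omega)\cap B_n(0)}L_{\omega,h}(\cdot,s),
\]
with $f_n(\omega,h,\cdot)\uparrow f(\omega,h,\cdot)$; the right‑hand side is $\le g'_t(\omega,h)$ because $s\mapsto\inf_{x\in C_t(\omega)}L_{\omega,h}(x,s)$ is concave and upper semicontinuous, so its supremum over the closed section $\tilde K^{*,0}_t(\omega)$ is already attained on the dense relatively open part and equals $g'_t(\omega,h)$. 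Taking $\bar x_n\in C_t(\omega)$ that is $\varepsilon$‑optimal for $\inf_x f_n(\omega,h,\cdot)$ and letting $n\to\infty$: if $\{\bar x_n\}$ is bounded, a limit point $\bar x\in C_t(\omega)$ satisfies $f(\omega,h,\bar x)\le g'_t(\omega,h)+\varepsilon$ by lower semicontinuity of the $f_n$ and $f_n\uparrow f$; if $\{\bar x_n\}$ is unbounded, dividing by $\|\bar x_n\|$ and passing to the limit produces $\bar x\in C_t(\omega)$ with $\|\bar x\|=1$ and $\bar x\cdot(\hat S_{t+1}-\hat S_t)\ge0$ $\tilde\cP_t(\omega)$‑q.s., which by the separation argument of Proposition \ref{prop: RNA implies NA} (using $\hat S_t\in\inter(\tilde K^*_t)$, \eqref{eq:hatSinKtilde}, and \eqref{def:tildeK}) forces $\bar x=0$, a contradiction; letting $\varepsilon\downarrow0$ gives the inequality. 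The same normalization shows $f(\omega,h,\cdot)$ is level‑bounded on every section of $C_t(\omega)$ transversal to $e_d$ — along $e_d$ the map $f(\omega,h,\cdot)$ is constant, since $e_d\cdot\hat S_{t+1}=e_d\cdot s=1$ — so its infimum over $C_t(\omega)$ is attained.

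Finally, $f$ is a universally measurable normal integrand in $x$ with parameter $(\omega,h)$ (Lemma \ref{lem:norm int}); adding the indicator of the $\cF_t$‑measurable closed cone $C_t(\cdot)\cap\{x^d=0\}$ preserves this, and by the previous step its $x$‑sections are proper and level‑bounded off a $\cP$‑polar set $N$. The measurable selection theorem for parametric minimization of normal integrands \cite[Theorem 14.37]{R} then yields a universally measurable $\varphi:\Omega_t\times\R^d\to\R^d$ with $\varphi(\omega,h)\in C_t(\omega)$ and $f(\omega,h,\varphi(\omega,h))=\inf_{x\in C_t(\omega)}f(\omega,h,x)\le g'_t(\omega,h)$ for $(\omega,h)\in N^C\times\R^d$; by the Remark after Lemma \ref{lem:norm int} this is exactly the claimed inequality, and $g'_t(\omega,h)>-\infty$ was shown above. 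The main obstacle is the third step: transferring the truncation–minimax–no‑arbitrage argument of Proposition \ref{prop: eq primals} to the conditional market and, in particular, extracting the level‑boundedness needed both for the pointwise attainment and for the measurable selection.
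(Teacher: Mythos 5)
Your proof is correct and follows essentially the same route as the paper's: identify $g'_t$ with the consistent conditional superhedging price via the truncation--minimax--no-arbitrage argument of Proposition \ref{prop: eq primals} together with \cite[Theorem 4.3]{BZ17} (which gives finiteness and attainment), and then select a minimizer of the normal integrand $f$ from Lemma \ref{lem:norm int} measurably. The only cosmetic difference is the final selection step: the paper takes a measurable selector of the sublevel set $\{x: f(\omega,h,x)\le g'_t(\omega,h)\}$ (non-empty by the already-established attainment), whereas you invoke parametric minimization and therefore argue level-boundedness explicitly; both are valid.
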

\begin{proof}
	Define the \emph{consistent} conditional superhedging price of $g_{t+1}$ given $(\omega,h)$ as the map
	\[
	(\omega,h)\mapsto\inf\big\{y\in\R:\exists H\in C_t(\omega)\,:\, y+h\cdot \hat{S}_t+H\cdot (\hat{S}_{t+1}-\hat{S}_{t})\ge g_{t+1},\quad \tilde{\cP}_t(\omega)\text{-q.s.}\big\}.
	\]
	The fact that $g'_t(\omega,h)$ is the consistent conditional superhedging price of $g_{t+1}$, follows from the same minimax argument of Proposition \ref{prop: eq primals} and \cite[Theorem 4.3]{BZ17}.
	Moreover, from Theorem \ref{lem:NA_onestep}, $\NAt$ holds outside a polar set $N$. Again from \cite[Theorem 4.3]{BZ17}, $g'_t(\omega,h)>-\infty$ on $N^C\times\R^d$ and the infimum is attained. It remains to show that a superhedging strategy can be chosen in a measurable way. 
	From Lemma \ref{lem:norm int} the map $f$ defined in \eqref{eq:normal int} is a universally measurable normal integrand.
	From \cite[Proposition 14.33]{R} and recalling that $g'_t$ is upper semi-analytic (hence universally measurable), the map
	\[
	\Psi(\omega,h):=\{x\in\R^d: f(\omega,h,x)\le g'_t(\omega,h)\}
	\]
	is a closed-valued universally measurable random set.
	The desired $\varphi$ is any measurable selector of $\Psi$ (which exists from, e.g. \cite[Corollary 14.6]{R}).
	 \end{proof}

\begin{proof}[\textbf{Proof of Theorem \ref{thm:main} for $\mathbf{T>1}$.}]
	We first show that for any $g\in USA(\hat{\Omega}_T,T)$, 
	\begin{equation}\label{duality}
	\shc(g)=\shr(g)=\sup_{\hat{\Q}\in\hat{\cQ}}\E_{\hat{\Q}}[g].
	\end{equation}
	For $T=1$, \eqref{duality} follows from Proposition \ref{prop: eq primals} and \cite[Theorem 4.3]{BZ17}. We prove the general case by induction. 
	Suppose that for $T=1,\ldots,t$ equation \eqref{duality} is proven.
	Denote by $\shc_{u}$ the (consistent) superhedging price if the terminal time is $u$. In particular, $\shc_T=\shc$ defined in \eqref{eq: SH consistent}.
	Let $g_{t+1}\in USA(\hat{\Omega}_{t+1},t+1)$ and define $g_t$ and $g'_t$ as in \eqref{eq:gt1} and \eqref{eq:gt2} respectively.
	We claim that $\shc_{t+1}(g_{t+1})\le\shc_{t}(g_{t})$.
	Denote by ${\hat{\cP}}_{\mid_t}$ the restriction of $\hat{\cP}$ to $\hat{\Omega}_t$. Similarly for $\hat{\cQ}_{\mid_t}$.
	Consider an arbitrary $(y,H)\in\R\times\admr$ satisfying $y+(H\cdot\hat{S})_t\ge g_{t}$ $\hat{\cP}_{\mid_t}$-q.s.  
	By rewriting the previous inequality we observe that  
	\[
	y+(H\cdot \hat{S})_{t-1}-H_t\cdot\hat{S}_{t-1}\ge g_t-H_t\cdot\hat{S}_t,\quad \hat{\cP}_{\mid_t}\text{-q.s.}
	\]
	which, in turn, implies
	\[
	y+(H\cdot \hat{S})_{t-1}-H_t\cdot\hat{S}_{t-1}\ge g'_t(\cdot,H_t)\quad \hat{\cP}_{\mid_t}\text{-q.s.}
	\]
	Given the strategy $(y,H_1,\ldots,H_t)$, Proposition \ref{prop:Concatenation} provides a universally measurable random vector $H_{t+1}=\varphi(\cdot,H_t)$ such that the strategy $(y,H_1,\ldots,H_t,H_{t+1})$ satisfies $y+(H\cdot\hat{S})_{t+1}\ge g_{t+1}$ $\hat{\cP}$-q.s. 
	From $(y,H)$ above being arbitrary, the claim is proven. We deduce that
	\[
	\shc_{t+1}(g_{t+1})\le\shc_{t}(g_{t})=\shr_{t}(g_{t})=\sup_{\hat{\Q}\in\hat{\cQ}_{\mid_t}}\E_{\hat{\Q}}[g_t]\le \sup_{\hat{\Q}\in\hat{\cQ}}\E_{\hat{\Q}}[g_{t+1}],
	\]
	where the equalities follow from the inductive hypothesis and the second inequality follows from a standard pasting argument. 
	By definition, $\shr_{t+1}(g_{t+1})\le\shc_{t+1}(g_{t+1})$, moreover, the inequality $\sup_{\hat{\Q}\in\hat{\cQ}}\E_{\hat{\Q}}[g_{t+1}]\le \shr_{t+1}(g_{t+1})$ is standard. We conclude that \eqref{duality} holds for $T=t+1$.
	We now choose $g=G\cdot\hat{S}_{T}$, which is Borel-measurable by assumption.
	From Lemma \ref{lem:equality admissibility}, Lemma \ref{lem: eqSH}, Proposition \ref{prop:eqDual} and \eqref{duality}, we deduce
	\[
	\sh_\K(G)= \sup_{(Z,\Q)\in\tilde{\cS}^0}\E_\Q[G\cdot Z_{T}]\le\sup_{(Z,\Q)\in\cS^0}\E_\Q[G\cdot Z_{T}].
	\]
	Again, the converse inequality follows by standard arguments and the duality follows (recall also the discussion before Remark \ref{rmk:unconstrained}).
	Finally, the attainment property in the frictionless market follows from \cite[Theorem 6.1]{BZ17}. 
	The proofs of Lemma \ref{lem:equality admissibility} and Lemma \ref{lem: eqSH} provide the construction of an optimal strategy in the original market.
	 \end{proof}
\subsection{The case with options.}
We now consider the case where a finite number of options $\varphi_1,\ldots,\varphi_e$  are available for semi-static trading.
In this section we show that this case can be embedded in the previous one.
For any $k=1,\ldots,e$, we assume that $\varphi_k:\Omega\to\R^d$ is a Borel-measurable function representing the terminal payoff of an option, in terms of physical units of an underlying $d$-dimensional asset.
Any $\varphi_k$ has bid and ask price at time $0$ denoted, respectively, by $b_k$ and $a_k$.
We set $\Phi:= [\varphi_1;\cdots;\varphi_e;-\varphi_1;\cdots;-\varphi_e]$ with corresponding prices  $p:=(a_1,\ldots,a_e,-b_1,\ldots,-b_e)^T$.
$\Phi$ takes values in $\R^{d\times m}$ and $p\in\R^m$ with $m:=2e$.
For ease of notation we relabel the options and incorporate their price in the payoff so that $\Phi=[\phi_1-p_1\unit;\cdots;\phi_m-p_m\unit]$.
In addition, we suppose that we are given a dynamic trading market $(\K,C)$ satisfying all the hypothesis of Section \ref{sec:main}.
An admissible strategy has the form $\bar{\eta}:=(\eta,\alpha)$ with $\eta\in\adm$ a dynamic strategy and $\alpha\in\R^{m}_+$.
\begin{definition} We say that $\NAPo$ holds if $\NAP$ holds for the dynamic trading market $(\K,C)$ and $\eta_T+\Phi\alpha\in K_T$ $\cP$-q.s implies $\alpha=0$.
\end{definition}
The (semi-static) superhedging price of $G:\Omega\to\R^d$ is defined as
\begin{equation}\label{eq:defSHsemi}
\sh_{\K,\Phi}(G):=\inf\left\{y\in\R:\exists \bar{\eta}\in\adm\text{ s.t. } y\unit+\eta_{T}+\Phi\alpha-G\in K_T,\ \cP\text{-q.s.}\right\},
\end{equation}
where $\unit$ is the $d^{th}$ vector of the canonical basis of $\R^d$.
Finally, we define the set $\cS_{\Phi}:=\{(Z,\Q)\in\cS: \E_\Q[\varphi_k\cdot Z_T]\in(b_k,a_k)\ \forall k=1,\ldots e\}$.
\begin{theorem}\label{thm:options}
	Assume $\NAPo$. For any Borel-measurable random vector $G$,
	\begin{equation}
	\sh_{\K,\Phi}(G)= \sup_{(Z,\Q)\in\cS_{\Phi}}\E_\Q[G\cdot Z_T].
	\end{equation}
	Moreover, the superhedging price is attained when $\sh_{\K,\Phi}(G)<\infty$.
\end{theorem}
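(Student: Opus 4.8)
The plan is to reduce Theorem~\ref{thm:options} to Theorem~\ref{thm:main} by embedding the semi‑static market into a purely dynamic market with frictions and constraints. Since replacing $\phi_k$ by $\phi_k+c_k\unit$ and $p_k$ by $p_k+c_k$ leaves $\Phi\alpha$ unchanged, I would first assume, without loss of generality, that $p_k>0$ for every $k$; recall also that, after Remark~\ref{rmk:unconstrained}, the num\'eraire component of each $C_t$ is unconstrained. On $\R^{d+m}$ I would then define the extended market $(\bar\K,\bar C)$ by $\bar C_t:=C_t\times\R^m_+$, by $\bar K_t:=(K_t\times\{0\})+\R^{d+m}_+$ for $1\le t\le T-1$, by $\bar K_0:=(K_0\times\{0\})+\R^{d+m}_+ +\cone\{p_k[\unit;0]-e_{d+k}:k\le m\}$, and by $\bar K_T:=(K_T\times\{0\})+\R^{d+m}_+ +\cone\{e_{d+k}-[\phi_k;0]:k\le m\}$; here the new $k$-th coordinate stands for one unit of option $k$, acquired at time $0$ against $p_k$ units of $\unit$ and converted into its physical payoff $\phi_k$ at maturity. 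Computing the dual cones --- $\bar K_t^*=K_t^*\times\R^m_+$ for intermediate $t$, $\bar K_0^*=\{[w;v]:w\in K_t^*,\ 0\le v_k\le p_k w^d\}$ (where $p_k>0$ is used), and $\bar K_T^*=\{[w;v]:w\in K_T^*,\ v_k\ge\max(0,w\cdot\phi_k)\}$ --- one checks that $\inter(\bar K_t^*)\neq\emptyset$ for all $t$, so $(\bar\K,\bar C)$ satisfies the standing hypotheses and Assumption~\ref{ass:EF}. This embedding acts only on the asset space, not on the probabilistic structure, so no new quasi-sure subtleties arise.

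Next I would prove the primal identity $\sh_{\bar\K}([G;0])=\sh_{\K,\Phi}(G)$. The inequality $\le$ is direct: to a semi-static superhedge $(y,\eta,\alpha)$ one associates $\bar\eta_t:=[\eta_t-\langle p,\alpha\rangle\unit;\alpha]$, which lies in $\cH^{\bar K}$ (the num\'eraire component being unconstrained, using the decomposition $\bar k_0=[k_0;0]+\sum_k\alpha_k(p_k[\unit;0]-e_{d+k})$, $\bar k_s=[k_s;0]$) and superhedges $[G;0]$ at the same cost $y$. For $\ge$ one decomposes an admissible $\bar\eta$ into its dynamic part, the options purchased at time $0$, and the options converted at maturity, and checks that retaining only the converted options yields a semi-static superhedge of no larger cost, because $p_k>0$ makes over-purchasing strictly wasteful. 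The same dictionary gives that $\NAP$ for $\bar\K$ is equivalent to $\NAPo$: at times $t\le T-1$ the implication reduces to $\NAP$ for $\K$ together with $p_k>0$, while at $t=T$ a nonzero element of $\bar A_T\cap\cL^0(\cF^u_T;\bar K_T)$ produces $\eta\in\cH^K$ and $\alpha\in\R^m_+$, not both zero, with $\eta_T+\Phi\alpha\in K_T$ $\cP$-q.s., contradicting $\NAP$ if $\alpha=0$ and the second clause of $\NAPo$ if $\alpha\neq0$. I expect this strategy-matching at the terminal date --- in particular the correct bookkeeping of the num\'eraire/price terms and of the constraint $C_T$ under conversions --- to be the most delicate point of the whole argument.

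Granting this, Theorem~\ref{thm:main} applied to $(\bar\K,\bar C)$ gives $\sh_{\K,\Phi}(G)=\sh_{\bar\K}([G;0])=\sup_{(\bar Z,\bar\Q)\in\bar\cS}\E_{\bar\Q}[G\cdot Z_T]$, where $Z$ denotes the first $d$ coordinates of $\bar Z$ and $\Q:=\bar\Q|_\Omega$, with attainment when the price is finite. It then remains to identify the dual set. Given $(\bar Z,\bar\Q)\in\bar\cS$, restricting the local-supermartingale property to strategies supported on the first $d$ coordinates shows $(Z,\Q)\in\cS$; restricting it to the option directions (which carry the unconstrained-from-below cone $\R^m_+$) shows that each option component $\zeta^k$ of $\bar Z$ is a nonnegative $\bar\Q$-local-supermartingale, hence a genuine supermartingale. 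The interior conditions $\bar Z_0\in\inter(\bar K_0^*)$ and $\bar Z_T\in\inter(\bar K_T^*)$ then yield, after normalizing $Z^d\equiv1$, the strict estimates $\E_{\bar\Q}[\phi_k\cdot Z_T]\le\E_{\bar\Q}[\zeta^k_T]\le\zeta^k_0<p_k$ for every $k$, which unpack to $\E_\Q[\varphi_j\cdot Z_T]\in(b_j,a_j)$ for all $j$; hence $(Z,\Q)\in\cS_\Phi$ and $\E_{\bar\Q}[G\cdot Z_T]=\E_\Q[G\cdot Z_T]$. This gives $\sup_{\bar\cS}\E_{\bar\Q}[G\cdot Z_T]\le\sup_{\cS_\Phi}\E_\Q[G\cdot Z_T]$, and since the reverse bound $\sh_{\K,\Phi}(G)\ge\sup_{\cS_\Phi}\E_\Q[G\cdot Z_T]$ is the elementary one (a superhedge dominates, and a system in $\cS_\Phi$ prices every $\varphi_j$ strictly within its bid-ask spread), the chain
\[
\sh_{\K,\Phi}(G)=\sup_{\bar\cS}\E_{\bar\Q}[G\cdot Z_T]\le\sup_{\cS_\Phi}\E_\Q[G\cdot Z_T]\le\sh_{\K,\Phi}(G)
\]
collapses. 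I would stress two structural points: the open intervals in the definition of $\cS_\Phi$ emerge automatically from the efficient-friction interiors of $\bar K_0^*$ and $\bar K_T^*$, while the sole role of the second clause of $\NAPo$ is to guarantee $\NAP$ for the extended market, which is exactly what makes Theorem~\ref{thm:main} applicable there. Finally, when $\sh_{\K,\Phi}(G)<\infty$ the price $\sh_{\bar\K}([G;0])$ is finite as well, so Theorem~\ref{thm:main} provides an optimal $\bar\eta\in\cH^{\bar K}$, and the construction used in the $\ge$ part of the primal identity transfers it to an optimal $(\eta,\alpha)$ for the semi-static problem.
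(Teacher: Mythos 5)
Your reduction to Theorem \ref{thm:main} via an enlarged asset space is a genuinely different route from the paper (which instead enlarges the \emph{probability} space by randomizing the intermediate option quotes over all of $\R^m$ and proves in Lemma \ref{lem:options} that dynamic trading in the option coordinates collapses to static trading). Unfortunately, the reduction has a structural defect at the terminal date. The framework of Theorem \ref{thm:main} forces $\R^{d+m}_+\subset \bar K_T$ (free disposal), so with $\bar K_T=(K_T\times\R^m_+)+\cone\{e_{d+k}-[\phi_k;0]\}$ a terminal position $[\xi;\alpha]$ is solvent as soon as $\xi+\sum_k\gamma_k\phi_k\in K_T$ for \emph{some} $0\le\gamma\le\alpha$, chosen $\omega$ by $\omega$. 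In other words, your embedding turns each static position into an optionally exercisable claim: the agent may discard, scenario by scenario, those units whose payoff $\phi_k$ is harmful. The semi-static problem \eqref{eq:defSHsemi} instead imposes mandatory settlement of the full $\Phi\alpha$ (and since the relabelled $\phi_k$ include $-\varphi_j$, "discarding'' amounts to shorting at the bid and defaulting in unfavourable scenarios). Hence only $\sh_{\bar\K}([G;0])\le\sh_{\K,\Phi}(G)$ holds, possibly strictly, and the hard direction of your primal identity fails. The same defect is visible on the dual side: $\bar K_T^*$ forces $\zeta^k_T\ge (Z_T\cdot\phi_k)^+$, so the supermartingale bound gives $\E_\Q[(\phi_k\cdot Z_T)^+]<p_k$, a strictly stronger requirement than membership in $\cS_\Phi$; thus $\sup_{\bar\cS}$ can be strictly below $\sup_{\cS_\Phi}$, and your chain does not close. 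A secondary mismatch of the same flavour: in the embedded market the constraint $\bar C_T=C_T\times\R^m_+$ is tested on the \emph{post-conversion} position $\eta_T+\sum\gamma_k\phi_k-\langle p,\beta\rangle\unit$, whereas the semi-static problem requires the pre-conversion dynamic part $\eta_T$ itself to lie in $C_T$.

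This is why the paper does not encode the options in the solvency cones: it keeps the physical market as is, appends $m$ frictionless coordinates to $\hat S$ with $\bar S^k_0=p_k$ and $\bar S^k_T=\phi_k\cdot\hat S_T$ (so settlement is linear and mandatory by construction), lets the intermediate quotes range over all of $\R^m$ via $\bar\cP_t=\hat\cP_t\otimes\mathfrak{P}(\R^m)$, and shows in Lemma \ref{lem:options} that any superhedge must hold the option coordinates constant, after which the multi-period frictionless duality applies and the dual constraint unpacks exactly to $\E_\Q[\varphi_j\cdot Z_T]\in(b_j,a_j)$. If you want to salvage your approach you would have to leave the paper's cone framework (dropping $\R^{d+m}_+\subset\bar K_T$ in the option coordinates to forbid disposal of a long position), at which point Theorem \ref{thm:main} is no longer applicable as a black box. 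There are also smaller loose ends in your write-up --- closedness and Borel measurability of $\bar K_0$ and $\bar K_T$ as sums with the generated cones, the measurable selection of the decompositions of $\bar k_0$ and $\bar k_T$, and the reordering of coordinates so that the original num\'eraire remains the last one --- but these are secondary to the exercise-optionality issue above.
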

As in Section \ref{sec:Random}, we construct an extended space where only dynamic trading in a frictionless asset $\bar{S}$ is allowed.
We set $\bar\Omega_1=\hat{\Omega}_1\times \R^m$ and $\bar{\Omega}=\bar{\Omega}_T$ with $\bar{\Omega}_t$ the $(t+1)$-fold product of $\bar{\Omega}_1$.
We define $\bar{S}_t:\bar\Omega_t\to \R^d\times\R^m$  such that
\begin{itemize}
	\item On the first $d$ components:  $\bar{S}_t(\omega,\theta,x)=\hat{S}_t(\omega,\theta)$;
	\item On the last $m$ components:  $\bar{S}_t(\omega,\theta,x)=x$ for $1\le t\le T-1$, and
	\[
	\bar{S}^k_0=p_k,\quad \bar{S}^k_T(\omega,\theta,x)=\phi_k(\omega)\cdot \hat{S}_T(\omega,\theta),\quad k=d+1,\ldots, m.
	\]
\end{itemize}
The set of priors $\bar{\cP}$ is obtained from the collection $\bar\cP_t:=\hat{\cP}_t\otimes\mathfrak{P}(\R^m)$.
The set of constraints in the frictionless market is obtained as $C\times\R^m_+$.
The set of randomized and consistent strategies in the frictionless market are defined as before and denoted here as $\bar\cH^r$ and $\bar{\cH}$.
Similarly for the corresponding superhedging prices $\bar{\pi}^r$ and $\bar{\pi}$.
We here define the semi-static consistent superhedging price as
\begin{eqnarray*}
	\bar{\pi}_{\Phi}(g):=\inf\big\{y\in\R&:&\exists H\in\bar{\cH}\,\ \text{ such that }y+(H\circ\bar{S})_T\ge g\ \bar{\cP}\text{-q.s. and}\\
	&& H^k_t=H^k_1\text{ for any } k=d+1,\ldots,m,\  t=1,\ldots,T \big\}.
\end{eqnarray*}
We only need to show the following.
\begin{lemma}\label{lem:options}
	$\bar{\pi}(G\cdot\hat{S}_T)=\bar{\pi}_{\Phi}(G\cdot\hat{S}_T)$.
\end{lemma}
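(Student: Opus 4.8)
The inequality $\bar{\pi}(G\cdot\hat{S}_T)\le\bar{\pi}_{\Phi}(G\cdot\hat{S}_T)$ is immediate: the strategies admitted in the definition of $\bar{\pi}_{\Phi}$ form a subclass of $\bar{\cH}$, so the infimum defining $\bar{\pi}$ is taken over a larger set. The content of the lemma is the reverse inequality, and the plan is to show that the extra requirement appearing in $\bar{\pi}_{\Phi}$ is in fact vacuous: \emph{every} consistent strategy in the extended market automatically has frozen positions in the option coordinates at the dates $1,\dots,T$, so the two infima range over the same class of strategies and the two prices coincide.

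To carry this out, I would first record how the extended market inherits the structure of Section \ref{sec:Random}. By construction $\bar{\cP}_t=\hat{\cP}_t\otimes\mathfrak{P}(\R^m)$ leaves the fictitious option prices unconstrained, so --- using the counterpart of Corollary \ref{cor:supports} for $\bar{S}$ --- the set over which the consistent self-financing condition takes its worst case has, for $1\le t\le T-1$, its last $m$ coordinates equal to all of $\R^m$; more precisely this set contains $\Theta_t\times\R^m$, and $\Theta_t\neq\emptyset$ $\cP$-q.s.\ by Proposition \ref{prop:nonempty}. At $t=0$ nothing is needed, since the option coordinates of $\bar{S}_0$ are the constants $p_k$.

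With this in hand, I would invoke the self-financing condition for a consistent strategy $H\in\bar{\cH}$, namely the barred form of \eqref{eq:SF}: for every $1\le t\le T-1$,
\[
\sup_{(\theta,x)\in\Theta_t\times\R^m}(H_{t+1}-H_t)\cdot\bar{S}_t(\cdot,\theta,x)=0\qquad\cP\text{-q.s.}
\]
On the option coordinates $k=d+1,\dots,m$ one has $\bar{S}^k_t(\cdot,\theta,x)=x^k$, so the term $\sum_{k=d+1}^{m}(H^k_{t+1}-H^k_t)\,x^k$ is unbounded above as $x$ ranges over $\R^m$ unless $H^k_{t+1}=H^k_t$ for all such $k$; since the displayed supremum equals $0$ (hence is finite), this forces $H^k_{t+1}=H^k_t$ $\cP$-q.s.\ for every $k=d+1,\dots,m$ and every $1\le t\le T-1$, and therefore $H^k_1=H^k_2=\cdots=H^k_T$. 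Thus $H$ already satisfies the additional constraint in the definition of $\bar{\pi}_{\Phi}$, which yields $\bar{\pi}_{\Phi}(G\cdot\hat{S}_T)\le\bar{\pi}(G\cdot\hat{S}_T)$ and completes the proof.

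The step that requires genuine care is the first one: one must verify that the construction of $(\bar{S},\bar{\cP})$ --- with the terminal prescription $\bar{S}^k_T=\phi_k\cdot\hat{S}_T$ and the constraints $C\times\R^m_+$ --- really does leave the option directions unconstrained at every intermediate date, so that the supremum in the self-financing condition genuinely runs over all of $\R^m$ in those coordinates and the above dichotomy applies. Once this is settled the lemma is purely structural: in contrast with Lemma \ref{lem: eqSH}, no modification of superhedging strategies is needed.
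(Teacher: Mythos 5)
Your proof is correct, and it reaches the conclusion by a genuinely different mechanism than the paper's. You show that the extra requirement in $\bar{\pi}_{\Phi}$ is vacuous: for a consistent strategy $\bar H=(H,h)$, the self-financing condition at each intermediate date takes a worst case over $\Theta_t\times\R^m$ (since $\bar\cP_t=\hat\cP_t\otimes\mathfrak{P}(\R^m)$ leaves the fictitious option prices fully unconstrained), and the linear term $\sum_k(h^k_{t+1}-h^k_t)x^k$ is unbounded above unless $h_{t+1}=h_t$; hence every element of $\bar\cH$ already has frozen option positions. The paper instead argues at the level of the \emph{superhedging} inequality: letting $\bar t$ be the first date at which the option positions move, it telescopes the option part of the gains process, selects a direction $\xi$ on which the nonzero rebalancing coefficient is strictly negative, and exhibits explicit measures $\bar\P^{\lambda}\in\bar\cP$ concentrating the fictitious option price on $\lambda\xi$, under which the trading gain tends to $-\infty$ while $G\cdot\hat S_T$ is unchanged, so no such strategy can superhedge. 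Your route is shorter, but it leans on the precise form of the self-financing condition in the extended market, which the paper only specifies ``as before''; moreover, in Lemma \ref{almost attainment} and the proof of Lemma \ref{lem: eqSH} the paper deliberately entertains consistent strategies for which the worst-case supremum is $+\infty$ and eliminates them by showing their superhedging cost is infinite, and its proof here is written in that same robust style. So the step you flag as the delicate one really is the load-bearing one: once one accepts that the option coordinates of the worst-case set are all of $\R^m$ at dates $1,\dots,T-1$ and that consistent strategies are real-valued (so the supremum must be finite), your argument goes through and is arguably cleaner; the paper's measure-based argument buys insensitivity to exactly that reading of the definition.
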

\begin{proof}
	The inequality $(\le)$ is clear as any strategy for the right hand side is also allowed for the left hand side.
	For the inequality $(\ge)$, suppose that $(y,\bar{H})\in \R\times\bar{\cH}$ is a superhedge for $G\cdot\hat{S}_T$.
	Let $\bar{H}=(H,h)$, where $H$ is the vector of the first $d$ components and $h$ is the vector of the last $m$ components.
	Recall that $\bar{H}\in\bar{\cH}$ is consistent, i.e, it only depends on the $\omega$ variable.
	Let $A_t:=\cup_{k=d+1}^m\{\omega\in\Omega_t: \bar{H}^k_{t+1}\neq \bar{H}^k_1 \}$ and let $\bar{t}$ be the first time $0\le t\le T-1$ such that $\P(A_t)>0$ for some $\P\in\cP$.
	The superhedging property reads as,
	\begin{eqnarray*}
		y+(\bar{H}\circ \bar{S})_T&=&y+(H\circ \hat{S})_T+\sum_{t=1}^{T-1}\sum_{k=d+1}^m h^k_{t+1}(\bar{S}^k_{t+1}-\bar{S}^k_t)\\
		&=&y+(H\circ \hat{S})_T+\sum_{k=d+1}^m h^k_1 (\bar{S}^k_{\bar{t}}-\bar{S}^k_0)+ \sum_{t=\bar{t}}^{T-1}\sum_{k=d+1}^m h^k_{t+1}(\bar{S}^k_{t+1}-\bar{S}^k_t)\\
		&\ge& G\cdot \hat{S}_T.\quad \bar{\cP}\text{-q.s.}
	\end{eqnarray*}
	Take now $\xi$ a measurable selector of $\{x\in\R^m: x\cdot h_{\bar{t}+1}<0\}$.
	This  exists from \cite[Lemma 12-13]{BayraktarZhangMOR} as the random set corresponds to the interior of the polar cone of $h_{{\bar t}+1}$.
	Since $\P$ is fixed we might take a Borel-measurable version of $\xi$.
	Let $(P_t)_{t=0,\ldots,T-1}$ be the kernel decomposition of $\P$, extended arbitrarily to $t=-1$.
	Fix $x\in\R^m$ and $\delta_{\theta_t}$ an arbitrary selector of $\delta_{\Theta_t}$ from Corollary \ref{cor:deltas}, for any $t\in\cI$.
	For any $\lambda>0$,  define the probability kernels \[\bar P_{\bar{t}}:=P_{\bar{t}}\otimes\delta_{\theta_{\bar{t}+1}}\otimes\delta_{\lambda\xi},\quad\text{and}\quad P_{t}:=P_{t}\otimes\delta_{\theta_{t+1}}\otimes\delta_{x}\ \text{ for } t\neq\bar{t}.\]
	The measures $\bar{\P}^\lambda$ constructed via Fubini's Theorem belong to $\bar{\cP}$.
	Since $\lambda$ is arbitrary and $G\cdot\hat{S}_T$ depends only on the variable $(\omega,\theta)$, we deduce that $(y,\bar{H})$ cannot be a superhedge.
	 \end{proof}
\begin{proof}[proof of Theorem \ref{thm:options}]
	From Lemma \ref{lem:options} and as in the proof of Theorem \ref{thm:main}, we have
	\[
	\sh_{\K,\Phi}(G)=\bar{\pi}_{\Phi}(G\cdot\hat{S}_T)=\bar{\pi}(G\cdot\hat{S}_T)=\bar{\pi}^r(G\cdot\hat{S}_T)=\sup_{\bar{\Q}\in\bar{\cQ}}\E_{\bar{\Q}}[G\cdot \hat{S}_T],
	\]
	where the set $\bar{\cQ}$ is the analogous, for $\bar{S}$, of the set $\hat{\cQ}$.
	To conclude note that, for any $\bar{\Q}\in\bar{\cQ}$, $\E_{\bar\Q}[\phi_k\cdot\hat{S}_T]< p_k$ for any $k=d+1,\ldots, m$.
	In particular, this implies $\E_{\bar\Q}[\varphi_j\cdot \hat{S}_T]\in(b_j,a_j)$ for any $j=1,\ldots e$ and, together with Proposition \ref{prop:eqDual}, the thesis follows.
	 \end{proof}

\appendix\normalsize
\section{Technical Results}
The following is a simple Lemma for convex sets in $\R^k$.
For $j=1,\ldots, k$, let $e_j$ be the $j$-th element of the canonical basis, $A_{j}:=\bigcup_{n\in\N} A+\frac{1}{n} e_j$ and $A_{-j}:=\bigcup_{n\in\N} A-\frac{1}{n} e_j$.
\begin{lemma}\label{lem:interConvex}
	Let $A$ be a convex set in $\R^k$ with $\inter(A)\neq\emptyset$.
	\[
	\bigcap_{j=1}^{k}\left(A_j\cap A_{-j}\right)=\inter(A).
	\]
\end{lemma}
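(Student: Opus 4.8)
The plan is to establish the identity by proving the two set inclusions separately. First I would record the trivial reformulation of the definitions: $x\in A_j$ if and only if $x-\tfrac1n e_j\in A$ for some $n\in\N$, and $x\in A_{-j}$ if and only if $x+\tfrac1m e_j\in A$ for some $m\in\N$. Consequently $x\in\bigcap_{j=1}^k(A_j\cap A_{-j})$ if and only if for every $j$ there are positive scalars $s_j,t_j$ with $x+s_je_j\in A$ and $x-t_je_j\in A$.

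The inclusion $\inter(A)\subseteq\bigcap_{j=1}^k(A_j\cap A_{-j})$ is the easy direction: given $x\in\inter(A)$, choose $r>0$ with the open ball $B(x,r)\subseteq A$, and note that $x\pm\tfrac1n e_j\in A$ as soon as $1/n<r$, so $x\in A_j\cap A_{-j}$ for every $j$.

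For the converse inclusion I would argue as follows. Fix $x$ in the intersection and, for each $j$, pick $s_j,t_j>0$ with $x+s_je_j\in A$ and $x-t_je_j\in A$. After translating by $-x$ we may assume $x=0$, so the finite set $V:=\{s_je_j,-t_je_j:\ j=1,\dots,k\}$ lies in $A$, whence $\conv(V)\subseteq A$ by convexity. The key step is to show $0\in\inter(\conv(V))$. Since $\conv(V)$ is a compact convex set, if $0$ were not in its interior there would be, by the separating/supporting hyperplane theorem, a vector $u\neq 0$ with $u\cdot v\le 0$ for all $v\in V$; but picking a coordinate $i$ with $u_i\neq 0$ and taking $v=s_ie_i$ if $u_i>0$, or $v=-t_ie_i$ if $u_i<0$, produces $u\cdot v>0$, a contradiction. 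Therefore $0\in\inter(\conv(V))\subseteq\inter(A)$, which gives the remaining inclusion and completes the proof.

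The only step requiring any thought is the interiority claim $0\in\inter(\conv(V))$ — that a point lies in the interior of the convex hull of a finite set whose members positively span $\R^k$ — but this is handled cleanly by the one-line separation argument above, so I do not foresee a genuine obstacle. (Incidentally, the hypothesis $\inter(A)\neq\emptyset$ is not actually used in this argument, since the separation step yields full-dimensionality of $\conv(V)$ for free; it appears to be stated only for definiteness.)
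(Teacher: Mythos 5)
Your proof is correct and rests on the same hyperplane--separation argument in the coordinate directions as the paper's; the only real difference is that you argue directly, building the cross-polytope $\conv(V)\subseteq A$ and showing $x$ lies in its interior, whereas the paper argues contrapositively, separating a point $x\notin\inter(A)$ from $A$ and contradicting $x\in A_j\cap A_{-j}$ for the coordinate $j$ where the separating functional is nonzero. A small bonus of your direct formulation is the (correct) observation that the hypothesis $\inter(A)\neq\emptyset$ is never actually needed.
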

\begin{proof}
	$(\supset)$.
	Let $x\in \inter{A}$. 
	For any $j=1,\ldots,k$, there exists $n_j$ such that $x+\frac{1}{n_j}e_j\in A$.
	Thus, $x\in A_{-j}$.
	In an analogous way we can show that $x\in A_j$ and the claim follows.
	
	$(\subset)$.
	Let $x\notin \inter{A}$.
	By a change of coordinate suppose $x=0$.
	Since $A$ is convex, by the Hyperplane Separation Theorem, there exists $H\in\R^k\setminus\{0\}$ such that $H\cdot y\ge 0$ for any $y\in A$.
	Let $j$ such that $H^j\neq 0$ and suppose that $H^j< 0$ (the case $H^j> 0$ follows analogously).
	From $H^j\cdot\frac{1}{n} e_j<0$ for any $n\in\N$, we have that $x=0\notin A_{-j}$.
	This concludes the proof.
	 \end{proof}
We provide here the proofs of Lemma \ref{lem:analytic_construction} and Proposition \ref{prop:nonempty} of Section \ref{sec:main}.
\begin{lemma}\label{lem:sum_graph}Let $\Omega$ be Polish and $\Phi,\Psi$ random sets in $\R^k$ with analytic graphs. 
	Then, $\Phi+\Psi$ has analytic graph. 
\end{lemma}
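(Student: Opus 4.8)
The plan is to express the graph of $\Phi+\Psi$ as the image of an analytic set under a Borel (indeed continuous) map, and then invoke the standard fact — already used repeatedly in this paper — that the analytic sets are closed under such images, as well as under countable unions and intersections. Concretely, first I would observe that
\[
\gr(\Phi+\Psi)=\{(\omega,z)\in\Omega\times\R^k: \exists\, x,y\in\R^k,\ (\omega,x)\in\gr\Phi,\ (\omega,y)\in\gr\Psi,\ z=x+y\}.
\]
The key step is to realize this set as a projection. Consider the subset
\[
E:=\{(\omega,x,y)\in\Omega\times\R^k\times\R^k: (\omega,x)\in\gr\Phi \text{ and } (\omega,y)\in\gr\Psi\},
\]
which is the intersection of the two sets $\gr\Phi\times\R^k$ and $\{(\omega,x,y):(\omega,y)\in\gr\Psi\}$; each of these is analytic, being the preimage of an analytic set under a continuous coordinate projection, so $E$ is analytic as a finite intersection of analytic sets.

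Next I would apply the continuous map $g:\Omega\times\R^k\times\R^k\to\Omega\times\R^k$ given by $g(\omega,x,y)=(\omega,x+y)$. Then $\gr(\Phi+\Psi)=g(E)$, which is analytic because the continuous (hence Borel-measurable) image of an analytic set is analytic. Note that one must be slightly careful that $\dom(\Phi+\Psi)=\dom\Phi\cap\dom\Psi$, and that on points $\omega$ outside this intersection the fiber of $g(E)$ is correctly empty — this is automatic from the displayed description of $E$, since if $\Phi(\omega)=\emptyset$ or $\Psi(\omega)=\emptyset$ then there is no admissible triple $(\omega,x,y)\in E$. This completes the argument.

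I do not expect a serious obstacle here: the only mild subtlety is bookkeeping with the ambient product spaces and making sure that "analytic subset of a Polish space" is the right notion throughout (the paper works with Polish $\Omega_1$, and $\Omega\times\R^k\times\R^k$ is again Polish, so Souslin-operation/projection stability applies verbatim). The proof is exactly the pattern already invoked in Lemma \ref{lem:construction} and Corollary \ref{cor:deltas}, namely "intersection of analytic sets, then continuous image is analytic," so it should be stated briefly.
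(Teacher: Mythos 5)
Your proof is correct and follows essentially the same strategy as the paper's: realize $\gr(\Phi+\Psi)$ as the Borel (continuous) image of an analytic set built from $\gr\Phi$ and $\gr\Psi$. The only cosmetic difference is that you pull both graphs back to the common space $\Omega\times\R^k\times\R^k$ and intersect there, whereas the paper first forms the image of $\gr\Phi\times\gr\Psi$ on the doubled space $\Omega\times\Omega\times\R^k\times\R^k$, intersects with the (Borel) diagonal $\{\omega=\tilde\omega\}$, and then projects; both are valid instances of the same closure properties of analytic sets.
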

\begin{proof}
	Consider the function $f(\omega,\tilde{\omega},x,y):=(\omega,\tilde{\omega},x+y)$ for $(\tilde{\omega},\omega,x,y)\in\Omega\times\Omega\times \R^{k}\times \R^{k}$.
	$f(\gr(\Phi),\gr(\Psi))$ is the image of analytic sets through a Borel  function and thus, it is analytic.
	Moreover, the set $\{(\omega,\omega)\in\Omega\times\Omega\}$ is a Borel subset of $\Omega\times\Omega$.
	To conclude we observe that $\gr(\Phi+\Psi)$ is the projection on $\Omega\times \R^k$ of the analytic set
	\[
	f(\gr(K^*_t),\gr(C^*_t))\cap \big(\{(\omega,\omega)\in\Omega\times\Omega\}\times \R^k\big).
	\]
	 \end{proof}
\begin{proof}[proof of Lemma \ref{lem:analytic_construction}]
	By assumption $K_t$ is Borel measurable, thus, by \cite[Lemma 13]{BayraktarZhangMOR}, $\gr(K^*_t)$ is analytic.
	The result for $t=T$ follows. 
	We proceed by backward induction. 	
	Recall that $\gr(C^*_t)$ is analytic by assumption and, from the proof of \cite[Lemma 6]{BayraktarZhangMOR}, $\Gamma_t$ has analytic graph.
	From \cite[Lemma 12(b)]{BayraktarZhangMOR} the same is true for $\overline{\conv (\Gamma_t)}$.
	We conclude by using Lemma \ref{lem:sum_graph} and \cite[Lemma 12(d)]{BayraktarZhangMOR}.
	 \end{proof}
\begin{proof}[proof of Proposition \ref{prop:nonempty}]
	Define the market $\K^t:=\{K_0,\ldots, K_{t-1},\tilde{K}_t,\ldots,\tilde{K}_T\}$ for $t\in\cI$.
	We proceed by backward induction.
	For $t=T$, $\K^T=\K$ and the two properties follow by assumption.
	Suppose that the thesis is true for $u=t+1,\ldots T$, we show that it is true for $t$.
	For $\omega\in\Omega_t$, let 
	\[
	\Lambda_t(\omega):=\{x\in\R^d: x\in\tilde{K}_{t+1}(\omega;\cdot)\ \cP_t\text{-q.s.}\}.
	\]
	From \cite[Lemma 7]{BayraktarZhangMOR}, $\Lambda_t=\Gamma^*_t$, implying  $\Lambda^*_t\supset \tilde{K}^*_{t+1}$ $\cP$-q.s. 
	From the inductive hypothesis and Assumption \ref{ass:EF}, $\inter(\Lambda^*_t)\neq\emptyset$ and $\inter(K^*_t-C^*_t)\neq\emptyset$ outside a $\cP$-polar set $N$.
	Note now that for every $\omega\in N^C$ such that 
	\begin{equation}\label{eq:recursive_sep}
	\inter(\Lambda^*_t)(\omega)\cap\inter(K^*_t-C^*_t)(\omega)=\emptyset,
	\end{equation}
	we can find $x\in\R^d\setminus\{0\}$ such that $x\cdot y\le 0$ for any $y\in K^*_t-C^*_t$ and $x\cdot z\ge 0$ for any $z\in \Lambda^*_t$.
	From \cite[Lemma 16]{BayraktarZhangMOR} and since $K_t$ and $C_t$ are closed sets, we deduce that $x\in -K_t\cap C_t$ and $x\in \Lambda_t$.
	Let $\eta$ be a $\cF^u_t$-measurable selector of $\{ ( -K_t\cap C_t\cap\Lambda_t) \setminus\{0\}\}$ (for its existence, see \cite[Proposition 4]{BayraktarZhangMOR}).
	Note that $\eta\in A_{t+1}(\K^{t+1})$ and, by definition of $\Lambda_t$, $\eta\in \tilde{K}_{t+1}$ $\cP$-q.s.
	The strict no arbitrage condition implies that  $\eta=0$ $\cP$-q.s. and, as a consequence, the set of $\omega\in N^C$ such that \eqref{eq:recursive_sep} is satisfied is $\cP$-polar. 
	On the complementary set, we clearly have $\inter(\tilde{K}^*_t)\neq\emptyset$.
	We are only left to show that $\NAP$ holds for $\K^t$.
	Let $\eta:=(\eta_0,\ldots,\eta_T)$ and $r\ge t$ such that $\eta_r\in A_r(\K^t)\cap\cL^0(\cF^u_r;\tilde{K}_r)$ (the case $r\le t-1$ is trivial).
	By admissibility,
	\[\eta_r=-k_0-\ldots-k_{t-1}-\tilde{k}_t-\ldots -\tilde{k}_r,
	\]
	with $ k_s\in K_s$ for $s=0,\ldots t-1$ and $\tilde{k}_s\in \tilde{K}_s$ for $s=t,\ldots r$.
	Note that, for $t\le s\le T$, $\inter(\tilde{K}^*_s)\neq\emptyset$ $\cP$-q.s., therefore,
	\[ \tilde{K}_s=\big(K^*_s\cap (\overline{\conv(\Gamma_s)}+C^*_s)\big)^*=K_s+(\Lambda_s\cap C_s),
	\]
	where the first equality follows by \eqref{def:tildeK} and the second equality follows from \cite[Lemma 16]{BayraktarZhangMOR}.
	We start with the case $r\ge t+1$. From \cite[Lemma 8]{BayraktarZhangMOR}, we have $\tilde{k}_t=k_t+\lambda_{t}$ for some $k_t\in\cL^0(\cF^u_t;K_{t})$ and $\lambda_{t}\in\cL^0(\cF^u_{t};\tilde{K}_{t+1}\cap C_t)$.
	We can therefore rewrite 
	\[\eta_r=-k_0-\ldots-k_{t-1}-(k_t+\lambda_{t})-\tilde{k}_{t+1}-\ldots -\tilde{k}_r,
	\]
	Define the new strategy $\tilde{\eta}$ with $\tilde{\eta}_s=\eta_s$ for any $s\neq t$ and $\tilde{\eta}_t:=\eta_{t-1}-k_t=\eta_t+\lambda_{t}$.
	Since $\eta_t,\lambda_{t+1}\in\cL^0(\cF^u_{t};C_t)$ and $C_t$ is a convex cone, the sum takes also values in $C_t$, from which $\tilde{\eta}_t\in A_t(\K^{t+1})$.
	In particular, $\tilde{\eta}$ is admissible for the market $\K^{t+1}$ and satisfies $\tilde{\eta}_r\in A_r(\K^{t+1})\cap\cL^0(\cF^u_r;\tilde{K}_r)$ From the inductive hypothesis of strict no arbitrage, it follows that $\tilde{\eta}_r=\eta_r=0$ $\cP$-q.s.
	
	For the case $r=t$, by assumption $\eta_t\in\cL^0(\cF^u_r;\tilde{K}_t)$.
	Similarly as above, we can rewrite $\eta_t=\xi_t+\lambda_{t}$ for some $\xi_t\in\cL^0(\cF^u_t;K_{t})$ and $\lambda_{t}\in\cL^0(\cF^u_{t};\tilde{K}_{t+1}\cap C_t)$ and, hence, $\lambda_{t}=-k_0-\ldots-k_{t-1}-\xi_t$. This implies \[
	\lambda_{t}\in A_t(\K^{t+1})\cap \cL^0(\cF^u_t;\tilde{K}_{t+1})\subset A_{t+1}(\K^{t+1})\cap \cL^0(\cF^u_{t+1};\tilde{K}_{t+1}),
	\]
	where the inclusion follow from Assumption \ref{ass:EF}.
	The strict no arbitrage condition implies $\lambda_{t}=0$ $\cP$-q.s. and therefore $\eta_t=\xi_t$ $\cP$-q.s. Thus, $\eta_t\in \cA(\K^{t+1})\cap\cL^0(\cF^u_t;K_{t})$. Using again the strict no arbitrage condition, it follows $\eta_t=0$. 
	 \end{proof}

\bibliographystyle{abbrvnat}
\bibliography{Robust_bib}

\begin{thebibliography}{25}
\providecommand{\natexlab}[1]{#1}
\providecommand{\url}[1]{\texttt{#1}}
\expandafter\ifx\csname urlstyle\endcsname\relax
  \providecommand{\doi}[1]{doi: #1}\else
  \providecommand{\doi}{doi: \begingroup \urlstyle{rm}\Url}\fi

\bibitem[Acciaio et~al.(2016)Acciaio, Beiglb{\"o}ck, Penkner, and
  Schachermayer]{AB13}
B.~Acciaio, M.~Beiglb{\"o}ck, F.~Penkner, and W.~Schachermayer.
\newblock A model-free version of the fundamental theorem of asset pricing and
  the super-replication theorem.
\newblock \emph{Math. Fin.}, 26\penalty0 (2):\penalty0 233--251, 2016.
\newblock ISSN 0960-1627.
\newblock \doi{10.1111/mafi.12060}.
\newblock URL \url{http://dx.doi.org/10.1111/mafi.12060}.

\bibitem[Aksamit et~al.(2019)Aksamit, Deng, Obłój, and Tan]{ADOT16}
A.~Aksamit, S.~Deng, J.~Obłój, and X.~Tan.
\newblock The robust pricing–hedging duality for american options in discrete
  time financial markets.
\newblock \emph{Mathematical Finance}, 29\penalty0 (3):\penalty0 861--897,
  2019.
\newblock \doi{10.1111/mafi.12199}.
\newblock URL \url{https://onlinelibrary.wiley.com/doi/abs/10.1111/mafi.12199}.

\bibitem[Aliprantis and Border(2006)]{aliprantis}
C.~D. Aliprantis and K.~C. Border.
\newblock \emph{Infinite Dimensional Analysis: a Hitchhiker's Guide}.
\newblock Springer, Berlin; London, 2006.
\newblock ISBN 9783540326960 3540326960.
\newblock \doi{10.1007/3-540-29587-9}.

\bibitem[Avellaneda et~al.(1995)Avellaneda, Levy, and Paras]{Av95}
M.~Avellaneda, A.~Levy, and A.~Paras.
\newblock Pricing and hedging derivative securities in markets with uncertain
  volatilities.
\newblock \emph{Applied Mathematical Finance}, 2\penalty0 (2):\penalty0 73--88,
  1995.
\newblock \doi{10.1080/13504869500000005}.
\newblock URL \url{https://doi.org/10.1080/13504869500000005}.

\bibitem[Bartl et~al.(2017)Bartl, Cheridito, Kupper, and Tangpi]{BCKT17}
D.~Bartl, P.~Cheridito, M.~Kupper, and L.~Tangpi.
\newblock Duality for increasing convex functionals with countably many
  marginal constraints.
\newblock \emph{Banach J. Math. Anal.}, 11\penalty0 (1):\penalty0 72--89, 01
  2017.
\newblock \doi{10.1215/17358787-3750133}.
\newblock URL \url{https://doi.org/10.1215/17358787-3750133}.

\bibitem[Bayraktar and Munk(2017)]{BM2017}
E.~Bayraktar and A.~Munk.
\newblock High-roller impact: A large generalized game model of parimutuel
  wagering.
\newblock \emph{Market Microstructure and Liquidity}, 03\penalty0
  (01):\penalty0 1750006, 2017.
\newblock \doi{10.1142/S238262661750006X}.
\newblock URL \url{https://doi.org/10.1142/S238262661750006X}.

\bibitem[Bayraktar and Zhang(2016)]{BayraktarZhangMOR}
E.~Bayraktar and Y.~Zhang.
\newblock Fundamental theorem of asset pricing under transaction costs and
  model uncertainty.
\newblock \emph{Mathematics of Operations Research}, 41\penalty0 (3):\penalty0
  1039--1054, 2016.
\newblock \doi{10.1287/moor.2015.0767}.
\newblock URL \url{http://dx.doi.org/10.1287/moor.2015.0767}.

\bibitem[Bayraktar and Zhou(2017)]{BZ17}
E.~Bayraktar and Z.~Zhou.
\newblock On arbitrage and duality under model uncertainty and portfolio
  constraints.
\newblock \emph{Mathematical Finance}, 27\penalty0 (4):\penalty0 988--1012,
  2017.
\newblock \doi{10.1111/mafi.12104}.
\newblock URL \url{https://onlinelibrary.wiley.com/doi/abs/10.1111/mafi.12104}.

\bibitem[Bayraktar and Zhou(2019)]{BZ18}
E.~Bayraktar and Z.~Zhou.
\newblock No-arbitrage and hedging with liquid american options.
\newblock \emph{Mathematics of Operations Research}, 44\penalty0 (2):\penalty0
  468--486, 2019.
\newblock \doi{10.1287/moor.2018.0932}.
\newblock URL \url{https://doi.org/10.1287/moor.2018.0932}.

\bibitem[Bertsekas and Shreve(1978)]{BS78}
D.~P. Bertsekas and S.~E. Shreve.
\newblock \emph{Stochastic optimal control : the discrete time case}.
\newblock Academic Press New York, 1978.
\newblock ISBN 0120932601.

\bibitem[Bouchard and Nutz(2015)]{BN13}
B.~Bouchard and M.~Nutz.
\newblock Arbitrage and duality in nondominated discrete-time models.
\newblock \emph{Ann. Appl. Probab.}, 25\penalty0 (2):\penalty0 823--859, 04
  2015.
\newblock \doi{10.1214/14-AAP1011}.
\newblock URL \url{https://doi.org/10.1214/14-AAP1011}.

\bibitem[Bouchard and Nutz(2016)]{BN16}
B.~Bouchard and M.~Nutz.
\newblock Consistent price systems under model uncertainty.
\newblock \emph{Fin. Stoch.}, 20\penalty0 (1):\penalty0 83--98, 2016.

\bibitem[Bouchard et~al.(2019)Bouchard, Deng, and Tan]{BDT17}
B.~Bouchard, S.~Deng, and X.~Tan.
\newblock Superreplication with proportional transaction cost under model
  uncertainty.
\newblock \emph{Mathematical Finance}, 29\penalty0 (3):\penalty0 837--860,
  2019.
\newblock \doi{10.1111/mafi.12197}.
\newblock URL \url{https://onlinelibrary.wiley.com/doi/abs/10.1111/mafi.12197}.

\bibitem[Burzoni(2016)]{Bu16}
M.~Burzoni.
\newblock Arbitrage and hedging in model-independent markets with frictions.
\newblock \emph{SIAM Journal on Financial Mathematics}, 7\penalty0
  (1):\penalty0 812--844, 2016.
\newblock \doi{10.1137/15M1053013}.
\newblock URL \url{https://doi.org/10.1137/15M1053013}.

\bibitem[Burzoni and Šikić(2019)]{BS18}
M.~Burzoni and M.~Šikić.
\newblock Robust martingale selection problem and its connections to the
  no-arbitrage theory.
\newblock \emph{Mathematical Finance}, 0\penalty0 (0), 2019.
\newblock \doi{10.1111/mafi.12225}.
\newblock URL \url{https://onlinelibrary.wiley.com/doi/abs/10.1111/mafi.12225}.

\bibitem[Burzoni et~al.(2019)Burzoni, Frittelli, Hou, Maggis, and
  Ob{\l}{\'o}j]{BFHMO}
M.~Burzoni, M.~Frittelli, Z.~Hou, M.~Maggis, and J.~Ob{\l}{\'o}j.
\newblock Pointwise arbitrage pricing theory in discrete time.
\newblock \emph{Mathematics of Operations Research}, 44\penalty0 (3):\penalty0
  1034--1057, 2019.
\newblock \doi{10.1287/moor.2018.0956}.
\newblock URL \url{https://doi.org/10.1287/moor.2018.0956}.

\bibitem[Cheridito et~al.(2017)Cheridito, Kupper, and Tangpi]{CK16}
P.~Cheridito, M.~Kupper, and L.~Tangpi.
\newblock Duality formulas for robust pricing and hedging in discrete time.
\newblock \emph{SIAM Journal on Financial Mathematics}, 8\penalty0
  (1):\penalty0 738--765, 2017.
\newblock \doi{10.1137/16M1064088}.
\newblock URL \url{https://doi.org/10.1137/16M1064088}.

\bibitem[Deng et~al.(2018)Deng, Tan, and Yang]{BTY18}
S.~Deng, X.~Tan, and X.~Yang.
\newblock Utility maximization with proportional transaction costs under model
  uncertainty.
\newblock arXiv: 1805.06498, 2018.

\bibitem[Dolinsky and Soner(2014)]{DS14a}
Y.~Dolinsky and H.~M. Soner.
\newblock Robust hedging with proportional transaction costs.
\newblock \emph{Fin. Stoch.}, 18\penalty0 (2):\penalty0 327--347, Apr 2014.
\newblock ISSN 1432-1122.
\newblock \doi{10.1007/s00780-014-0227-x}.
\newblock URL \url{https://doi.org/10.1007/s00780-014-0227-x}.

\bibitem[Kabanov(1999)]{Kab99}
Y.~Kabanov.
\newblock Hedging and liquidation under transaction costs in currency markets.
\newblock \emph{Finance and Stochastics}, 3\penalty0 (2):\penalty0 237--248,
  Feb 1999.
\newblock ISSN 0949-2984.
\newblock \doi{10.1007/s007800050061}.
\newblock URL \url{https://doi.org/10.1007/s007800050061}.

\bibitem[Kabanov et~al.(2003)Kabanov, R{\'a}sonyi, and Stricker]{KRS03}
Y.~Kabanov, M.~R{\'a}sonyi, and C.~Stricker.
\newblock On the closedness of sums of convex cones in $l^0$ and the robust
  no-arbitrage property.
\newblock \emph{Finance and Stochastics}, 7\penalty0 (3):\penalty0 403--411,
  Jul 2003.
\newblock ISSN 0949-2984.
\newblock \doi{10.1007/s007800200089}.
\newblock URL \url{https://doi.org/10.1007/s007800200089}.

\bibitem[Knight(1921)]{Knight:21}
F.~Knight.
\newblock \emph{Risk, Uncertainty and Profit}.
\newblock Boston: Houghton Mifflin, 1921.

\bibitem[K{\"u}hn and Molitor(2019)]{KM19}
C.~K{\"u}hn and A.~Molitor.
\newblock Prospective strict no-arbitrage and the fundamental theorem of asset
  pricing under transaction costs.
\newblock \emph{Finance and Stochastics}, 23\penalty0 (4):\penalty0 1049--1077,
  Oct 2019.
\newblock ISSN 1432-1122.
\newblock \doi{10.1007/s00780-019-00403-5}.
\newblock URL \url{https://doi.org/10.1007/s00780-019-00403-5}.

\bibitem[Rockafellar and Wets(1998)]{R}
R.~T. Rockafellar and R.~J.-B. Wets.
\newblock \emph{Variational analysis}.
\newblock Springer-Verlag, Berlin, 1998.
\newblock ISBN 3-540-62772-3.
\newblock \doi{10.1007/978-3-642-02431-3}.
\newblock URL \url{http://dx.doi.org/10.1007/978-3-642-02431-3}.

\bibitem[Terkelsen(1972)]{T72}
F.~Terkelsen.
\newblock Some minimax theorems.
\newblock \emph{Mathematica Scandinavica}, 31:\penalty0 405--413, Jun. 1972.
\newblock \doi{10.7146/math.scand.a-11441}.
\newblock URL \url{https://www.mscand.dk/article/view/11441}.

\end{thebibliography}
\end{document}